\theoremstyle{definition}
\newtheorem{definition}{Definition}
\theoremstyle{plain}
\newtheorem{theorem}{Theorem}
\newtheorem{lemma}{Lemma}
\newtheorem{corollary}{Corollary}
\newtheorem{example}{Example} 
\theoremstyle{remark}
\newcommand{\N}{\mathbb{N}}
\newcommand{\E}[1]{\mathbb{E} \left[ #1 \right]}
\newcommand{\Var}[1]{\text{Var} \left( #1 \right)}
\newcommand{\prob}[1]{\mathbb{P} \left( #1 \right)}
\newcommand{\bigo}[1]{\mathcal{O} \left( #1 \right)}
\DeclarePairedDelimiter{\norm}{\lVert}{\rVert}
\DeclareMathOperator{\Cov}{Cov}
\title{Communication Cost Reduction for Subgraph Counting \\ under Local Differential Privacy via Hash Functions}
\author{\name Quentin Hillebrand \email quentin-hillebrand@g.ecc.u-tokyo.ac.jp \\
      \addr The University of Tokyo
      \AND
      \name Vorapong Suppakitpaisarn \email vorapong@is.s.u-tokyo.ac.jp \\
      \addr The University of Tokyo
      \AND
      \name Tetsuo Shibuya \email tshibuya@hgc.jp\\
      \addr The University of Tokyo}
\begin{document}

\maketitle

\begin{abstract}
We suggest the use of hash functions to cut down the communication costs when counting triangle and other subgraphs under edge local differential privacy. While various algorithms exist for computing graph statistics --- including the count of subgraphs --- under the edge local differential privacy, many suffer with high communication costs, making them less efficient for large graphs. Though data compression is a typical approach in differential privacy, its application in local differential privacy requires a form of compression that every node can reproduce. In our study, we introduce linear congruence hashing. Leveraging amplification by sub-sampling, with a sampling size of $s$, our method can cut communication costs by a factor of $s^2$, albeit at the cost of increasing variance in the published graph statistic by a factor of $s$. The experimental results indicate that, when matched for communication costs, our method achieves a reduction in the $\ell_2$-error by up to 1000 times for triangle counts and by up to $10^3$ times for 4-cycles counts compared to the performance of leading algorithms.
\end{abstract}

\section{Introduction}
\label{sec:introduction}

\textit{Differential privacy} \citep{dwork2006differential,dwork2014algorithmic} has emerged as a benchmark for protecting user data. To meet this standard, it is necessary to obfuscate the publication results. This can be achieved by adding small noise~\citep{dwork2006calibrating} or altering the publication outcomes with small probability~\citep{mcsherry2007mechanism}.

In differential privacy, it is typically assumed that there is a complete dataset available. After statistical analyses are done, obfuscation is applied to the results. However, there can be leaks of user information during the data collection or storage phases. To address these concerns, a variant of differential privacy, termed \textit{local differential privacy} \citep{cormode2018privacy,evfimievski2003limiting}, has been introduced. Here, rather than starting with a full dataset, each user is prompted to disguise their data before sharing it. As a result, the received dataset is not perfect. Numerous studies, like those referenced in \citet{li2020estimating,asi2022optimal}, have aimed to extract accurate statistical insights from these imperfect datasets. The local differential privacy has been practically employed by several companies to guarantee the privacy of their users' information. Those companies include Apple, Microsoft, and Google \citep{apple,ding2017collecting,erlingsson2014rappor}.

Many studies on local differential privacy focus on tabular datasets, but there is also significant research dedicated to publishing graph statistics \citep{sajadmanesh2021locally,ye2020lf}. For graph-based inputs, such as social networks, the prevalent privacy standard is \textit{edge local differential privacy} \citep{qin2017generating}. In this framework, users are asked to share a disguised version of their adjacency vectors. These vectors are bit sequences which show mutual connections within the network. For instance, in a social network comprising $n$ users, user $v_i$ would provide their adjacency vector, denoted as $a_i = [a_{i,1}, \dots, a_{i,n}] \in \{0,1\}^n$. Here, $a_{i,j} = 1$ implies that users $v_i$ and $v_j$ are connected, while $a_{i,j} = 0$ means there is no connection between them.

One widely adopted method for obfuscation is the \textit{randomized response} \citep{warner_randomized_1965, mangat_improved_1994,wang2016using}. In this approach, users are prompted to invert each bit in their adjacency vector, denoted as $a_{i,j}$, based on a specific probability. However, while the method is straightforward, its application to real-world social networks presents challenges. Typically, in practical social networks, an individual may have at most a few thousand friends, implying that the majority of $a_{i,j}$ values are zeroes. When employing the randomized response technique and flipping each bit of $a_{i,j}$ based on the designated probability, a significant number of zero bits get inverted to one. This phenomenon can distort the resulting graph statistics considerably~\citep{mukherjee2023robustness,mohamed2022differentially}. 

Beyond the publication of full graphs via randomized response, there is a growing interest in subgraph counting queries under local differential privacy. The challenge was first highlighted by \citet{imola2021locally}, who proposed two key algorithms for $k$-star and triangle counting. Building on $k$-star counting, \citet{hillebrand2023unbiased} later introduced an unbiased and refined algorithm. For triangle counting, several notable contributions include \citet{imola2022communication}, which addressed the communication cost, \citet{liu2022collecting, liu2024edge}, which delved into scenarios where users have access to a 2-hop graph view, and \citet{eden2023triangle}, which studies the lower bound of the additive error. For larger subgraphs, \citet{he2024butterfly} gave a local differentially private algorithm to count butterflies on bipartite graphs, \citet{betzerpublishing} gave an algorithm to count the number of walks under the privacy notion, \citet{hillebrand_et_al:LIPIcs.STACS.2025.49} proposed an algorithm to count odd-length cycles on degeneracy-bounded graphs, and \citet{suppakitpaisarn2025counting} devised an algorithm to count any graphlet of size $k$. Moving to the shuffle model, \citet{imola2022differentially} gave algorithms for triangle and 4-cycle counting. To conclude, \citet{dhulipala2022differential} gave an approximation method for identifying the densest subgraph, rooted in locally adjustable algorithms.

Most of the works mentioned in the previous paragraph use the two-step mechanism~\citep{imola2021locally,imola2022communication,imola2022differentially}. In this mechanism, users are required to download the adjacency vectors of all other participants. They then compute graph statistics locally using their genuine adjacency vector and the obfuscated vectors of their peers. Although this technique substantially increases the accuracy for various graph statistics, such as the count of triangles, it comes with the drawback of demanding a vast amount of data download. Each user incurs a communication cost of $\Theta(n^2)$. Given that the user count, $n$, can reach several billion in real-world scenarios, this download demand is not affordable for most.

Instead of using adjacency vectors, users might consider exchanging adjacency lists. This approach involves each user transmitting and receiving only a list of adjacent node pairs. If we assume that a user can have a limited, constant number of friends, then the bit-download requirement for the original graph becomes $\Theta(n \log n)$. However, given the surge in edge count to $\Theta(n^2)$ due to the randomized response, the communication cost associated with downloading the obfuscated adjacency list shoots up to $\Theta(n^2 \log n)$. This means that, post-obfuscation, the adjacency list fails to offer any communication efficiency benefits over the adjacency vector.

There are several techniques proposed to reduce the communication cost for the two-step mechanism such as the asymmetric randomized response \citep{imola2022communication}, the degree-preserving randomized response \citep{hidano2022degree}, the degree-preserving exponential mechanism \citep{adhikari2020two}, or technique based on the compressive sensing \citep{li2011compressive}. However, we strongly believe that the communication cost could be further reduced both in theory and practice. 

\subsection{Our Contributions}

To mitigate the communication overhead, we suggest employing a \textit{compression} technique on the adjacency vector or list prior to its transmission to the central server.

The concept of database compression is not novel within the realm of differential privacy. Numerous mechanisms have been developed to enhance the precision of statistics derived from tabular data through database compression using sampling. One of the most prominent among these is the SmallDB algorithm \citep{blum2013learning}. There have also been initiatives that harness the Kronecker graph to refine the accuracy of published graph statistics \citep{paul2020improving}.

\begin{figure}
    \centering
    \includegraphics[width=0.7\linewidth]{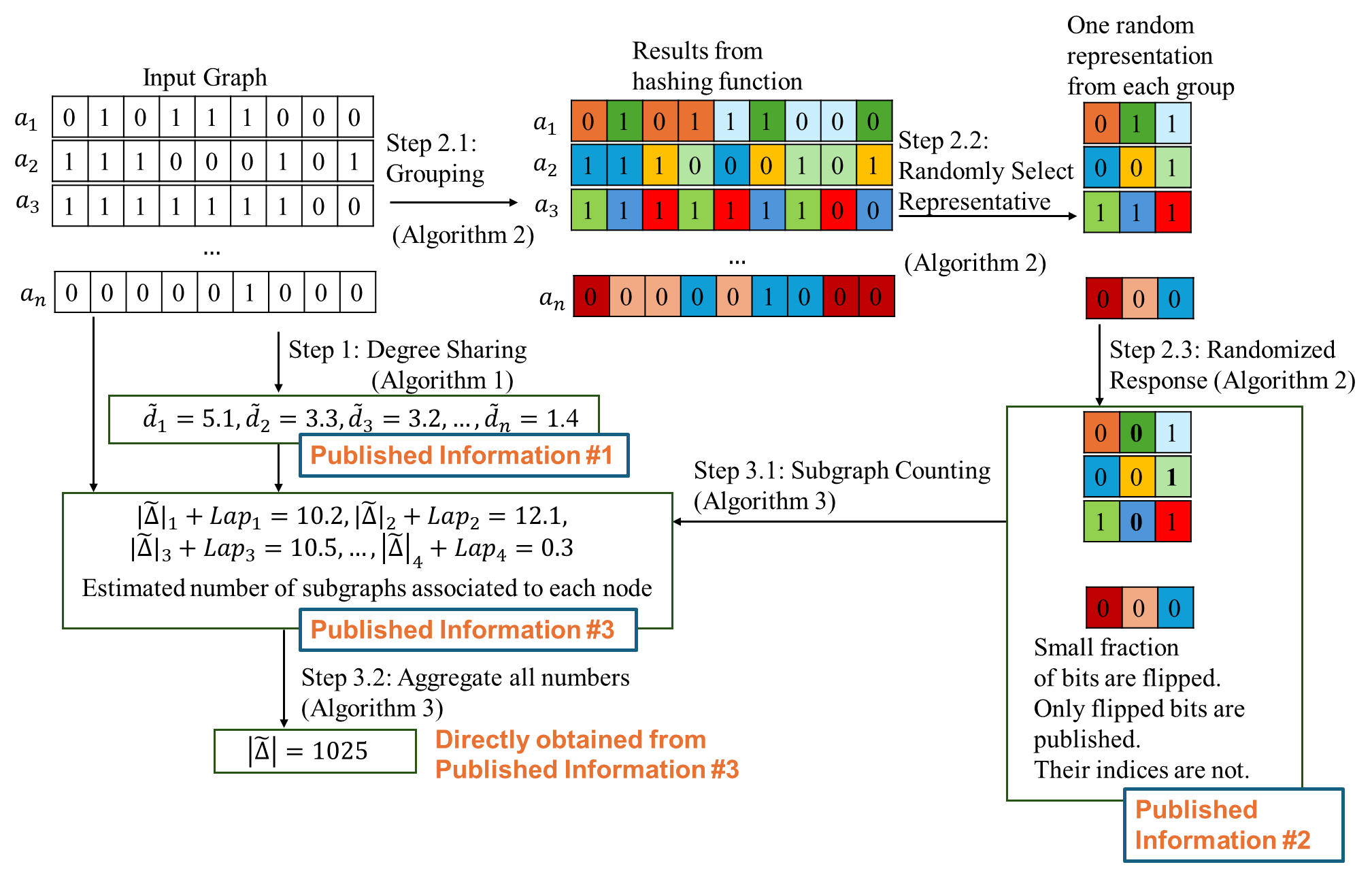}
    \caption{Overview of our mechanism, GroupRR: the information within the green box represents the data published by users. By the amplification-by-sub-sampling theorem, the randomized response in Step~2.3 has a bit-flip probability reduced by approximately a factor of $s$ relative to the standard randomized response. This helps reducing the error we have in our proposed algorithm.}
    \label{fig:illustration}
\end{figure}

On the contrary, implementing database compression for the two-step mechanism under edge local differential privacy poses greater challenges. Imagine we compress the adjacency list $[a_{i,1}, \dots, a_{i,n}]^t$ down to $[a_{i,c_1}, \dots, a_{i,c_m}]^t$, with $m \ll n$, via sampling, and then disseminate this compressed list to all users. When other users utilize this compressed list for local calculations in the second step, they must be able to replicate the sampling outcome. We therefore need a function which can be replicated with small communication cost. This requirement inspired our proposition to utilize \textit{linear congruence hashing} \citep{thomson1958modified,rotenberg1960new} for the sampling process. 

Our proposal is outlined in Figure \ref{fig:illustration}.
Using linear congruence hashing, we can evenly partition the node set $\{v_1, \dots, v_n\}$ into $m$ subsets. Let us denote these subsets as $S_1, \dots, S_m$, each having a size of $s$. The index $c_i$, corresponding to the only value from the group sent to the server, is derived from $S_i$ with a uniform probability of $1/s$. Since all data is transmitted from the users with a probability of $1/s$, we can leverage the theorem on amplification by sub-sampling~\citep{balle2018privacy} to diminish the bit-flipping probability in the randomized response mechanism by a factor of $1/s$. Consequently, the count of non-zero entries in the bit vector $[a_{i,c_1}, \dots, a_{i,c_m}]^t$ is roughly $m / s$, which is approximately equal to $n / s^2$. By employing a sampling size of $s$, the communication overhead can be reduced by a factor of $s^2$.

While our publication's variance might increase by up to a factor of $s$ for certain graph statistics, our experimental results demonstrate that, for a fixed download cost, our algorithm can reduce the $\ell_2$-error in triangle counts by a factor of 1000.

 A valid inquiry might be whether we would obtain a comparable outcome by deterministically choosing the same disjoint sets \(S_1, \dots, S_m\) for every user \(v_i\). While this approach could simplify the algorithm by eliminating the need for hash functions, our findings indicate that deterministic set selection leads to a significant variance in the estimations during the mechanism's second step. Hence, using the linear congruence hashing is vital for our compression.

The main contribution we present in the article is the first purely local differential private mechanism for graph statistics that leverages amplification by sub-sampling.
Additionally, we demonstrate the generality of the mechanism and its efficiency as it performs better than state of the art by several orders of magnitudes for several triangle and other subgraph counting tasks.

\subsection{Related Works}

Hashing functions have been utilized in the domain of differential privacy by \citet{wang2017locally} to develop binary local hashing (BLH) and optimal local hashing (OLH) for frequency estimation under local differential privacy. However, the work focuses on tabular information, while our work focuses on graph information. The proposed algorithm is totally different because of the difference in data structure and privacy definition. Additionally, while the previous method use hash functions to compress user data, they do not support amplification by sub-sampling, which is a feature of our proposed method. Thus, our compression rate is larger than the previous work.


A recent study on communication reduction in local differential privacy \citep{liu2024universal} successfully simulated any randomizer for local differential privacy using only \(\mathcal{O}(\varepsilon)\) communication bits. However, similar to BLH and OLH methods, their approach depends on the characteristics of tabular data and traditional local differential privacy, making it inapplicable to our scenario where edge-local differential privacy is employed.

\section{Preliminaries}
\label{sec:preliminaries}

\subsection{Hash functions}

Hash functions \citep{donald1997art} map keys from a specified key space to a designated hash space. In this paper, unless mentioned differently, the key space is represented by \([|0,n - 1|]\) and the hash space by \([|0,m - 1|]\).



We consider a hashing scheme called linear congruence hashing~\citep{thomson1958modified,rotenberg1960new} in this paper. The scheme can be defined as $\mathcal{H} = \{h_{\mathsf{a},\mathsf{b}} | \mathsf{a} \in [|1,\mathsf{p}-1|], \mathsf{b} \in [|0,\mathsf{p}-1|]\}$ when $\mathsf{p}$ is a prime number greater than $n$ and
$h_{\mathsf{a},\mathsf{b}}(\mathsf{x}) = ((\mathsf{a} \mathsf{x} + \mathsf{b}) \text{ mod } \mathsf{p}) \text{ mod } m$. Notice that, for all $\mathsf{a,b}$ and $\mathsf{k} \in [|0, m - 1|]$, we have $\left\lfloor \frac{n}{m} \right\rfloor \leq |\{\mathsf{x} : h_{\mathsf{a,b}}(\mathsf{x}) = k\}| \leq \left\lceil \frac{n}{m} \right\rceil$.
In other words, the number of key in a given bin is either $\left\lfloor \frac{n}{m} \right\rfloor$ or $\left\lceil \frac{n}{m} \right\rceil$.

\subsection{Edge Local Differential Privacy}

Let the set of users be $\{v_1,\dots, v_n\}$. Each user $v_i$ has own adjacency list $a_i = [a_{i,1}, \dots, a_{i,n}]^t \in \{0,1\}^n$ where $a_{i,j} = 1$ if $\{v_i,v_j\} \in E$ and $a_{i,j} = 0$ otherwise. For any $a_i, a_i' \in \{0,1\}^n$, let $d(a_i, a_i')$ be the $\ell_1$-distance between the two vectors. To define the privacy notion, we first give the following definition:
\begin{definition}[Local differentially private query] 
\label{ref1} Let \(\epsilon > 0\). A randomized query \(\mathcal{R}\) is said to be \(\epsilon\)-edge locally differentially private for node \(i\) if, for any pair of adjacency lists \(a_i\) and \(a'_i\) where $|a_i-a'_i| \leq 1$, and for any possible set of outcomes \(S\), we have that
    $$\Pr[\mathcal{R}(a_i) \in S] \leq e^{\epsilon} \Pr[\mathcal{R}(a_i') \in S].$$
\end{definition}
Then, the definition of the edge local differential privacy is as follows:
\begin{definition}[Edge local differential privacy \cite{qin2017generating}]
\label{dif:diff}
An algorithm \(\mathcal{A}\) is defined as \(\epsilon\)-edge locally differentially private if, for any user $i$ and for any possible set of queries \(\mathcal{R}_1, \dots, \mathcal{R}_\kappa\) which $\mathcal{A}$ posed to user \(i\), where each query \(\mathcal{R}_j\) is \(\epsilon_j\)-edge locally differentially private (for \(1 \leq j \leq \kappa\)), the condition \(\epsilon_1 + \cdots + \epsilon_\kappa \leq \epsilon\) is satisfied.
\end{definition}

In this setting, $\varepsilon$ represents the degree of privacy protection and is termed the privacy budget. A lower privacy budget value indicates a stricter privacy limitation on published data. Each user possesses an adjacency list denoted by $a_i$. Before transmitting the result to the central server, they apply the randomized function $\mathcal{R}$. Subsequently, the central server aggregates this data using the aggregator $\mathcal{A}$. According to Definition 2, if a mechanism $\mathcal{M}$ is $\varepsilon$-edge differentially private, significant information of $a_i$ cannot be obtained from $\mathcal{R}(a_i)$ once it is transferred and stored on the central server. 

One of the most important properties of the edge local differential privacy is the composition theorem.
\begin{theorem}[Composition Theorem \citep{dwork2010boosting}] Let $\mathcal{M}_1, \dots, \mathcal{M}_p$ be edge local differentially private mechanism with privacy budget $\varepsilon_1, \dots, \varepsilon_p$. Then, the mechanism $\mathcal{M}_p \circ \cdots \circ \mathcal{M}_1$ is $(\varepsilon_1 + \cdots + \varepsilon_p)$-edge local differentially private.
\label{thm:composition}
\end{theorem}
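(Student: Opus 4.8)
The plan is to reduce the statement to a direct accounting of privacy budgets, exploiting the way edge local differential privacy is defined at the algorithm level in Definition \ref{dif:diff}. By that definition, to show that $\mathcal{M}_p \circ \cdots \circ \mathcal{M}_1$ is $(\varepsilon_1 + \cdots + \varepsilon_p)$-edge locally differentially private, it suffices to fix an arbitrary user $i$ and verify that the sum of the per-query privacy budgets of all queries that the composed algorithm poses to user $i$ does not exceed $\varepsilon_1 + \cdots + \varepsilon_p$.

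First I would observe that the composed algorithm interacts with user $i$ by running the constituent mechanisms in sequence, so the collection of queries it poses to user $i$ is exactly the union of the queries posed by $\mathcal{M}_1, \dots, \mathcal{M}_p$ taken separately. Next, since each $\mathcal{M}_j$ is $\varepsilon_j$-edge locally differentially private, Definition \ref{dif:diff} applied to $\mathcal{M}_j$ guarantees that the budgets of the queries $\mathcal{M}_j$ poses to user $i$ sum to at most $\varepsilon_j$. Adding these bounds over $j = 1, \dots, p$ shows that the total budget of all queries posed to user $i$ by the composition is at most $\varepsilon_1 + \cdots + \varepsilon_p$, which is precisely the requirement of Definition \ref{dif:diff}. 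As $i$ was arbitrary, the conclusion follows.

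The step I would treat as the main obstacle is the adaptive setting, in which a query issued by a later mechanism $\mathcal{M}_j$ may depend on the answers already returned to the server by $\mathcal{M}_1, \dots, \mathcal{M}_{j-1}$. Here I would invoke the query-level guarantee of Definition \ref{ref1}: the bound there holds uniformly over every pair of neighboring adjacency lists and every outcome set, so each query remains locally differentially private with the same budget even after conditioning on any fixed transcript of earlier answers. Conditioning therefore does not inflate the per-query budget, and the additive accounting above stays valid. To tie this budget bookkeeping back to the underlying indistinguishability ratio, I would factor the likelihood of a full output transcript as a product of the successive conditional query likelihoods, bound each factor by the exponential of its query budget, and observe that the product telescopes to $e^{\varepsilon_1 + \cdots + \varepsilon_p}$, recovering exactly the ratio demanded of an $(\varepsilon_1 + \cdots + \varepsilon_p)$-edge locally differentially private mechanism.
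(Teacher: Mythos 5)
The paper does not prove this theorem at all: it is imported verbatim from \citet{dwork2010boosting} as a known result, so there is no in-paper proof to compare against. Judged on its own, your proposal is correct and well calibrated to this paper's somewhat unusual formalism. Note that Definition~\ref{dif:diff} defines algorithm-level edge LDP purely by bookkeeping of per-query budgets, so under that definition your first argument --- the composed algorithm poses to user $i$ exactly the union of the queries posed by $\mathcal{M}_1,\dots,\mathcal{M}_p$, each $\mathcal{M}_j$ contributes queries whose budgets sum to at most $\varepsilon_j$, and the totals add --- is essentially the entire proof, bordering on tautological. The genuine mathematical content of the classical theorem lives in your final paragraph: in the adaptive setting the realized query $\mathcal{R}_j$ is a function of the transcript prefix, and for each fixed prefix that realized query still satisfies the uniform bound of Definition~\ref{ref1}, so writing
\[
\Pr[\text{transcript} = (o_1,\dots,o_p) \mid a_i] \;=\; \prod_{j=1}^{p} \Pr[\mathcal{R}_j(a_i) = o_j \mid o_1,\dots,o_{j-1}]
\]
and bounding each conditional factor by $e^{\varepsilon_j}$ times the corresponding factor for $a_i'$ telescopes to the ratio $e^{\varepsilon_1+\cdots+\varepsilon_p}$, exactly as you say (summing over $o \in S$ then gives the set version). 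This is the standard proof of the cited result, and your explicit handling of adaptivity --- which many blind attempts omit --- is the step that makes the argument complete rather than merely definitional. No gap.
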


\subsection{Basic Mechanisms}

Numerous mechanisms have been suggested to meet Definition~\ref{dif:diff} \citep{li2022network,hou2023wdt}.
Among them, the edge local Laplacian mechanism \citep{hillebrand2023unbiased} is an approach to offer privacy when we request each user to give real numbers to the aggregator. The Laplacian mechanism can be defined as in the following definition. It can be shown that the mechanism is $\varepsilon$-edge local differential private.

\begin{definition}[Edge Local Laplacian Mechanism \citep{hillebrand2023unbiased}]
    For $f: \{0,1\}^n \rightarrow \mathbb{R}$, the global sensitivity of $f$ is
        $\Delta_f = \max\limits_{d(a,a') = 1} \norm{f(a) - f(a')}_1$. For $\varepsilon > 0$, let $X$ be a random variable drawn from $\Lap(\varepsilon / \Delta_f)$. The edge local Laplacian mechanism for $f$ with privacy budget $\varepsilon$ is the mechanism for which the randomized function $\mathcal{R}$ is defined as $f(a) + X$. \label{def:laplacian}
\end{definition}

When we request users to send a vector of bits (such as the whole adjacency vector) to the aggregator, the edge local Laplacian mechanism is not the most suitable. Instead, a frequently used method is the randomized response \citep{warner_randomized_1965, mangat_improved_1994,wang2016using}, as detailed in the following definition.

\begin{definition}[Randomized Response \citep{warner_randomized_1965, mangat_improved_1994,wang2016using}]
    For $\varepsilon > 0$, the randomized response with privacy budget $\varepsilon$ is a mechanism which the randomized function $\mathcal{R}: \{0,1\}^n \rightarrow \{0,1\}^n$ is a function such that $\mathcal{R}(a_{i,1}, \dots, a_{i,n}) = (\RR(a_{i,1}), \dots, \RR(a_{i,n}))$ where
    \begin{equation}
        \prob{\RR(a_{i,j}) = 1} = 
        \begin{cases}
            \frac{e^{\varepsilon}}{1 + e^{\varepsilon}} & \text{if } a_{i,j} = 1 \\
            \frac{1}{1 + e^{\varepsilon}} & \text{if } a_{i,j} = 0.
        \end{cases}
    \end{equation}
\end{definition}

\subsection{Smooth Sensitivity \texorpdfstring{\citep{nissim2007smooth}}{}}

The traditional Laplacian mechanism adjusts noise based on the global sensitivity, which is determined from the worst possible outcome for a selected function. It does not consider the specific adjacency list. Therefore, in certain situations, the global sensitivity might be much greater than the function's changes due to slight update in the real adjacency list. Noticing this, \citet{nissim2007smooth} created a mechanism where the noise relates to the smooth sensitivity, which varies depending on the instance. Before we define smooth sensitivity, we first need to explain local sensitivity and sensitivity at a distance \( k \).

\begin{definition}[Local Sensitivity]
    For $f: \{0,1\}^n \rightarrow \mathbb{R}$, the local sensitivity of $f$ on $a \in \{0,1\}^n$ is
        $\text{LS}_f(a) = \max\limits_{a':d(a, a') = 1} \norm{f(a) - f(a')}_1$.
        \label{def:local}
\end{definition}

The local sensitivity of \(f\) on \(\{0,1\}^n\) measures the largest change in \(f\) due to a single edge alteration when starting with the adjacency list \(a\). The sensitivity of \(f\) at distance \(k\) expands on this idea by looking at the changes resulting from a single alteration in the adjacency lists that are within \(k\) modifications from \(a\).

\begin{definition}[Sensitivity at Distance $k$]
     For $f: \{0,1\}^n \rightarrow \mathbb{R}$, the sensitivity of $f$ at distance $k$ at $a$ is
        $A^{(k)}(a) = \max\limits_{a':d(a, a') \leq k} \text{LS}_f(a')$.
        \label{def:sensitivity}
\end{definition}

We are now ready to define the smooth sensitivity.

\begin{definition}[Smooth Sensitivity]
    For $f$ a function and $\beta > 0$, the $\beta$-smooth sensitivity of $f$ at $a$ is
        $S^*_{f,\beta}(a)
        = \max\limits_k e^{-\beta k} A^{(k)}(a)$
\end{definition}

We are now ready to define the mechanism based on the smooth sensitivity. It is shown that the mechanism is $\varepsilon$-edge local differential private.

\begin{definition}[Smooth Sensitivity Mechanism]
Let $\gamma > 1$ and $h(z)$ be a probability distribution proportional to $1 / (1 + |z|^{\gamma})$. We will denote $Z_{\gamma}$ the random variable drawn from $h$.
    For $\varepsilon > 0$ and $\gamma > 1$, the smooth sensitivity mechanism for $f$, denoted by $\mathcal{M}$, is a mechanism with the randomized function $\mathcal{R}(a) = f(a) + \frac{4 \gamma}{\varepsilon} S^*_{f,\varepsilon/\gamma}(a) \cdot Z_{\gamma}$.
\end{definition}

This result was later improved in \citet{yamamoto2023privacy} who proved that $\mathcal{R}(a)$ could be changed to $f(a) + \frac{2 (\gamma - 1)}{\varepsilon} S^*_{f,\varepsilon/ 2(\gamma-1)}(a) \cdot Z_{\gamma}$.

The mechanism operates effectively for all values of $\gamma > 1$. Nonetheless, it is important to highlight that when $\gamma \leq 3$, we do not know a way to calculate the variance of the random variable $Z_{\gamma}$. Consequently, we will adopt $\gamma = 4$ for this article, where the variance of $Z_\gamma$ is one.

\section{Previous Mechanism}

\label{sec:previous}

Let us suppose the the original graph is $G = (V,E)$ when $V = \{v_1, \dots, v_n\}$ is a set of nodes or users and $E \subseteq \{\{u,v\} : u,v \in V\}$ is a set of edges or relationships. We can publish $G$ using the randomized response mechanism defined  in the previous section.
Given that \( a'_{i,j} = \RR(a_{i,j}) \) obtained from the randomized response, we can define a graph \( G' = (V, E') \)  with its adjacency matrix given by \( A' = (a'_{i,j})_{1 \leq i,j \leq n} \). One can determine the number of subgraphs in \( G' \) by counting them; however, this method introduces a significant error~\citep{imola2022communication}. Specifically, the error magnitude is \( \mathcal{O}\left(\frac{C_4(G)}{\varepsilon^2} + \frac{n^3}{\varepsilon^6}\right) \), which is large in relation to \( n \) and becomes prohibitive as \( \varepsilon \) decreases, particularly when aiming for higher levels of privacy.

\subsection{Two-Step Mechanism}

To mitigate the error, \citet{imola2021locally} introduced a two-step mechanism to release the count of triangles. A tuple of three nodes, represented as \( (v_i, v_j, v_k) \) where $i > j  > k$ is a triangle if $\{v_i, v_j\}, \{v_i, v_k\}, \{v_j, v_k\} \in E$. We denote the set of triangles in $G$ by $\Delta_G$.
 Furthermore, for tuples in \( \Delta_G \) where the initial element is \( v_i \), we denote them as \( \Delta_i \). We have that $|\Delta_G| = \sum\limits_i |\Delta_i|$.

The first step, represented by $\mathcal{M}_1$, involves releasing the obscured graph \( G' = (V, E') \) produced by the randomized response. In the second step, represented by \( \mathcal{M}_2 \), every user \( v_i \) provides an estimated value for \( |\Delta_i| \) that is based on both the real adjacency vector \( a_i \) and the distorted graph \( G' \). Let $\Delta'_i = \{(v_j, v_k): \{v_i, v_j\}, \{v_i, v_k\} \in E, \{v_j, v_k\} \in E', \text{ and } i > j > k \}$. Consider a function \( f \) defined as \( f(a_i) = | \Delta'_i|\). The randomized funtion of \( \mathcal{M}_2 \), represented as \( \mathcal{R}_2 \), aligns with what is used in the edge local Laplacian mechanism. Specifically, users are asked to add a Laplacian noise into \( f(a_i) \). The mechanism's aggregator then sums the outcomes derived from these randomized functions.

The two-step mechanism can significantly increase the precision of the triangle counting. However, it has a large variance. Consider $i > i' > j > k$ such that $\{v_i,v_j\}, \{v_i,v_k\}, \{v_{i'},v_{j}\}, \{v_{i'},v_k\} \in E$ and $\{v_j, v_k\} \notin E$. If, by the randomized response, $\{v_j, v_k\} \in E'$, that would contribute two to the number of triangles. When the input graph has a large number of $C_4$, the variance becomes large. The authors of \citet{imola2022communication} propose a technique called four-cycle trick to reduce the variance. Define $\Delta_i'' := \{(v_j, v_k) \in \Delta_i' : \{v_i,v_k\} \in G'\}$. Instead of having \( f(a_i) = |\Delta_i'| \), they use \( f(a_i) = |\Delta_i''| \). For the rest of the article, we will use ARROne to denote the version of ARR that uses this 4-cycle trick.

\subsection{Communication Cost}
In many real-world social networks, each user is typically connected to a constant number of other users \citep{barabasi2013network}. This means the number of connections (or `ones' in the adjacency list $a_i$) is a lot fewer than \(n\). So, instead of sending their entire adjacency list \(a_i\) to the aggregator, user $v_i$ can just share the list of those directly connected to them. This approach reduces the communication cost from \(n\) bits down to \(\Theta(\log n) \).

However, when using randomized response mechanisms, every bit in \(a_i\) has a constant probability of being flipped. The resulting adjacency list \(a_i'\) will have roughly \(\Theta(n)\) ones. So, the communication-saving technique mentioned earlier will not work here.

The situation gets more difficult with the two-step mechanism discussed earlier. In this approach, each user needs to access the entire distorted graph \(A'\). The user $v_i$ cannot choose to access only the edge $\{v_j,v_k\}$ such that $\{v_i,v_j\}, \{v_i,v_k\} \in E$ as that will reveal the existence of those two edges. This means they have to download the whole distorted graph, which requires about \(\Theta(n^2)\) bits. Such a cost becomes unrealistic for social networks with millions of nodes.

The four-cycle trick mentioned in the previous section can help reducing the communication cost. As a triangle $(v_j, v_k) \in \Delta_i''$ only if $\{v_i,v_k\} \in E'$, user $v_i$ can calculate $|\Delta_i''|$ without accessing the information about whether $\{v_j,v_k\} \in E'$ or not when $\{v_i,v_k\} \notin E'$. Still, we think that the savings from the trick are not substantial enough to make the two-step method viable for big graphs.

\section{Our Mechanism: Group Randomized Response}
\label{sec:grr}

In this section, we introduce a mechanism designed for publishing graphs in a distributed environment under the edge local differential privacy. We have termed this approach the \textit{group randomized response} (GroupRR). The mechanism is a three-step process: 
\begin{enumerate}
    \item \textbf{Degree Sharing Step}: In this step, each user shares their degree protected by the local Laplacian mechanism.
    \item \textbf{Group Randomized Response Step}: In this step, a sampled adjacency list is published by each user.
First, nodes and edges are categorized into groups using the linear congruence hashing function.
Then, one value is sampled from each group and only this value is obfuscated and sent to the central server.
\item \textbf{Counting Step:} Subgraphs are counted based on the information from the representatives. Then, each user shares the number of subgraphs protected by the local Laplacian mechanism.
\end{enumerate}

Figure~\ref{fig:illustration} provides an overview of the various steps of the mechanism, while Figure~\ref{fig:interactions} emphasizes the communication between users and the central server.

\begin{figure}
    \centering
    \includegraphics[width=0.7\linewidth]{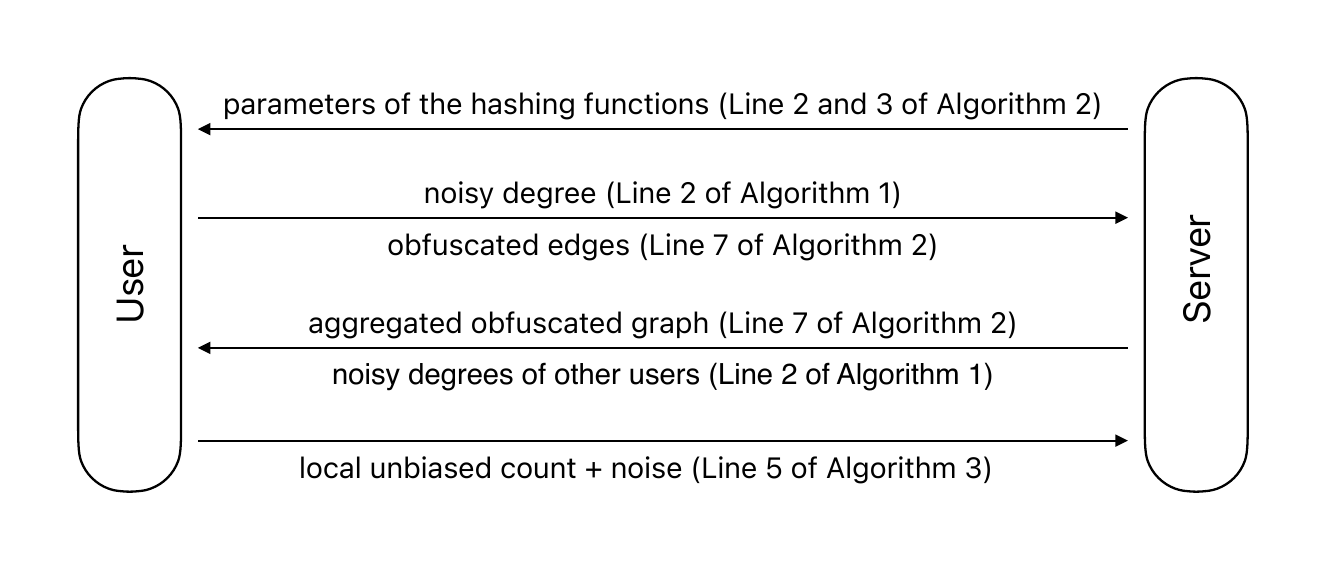}
    \caption{Details of the interactions between each user and the server during the execution of our mechanism}
    \label{fig:interactions}
\end{figure}

Thanks to the sampling created by the formation of groups, this mechanism achieves a notably reduced communication cost. Moreover, the privacy budget amplification from sampling~\citep{balle2018privacy} further decreases the probabilities of bit-flipping during the randomized response phase, further reducing the communication cost.

\subsection{Degree Sharing Step}
During this step, each user \(v_i\) calculates the count of their connections, specifically \(d_i = |\{j < i : \{i,j\} \in E\}|\). We call $d_i$ as \textit{low degree} in this paper. They then share the low degree using the local Laplacian mechanism with the privacy budget of \( \varepsilon_0 \). The shared count, denoted by \( \tilde{d}_i \), will be used only for adjusting bias in the fourth step. If there are a lot of users (\( n \) is large), any errors from this count are minor even if a large Laplacian noise is added. Thus, we can choose a relatively small value for \( \varepsilon_0 \). This step is described as in Algorithm \ref{algo:degree-sharing}.

\begin{algorithm}
    \caption{The first step of our proposed method, degree sharing step}
    \label{algo:degree-sharing}
    
    \Fn{\DegreeSharing}{
        \KwIn{Graph $G = (V,E)$, privacy budget $\varepsilon_0$}
        \KwOut{Estimated degree $\tilde{d}_i$ of each user}
        \textbf{[User $i$]} Calculate and send $\tilde{d}_i \gets d_i + \Lap(\frac{1}{\varepsilon_0})$ to the central server
    }
\end{algorithm}

\subsection{Group Randomized Response Step}
\label{subsec:grr-algo}

The goal of this step is to publish an obfuscated adjacency list in a communication efficient way.
Our solution is to form random groups of edges that are of similar sizes. It is crucial that this randomization can be reproduced, and that the division is easily shareable among all participants. To achieve such a grouping that meets these criteria, we employ the linear congruence hashing, which is discussed in Section~\ref{sec:preliminaries}.
Once that grouping is done, one value is sampled per group. This value is then obfuscated with the randomized response mechanism and broadcasted. 
This step is described in Algorithm \ref{algo:grouprr}. 

At Line 2 of the algorithm, the central server calculates \(p \in \N\), which represents the smallest prime number greater than \(n\), the graph's size. Then, at Line 3, for each node \(v_i\) when $i \in [|1, n|]$, the server randomly generates two values: \(\theta_i\) from the range $[|1, p-1|]$ and \(\phi_i\) from the range $[|0, p-1|]$. These values act as coefficients for the linear congruence hashing. After generating these coefficients, the server shares them, along with \(p\), with all users.

Recall the definition of $h_{\mathsf{a},\mathsf{b}}$ in Section \ref{sec:preliminaries}. Let $s \in \N^*$ be a parameter called \textit{sampling size} which will be the size of each group, $m$ be an integer such that $m = \lceil p / s \rceil$, and let $h_i(j) = h_{\theta_i, \phi_i}(j) = ((\theta_i \cdot j + \phi_i) \text{ mod } p) \text{ mod } m$.
At Line 4, each user $v_i$ classifies the set of users $\{v_1, \dots, v_n\}$ into $m$ disjoint groups, denoted by $S_{i,1}, \dots, S_{i,m}$. The group $S_{i,t}$ can be defined as $\{v_j: h_i(j) = t\}$. By the property of the linear congruence hashing, we have that $s - 1 \leq |S_{i,t}| \leq s$. To make sure that all the sets have size $s$, we add a dummy element to all $S_{i,t}$ with size $s - 1$. There is no edge between $v_i$ and the dummy node.


At Line 5, we then choose a representative for each group, \(S_{i,t}\). Every member of the group has an equal probability of being selected, which is \(1/s\). Let us call the chosen member from group \(S_{i,t}\) as \(v_{c_t}\). When $c_t$ is chosen from the group containing $k$, we will employ the notation $a_{i,h_i(k)} := a_{i,c_t}$ for convenience. Only the value of the representative edge $a_{i, c_t}$ is published via the randomized response mechanism, and the actual selected index $c_t$ is not shared. We introduce this quantity only to simplify the description and analysis of the algorithm.

It should be emphasized that even though the hashing coefficients are determined by the central server, this does not pose a privacy risk. This is because the hashing functions are employed to form groups that are intentionally public. The privacy-sensitive element is the representative value chosen for each group, and the selection of these representatives does not involve the hash functions.

Each user then sends the values \(a_{i,c_1}, \dots, a_{i,c_m}\) to the aggregator using the randomized response mechanism with a privacy budget of \(\varepsilon' = \ln(1 + s(e^{\varepsilon_1} - 1))\). It is noteworthy that this privacy budget, \(\varepsilon'\), is considerably greater than the \(\varepsilon_1\) budget used in the traditional method. This results in a much smaller probability of flipping bits in the adjacency vector.

The obfuscated values sent from \(v_i\) to the central system are denoted as \(a'_{i,c_1}, \dots, a'_{i,c_m} \in \{0,1\}\). The aggregator then releases these values, making them available for each user in the mechanism's final step. In the earlier approach, users had to know the list of $v_j$ such that \(a'_{i,j} = 1\). Now, they only need to know the list of $t$ such that \(a'_{i,c_t} = 1.\) This significant reduces the number of bits users have to download from the central server. We will conduct the analysis on the number of bits which each user needs to download in the next section.

To ensure that each edge is published only once, each user only publishes connections with users who have a smaller index than their own. To enforce this condition, in Line 6, we assign $a'_{i,c_t}$ to zero when $c_t$ is greater than $i$.

\begin{algorithm}
    \caption{The second step of our proposed method, group randomized response step}
    \label{algo:grouprr}
    
    \Fn{\GroupRandomizedResponse}{
        \KwIn{Graph $G = (V,E)$, sampling size $s$, privacy budget $\varepsilon_1$}
        \KwOut{The grouped and obfuscated adjacency list $(a'_{i,c_1}, \dots, a'_{i,c_m})$ for each user}
        \textbf{[Server]} Calculate and broadcast $p$ the smallest prime number larger than $n$\\
        \textbf{[Server]} For all $i \in [1,n]$, randomly select and broadcast $(\theta_i, \phi_i)$ from $[1,p-1] \times [0,p-1]$\\
        \textbf{[User $i$]} Recall the definition of $h_{\mathsf{a},\mathsf{b}}$ in Section \ref{sec:preliminaries}. Let $m$ be an integer such that $m = \lceil p / s \rceil$, and let $h_i(j) = h_{\theta_i, \phi_i}(j) = ((\theta_i \cdot j + \phi_i) \text{ mod } p) \text{ mod } m$. For each $t \in [1,m]$, $S_{i,t} \gets \{v_j \mid h_i(j) =t\}$\;
        \textbf{[User $i$]} Choose randomly $v_{c_t}$ from the set $S_{i,t}$\\
        \textbf{[User $i$]} For all $t \in [1,m]$, if $c_t > i, a_{i,c_t} \leftarrow 0$\;
        \textbf{[User $i$]} Perform the randomized response with privacy budget \(\varepsilon' = \ln(1 + s(e^{\varepsilon_1} - 1))\) on  $(a_{i,c_1}, \dots, a_{i,c_m})$. The result $(a'_{i,c_1}, \dots, a'_{i,c_m})$ is then broadcasted to all users (We note that only $a'_{i,c_i}$ for all $1 \leq i \leq m$ is published. The index $c_i$ is not.)\;
    }
\end{algorithm}

\subsection{Counting Step}

The final step of our method is outlined in Algorithm \ref{algo:couting}. In Lines 2-3, we update the value of \( a'_{i,c_t} \) to \( \tilde{a}_{i, c_t} = \omega_i \cdot a'_{i, c_t} - \tilde{\sigma}_i \). In the next section, we will show that this adjustment effectively removes the bias in our subgraph counting. We then use the value of \( \tilde{a}_{i, c_t} \) to estimate the number of subgraphs associated with each node. The calculation method varies depending on the specific subgraph. In Line 4 of the algorithm, we detail the process for triangle counting. However, we can modify this line to calculate other graph statistics. Following this, in Line 5, we apply either the local Laplacian mechanism or the smooth sensitivity mechanism to obfuscate the result from Line 4 before sending it to the central server. At Line 6, the central server aggregates the results received from all users and publishes the summation as the final counting result.

It is crucial to understand that we do not assume the degree $d_i$ is public. Rather, we utilize an estimation of $d_i$, referred to as $\tilde{d}_i$, which is disclosed in the initial step of our mechanism, for the calculation of $\tilde{\sigma}_i$ at Line 2 of the algorithm.

\begin{algorithm}
    \caption{The third step of our method, counting step, when it is used for publishing the number of triangles}
    \label{algo:couting}
    
    \Fn{\Counting}{
        \KwIn{Graph $G = (V,E)$, grouped and obfuscated adjacency list $(a'_{i,c_1}, \dots, a'_{i,c_m})$, privacy budget $\varepsilon_2$, estimated degree $\tilde{d_i}$ of each user.}
        \KwOut{The estimated number of triangles in the graph}
        \textbf{[User $i$]} Calculate $\tilde{\sigma}_i = \frac{s - 1}{ms - s}\tilde{d_i} + \frac{1}{e^{\varepsilon'} - 1} \frac{ms - 1}{m - 1}$, $\omega_i = \frac{e^{\varepsilon'} + 1}{e^{\varepsilon'} - 1} \cdot \frac{ms - 1}{m - 1}$\;
        \textbf{[User $i$]} For $1 \leq t \leq m$, calculate $\tilde{a}_{i, c_t} = \omega_i \cdot a'_{i, c_t} - \tilde{\sigma}_i$ \;
        \textbf{[User $i$]} Let $W_i := \{(j, k): j < k < i \text{ and } \{v_i,v_j\}, \{v_i, v_k\} \in E\}$. Calculate $|\tilde{\Delta}_i| = \sum\limits_{(j,k) \in W_i} \tilde{a}_{k, h_k(j)}$\;
        \textbf{[User $i$]} Obfuscate the value of $|\tilde{\Delta}_i|$ using the local Laclacian mechanism or the smooth sensitivity mechanism with privacy budget $\varepsilon_2$. Then, submit the result to the central server\;
        \textbf{[Server]} Publish the number of triangle as $\sum_i |\tilde{\Delta}_i|$
    }
\end{algorithm}

\section{Theoretical Analysis of Our Mechanism}
\label{sec:grr-analysis}

\subsection{Privacy}

We begin by analyzing the privacy guarantees of the group randomized response step (Step 2) in the following lemma. This analysis leverages the amplification by subsampling technique \citep{balle2018privacy}, defined as follows: 
\begin{lemma}[Amplification by Sub-Sampling under Edge Differential Privacy \citep{balle2018privacy}] \label{lem:amp} Let \( s > 0 \), and define \( m = \lceil n / s \rceil \). Consider a randomized sampling function \( \mathcal{S}: \{0,1\}^n \rightarrow \{0,1\}^m \) such that, given an input \( (a_1, \dots, a_n) \in \{0,1\}^n \), the output is \( \mathcal{S}(a_1, \dots, a_n) = (a'_1, \dots, a'_m) \), where each \( a'_i \) is determined as follows:
\begin{itemize}
  \item For each \( 1 \leq i < m \), the output \( a'_i \) is sampled uniformly at random from the \( i \)-th block of size \( s \), i.e., from indices \( \{ (i - 1)s + 1, \dots, is \} \). That is, for each \( j \in \{ (i - 1)s + 1, \dots, is \} \), we have \( a'_i = a_j \) with probability \( 1/s \).
  \item For \( i = m \), let the final block be \( \{ (m - 1)s + 1, \dots, n \} \), which may contain fewer than \( s \) elements. Then, for each \( j \in \{ (m - 1)s + 1, \dots, n \} \), we have \( a'_m = a_j \) with probability \( 1/s \), and \( a'_m = 0 \) with probability \( \frac{ms - n}{s} \).
\end{itemize}
Let \( \mathcal{M} \) be a randomized algorithm satisfying the following privacy guarantee: for all \( a, \hat{a} \in \{0,1\}^n \) such that \( |a - \hat{a}| \leq 1 \), and for all set \( S \),
\begin{equation}
e^{-\varepsilon'} \leq \frac{\Pr[\mathcal{M}(a) \in S]}{\Pr[\mathcal{M}(\hat{a}) \in S]} \leq e^{\varepsilon'}. \label{eqn:amp1}
\end{equation}
Then, for the composed mechanism \( \mathcal{M} \circ \mathcal{S} \), it holds that for all \( a, \hat{a} \in \{0,1\}^n \) such that \( |a - \hat{a}| \leq 1 \), and for all set \( S \), 
\begin{equation}
e^{-\varepsilon_s} \leq \frac{\Pr[\mathcal{M}(\mathcal{S}(a)) \in S]}{\Pr[\mathcal{M}(\mathcal{S}(\hat{a})) \in S]} \leq e^{\varepsilon_s}, 
\label{eqn:amp2}
\end{equation}
where \( \varepsilon_s \) is given by
$\varepsilon_s = \ln\left(1 + \frac{e^{\varepsilon'} - 1}{s} \right).$
\end{lemma}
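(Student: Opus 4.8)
The plan is to reduce to a single pair of neighbouring inputs and then exploit the block structure of $\mathcal{S}$ through a mixture decomposition. Fix $a,\hat a\in\{0,1\}^n$ with $|a-\hat a|\le 1$; if they are equal there is nothing to prove, so assume they differ in exactly one coordinate $j^\star$, which lies in exactly one block, say block $i^\star$. Since $a$ and $\hat a$ agree outside block $i^\star$, the coordinates $a'_i$ for $i\ne i^\star$ are drawn from identical distributions under both inputs; only the representative of block $i^\star$ feels the difference, and that representative equals $j^\star$ with probability exactly $1/s$ (this holds verbatim for the short final block as well, by the definition of $\mathcal{S}$, where the residual mass simply goes to the outcome $a'_m=0$). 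First I would therefore write the law of $\mathcal{S}(a)$ as the mixture $(1-\tfrac{1}{s})\rho+\tfrac{1}{s}\mu_1$ and that of $\mathcal{S}(\hat a)$ as $(1-\tfrac{1}{s})\rho+\tfrac{1}{s}\nu_1$, where $\rho$ is the common conditional law given that $j^\star$ is \emph{not} selected, $\mu_1$ is the conditional law given $j^\star$ is selected under $a$, and $\nu_1$ the analogue under $\hat a$.

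Writing $x=\Pr[\mathcal{M}(\rho)\in S]$, $y=\Pr[\mathcal{M}(\mu_1)\in S]$ and $z=\Pr[\mathcal{M}(\nu_1)\in S]$ (with $\mathcal{M}(\pi)$ meaning $\mathcal{M}$ run on a fresh draw from $\pi$), the mixture yields $\Pr[\mathcal{M}(\mathcal{S}(a))\in S]=(1-\tfrac{1}{s})x+\tfrac{1}{s}y$ and likewise with $z$ for $\hat a$. The next step is to extract pairwise likelihood-ratio bounds from the $\varepsilon'$-privacy of $\mathcal{M}$. For this I would couple the samples so that the coordinates outside $i^\star$ coincide; then $\mu_1$, $\nu_1$ and every ``representative $=\ell$'' component of $\rho$ differ from one another in at most the single coordinate $i^\star$. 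Conditioning on the outcome of the other coordinates, applying Definition~\ref{ref1} pointwise to the two resulting neighbouring inputs, and integrating back out gives $y\le e^{\varepsilon'}z$, $y\le e^{\varepsilon'}x$, and the symmetric inequalities (together with the analogous bounds for $z$).

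With these in hand the claim reduces to a short algebraic inequality. Substituting the mixture expressions, the upper bound $\Pr[\mathcal{M}(\mathcal{S}(a))\in S]\le e^{\varepsilon_s}\Pr[\mathcal{M}(\mathcal{S}(\hat a))\in S]$ with $e^{\varepsilon_s}=1+\tfrac{e^{\varepsilon'}-1}{s}$ is equivalent to
\[
y-z\le (e^{\varepsilon'}-1)\left[\left(1-\tfrac{1}{s}\right)x+\tfrac{1}{s}\,z\right].
\]
The difference of the two sides is decreasing in $x$ and in $z$, so its maximum over the feasible region is attained at the smallest admissible values $x=z=y/e^{\varepsilon'}$ forced by $y\le e^{\varepsilon'}x$ and $y\le e^{\varepsilon'}z$; a one-line substitution shows both sides then equal $y(e^{\varepsilon'}-1)/e^{\varepsilon'}$, so the inequality is tight and holds everywhere. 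The matching lower bound follows by swapping the roles of $a$ and $\hat a$.

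The step I expect to be the real obstacle --- and the one that makes amplification actually happen --- is recognizing that $y\le e^{\varepsilon'}z$ alone is useless: feeding only that bound into the mixture recovers the trivial ratio $e^{\varepsilon'}$. One must additionally use that the ``$j^\star$ selected'' component is itself $e^{\varepsilon'}$-close to the common component $\rho$, i.e.\ $y\le e^{\varepsilon'}x$, which is exactly what pins $x$ from below and drives the ratio down to $1+(e^{\varepsilon'}-1)/s$. A secondary point needing care is verifying that the short final block leaves the mixing weight $1/s$ on the differing component unchanged, so the same algebra applies uniformly across all blocks.
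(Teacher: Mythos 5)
Your proof is correct. Note first that the paper itself offers no proof of this lemma --- it is imported verbatim from \citet{balle2018privacy} --- so there is no in-paper argument to compare against; what you have produced is a self-contained elementary proof of the special case the paper needs. Your route is essentially the specialization, to uniform one-out-of-$s$ block sampling, of the general coupling/``advanced joint convexity'' technique by which \citet{balle2018privacy} prove amplification for arbitrary subsampling schemes: your mixture decomposition $(1-\tfrac{1}{s})\rho+\tfrac{1}{s}\mu_1$ versus $(1-\tfrac{1}{s})\rho+\tfrac{1}{s}\nu_1$ is exactly their overlapping-mixture decomposition with total-variation weight $\eta=1/s$, and your observation that $y\le e^{\varepsilon'}z$ alone is useless while the extra bound $y\le e^{\varepsilon'}x$ against the \emph{common} component drives the ratio down to $1+(e^{\varepsilon'}-1)/s$ is precisely the mechanism behind their result. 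The algebra checks out: with $t=e^{\varepsilon'}$, $\lambda=1/s$, the function $F=y-z-(t-1)\left[(1-\lambda)x+\lambda z\right]$ is decreasing in both $x$ and $z$, and at the extreme point $x=z=y/t$ one gets $F=y(t-1)/t-y(t-1)/t=0$, so $F\le 0$ on the feasible region, with the lower bound following by symmetry. Your handling of the short final block is also right --- the differing index is still selected with probability exactly $1/s$, the residual mass going to the outcome $a'_m=0$, which differs from $a_{j^\star}$ in at most one coordinate, so the same pairwise bounds apply. Two cosmetic remarks: the hypothesis~(\ref{eqn:amp1}) as printed quantifies over $\{0,1\}^n$ where it should be $\{0,1\}^m$ (the input space of $\mathcal{M}$), and your argument implicitly (and correctly) uses the corrected version; and your coupling step deserves one explicit line noting that the representatives of blocks $i\ne i^\star$ are sampled independently of block $i^\star$, which is what licenses conditioning on them and applying the neighbor bound pointwise. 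What your elementary argument buys over citing \citet{balle2018privacy} is transparency for this specific sampler, including the tightness of the constant $1+(e^{\varepsilon'}-1)/s$, which their general framework delivers only after unwinding considerably more machinery.
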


The amplification by sub-sampling technique was originally proposed in the context of central differential privacy, where each \( a_i \) represents sensitive information of user \( i \). It has not been previously applied to either local differential privacy or graph statistics. The assumption in (\ref{eqn:amp1}) requires that \( \mathcal{M} \) satisfies \( \varepsilon' \)-differential privacy, and the conclusion in (\ref{eqn:amp2}) shows that the composed mechanism achieves \( \varepsilon_s \)-differential privacy due to the sampling. To the best of our knowledge, we are the first to apply amplification by sub-sampling in the setting of local differential privacy and graph statistics. We consider this application to be one of the main contributions of our work.

The privacy guarantee of the group randomized response step (Step 2) of our algorithm follows from Lemma~\ref{lem:amp}.

\begin{lemma}
The privacy budget of the group randomized response step is $\varepsilon_1$. \label{lem:privacy_of_step2}
\end{lemma}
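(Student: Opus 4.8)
The plan is to recognize Step~2 as an instance of the composition $\mathcal{M} \circ \mathcal{S}$ appearing in Lemma~\ref{lem:amp}, where $\mathcal{S}$ is the grouping-plus-representative-selection (Lines 4--5) and $\mathcal{M}$ is the coordinatewise randomized response (Line 7), and then to exploit the fact that the inner budget $\varepsilon' = \ln(1 + s(e^{\varepsilon_1}-1))$ was chosen precisely so that amplification returns $\varepsilon_1$. First I would fix the hash coefficients $(\theta_i,\phi_i)$ and observe that, since these are data-independent public randomness broadcast by the server, it suffices to prove the $\varepsilon_1$-edge-LDP bound for every fixed grouping; a bound uniform over all hashes then yields the claim for the full step. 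For a fixed hash, the map $j \mapsto h_i(j)$ together with the dummy padding partitions $\{1,\dots,n\}$ into $m$ disjoint groups each of size exactly $s$, and Line~5 selects one representative uniformly with probability $1/s$ from each group. Up to a relabeling of indices, this is exactly the sampler $\mathcal{S}$ of Lemma~\ref{lem:amp}, with the selection of a padded dummy (whose edge value is $0$) playing the role of the ``$a'_m=0$'' event in the lemma.

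Next I would verify the hypothesis~(\ref{eqn:amp1}): the inner mechanism $\mathcal{M}$, the randomized response with budget $\varepsilon'$ applied coordinatewise to the sampled vector $(a_{i,c_1},\dots,a_{i,c_m})$, is $\varepsilon'$-edge locally differentially private. This is immediate from the definition of randomized response: for two inputs differing in a single bit, the per-coordinate likelihood ratio is at most $\frac{e^{\varepsilon'}/(1+e^{\varepsilon'})}{1/(1+e^{\varepsilon'})} = e^{\varepsilon'}$, while all remaining coordinates are flipped independently and contribute ratio $1$. The zeroing in Line~6 (setting $a_{i,c_t} \leftarrow 0$ whenever $c_t > i$) depends only on indices and not on edge values, so it is data-independent preprocessing that cannot increase the privacy loss of $\mathcal{M}$ and can be folded into the inner mechanism.

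With~(\ref{eqn:amp1}) established, I would invoke Lemma~\ref{lem:amp} to conclude that $\mathcal{M} \circ \mathcal{S}$ satisfies the two-sided bound~(\ref{eqn:amp2}) with $\varepsilon_s = \ln\!\left(1 + \frac{e^{\varepsilon'}-1}{s}\right)$. Substituting $\varepsilon' = \ln(1 + s(e^{\varepsilon_1}-1))$ gives $e^{\varepsilon'}-1 = s(e^{\varepsilon_1}-1)$, hence $\frac{e^{\varepsilon'}-1}{s} = e^{\varepsilon_1}-1$ and therefore $\varepsilon_s = \ln(e^{\varepsilon_1}) = \varepsilon_1$. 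Since this bound holds uniformly for every fixed hash, Step~2 is $\varepsilon_1$-edge locally differentially private, which is the claim.

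I expect the main obstacle to be the careful matching between the hash-induced grouping used in the algorithm and the contiguous-block sampler $\mathcal{S}$ stated in Lemma~\ref{lem:amp}. One must argue that the amplification guarantee is invariant under a relabeling of the coordinates, i.e.\ that it depends only on the groups being disjoint, of equal size $s$, and sampled uniformly, so that an index permutation turning the hash groups into contiguous blocks transports the guarantee without loss (permutations map neighboring inputs to neighboring inputs). A secondary point needing care is that the lemma permits only the \emph{last} block to be short, whereas here several groups may have size $s-1$; the dummy padding resolves this by realizing, in every short group, the residual ``output $0$ with probability $1/s$'' mass, so that each group genuinely has uniform sampling probability $1/s$ and the lemma applies verbatim.
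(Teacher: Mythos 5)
Your proof is correct and takes essentially the same route as the paper's: both view Lines 4--6 of Algorithm~\ref{algo:grouprr} as the subsampler $\mathcal{S}$ of Lemma~\ref{lem:amp} (modulo a privacy-preserving reordering of coordinates), verify that the randomized response with budget $\varepsilon'$ satisfies hypothesis~(\ref{eqn:amp1}), and substitute $\varepsilon' = \ln(1 + s(e^{\varepsilon_1}-1))$ into $\varepsilon_s = \ln\left(1 + \frac{e^{\varepsilon'}-1}{s}\right)$ to conclude $\varepsilon_s = \varepsilon_1$. The extra details you supply --- the dummy padding realizing the lemma's short-block case for every undersized group, and the data-independence of the Line~6 zeroing --- are points the paper's proof passes over silently but are fully consistent with it.
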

\begin{proof}
We begin by observing that the computation of $[a_{i,c_1}, \dots, a_{i,c_m}]$ in Lines 4–6 of Algorithm~\ref{algo:grouprr} can be viewed as a sampling procedure $\mathcal{S}$, as described in Lemma~\ref{lem:amp}, followed by a reordering step, which does not affect the privacy guarantees of the mechanism. The randomized response mechanism applied in Line~7 satisfies the privacy bound given in (\ref{eqn:amp1}). Consequently, the overall procedure in Algorithm~\ref{algo:grouprr}, which composes sampling with randomized response, satisfies equation~(\ref{eqn:amp2}), thereby ensuring $\varepsilon_s$-edge local differential privacy as defined in Definition~\ref{def:local}. Substituting $\varepsilon' = \ln(1 + s(e^{\varepsilon_1} - 1))$ into the expression for $\varepsilon_s = \ln(1 + \frac{e^{\varepsilon'} - 1}{s})$, we conclude that $\varepsilon_s = \varepsilon_1$ and the privacy budget of the group randomized response step is $\varepsilon_1$.   
\end{proof}

We provide the following example to illustrate how this amplification is achieved.

\begin{example}
Consider the publication of node \(\nu_0\)'s adjacency list using the group randomized response. We will analyze two cases: (1) \(\nu_0\) is connected to none of the \(n-1\) other nodes, and (2) \(\nu_0\) is connected only to \(\nu_1\). Additionally, since the groups created by the hashing functions are public information and independent of the private adjacency list, we assume that \(\nu_1\) is in \(S_{0,t}\) for both cases.

Let \( p_1 \) and \( p_2 \) represent the probabilities that \( a'_{i,c_t} = 1 \) in scenarios (1) and (2), respectively, when using the randomized response mechanism with a privacy budget \(\varepsilon'\).
In case (1), \( a_{i,c_t} \) is never equal to 1, as no nodes are connected to \(\nu_0\). By the property of randomized response, \( p_1 = \frac{1}{1 + e^{\varepsilon'}} \).
In case (2), \( a_{i,c_t} \) is equal to 1 only when \(\nu_1\) is selected as the representative of group \( S_{0,t} \) (note that this information is not disclosed). Thus, \( a_{i,c_t} = 1 \) with probability \(\frac{1}{s}\). We have 
$p_2 = \frac{s-1}{s} \cdot \frac{1}{1 + e^{\varepsilon'}} + \frac{1}{s} \cdot \frac{e^{\varepsilon'}}{1 + e^{\varepsilon'}}$.
We then derive that \( \frac{p_2}{p_1} = 1 + \frac{1}{s} \cdot \left( e^{\varepsilon'} - 1 \right) \). By setting \(\varepsilon_1\) such that \(\varepsilon' = \ln (1 + s(e^{\varepsilon_1} - 1))\), we obtain \(\frac{p_2}{p_1} = e^{\varepsilon_1}\). This shows that we achieve \(\varepsilon_1\)-differential privacy, with \(\varepsilon_1 \approx \varepsilon'/s\), demonstrating the privacy budget amplification.
\end{example}

The final result of our privacy analysis is shown in the following theorem:
\begin{theorem}
The privacy budget of our mechanism is $\varepsilon_0 + \varepsilon_1 + \varepsilon_2$.
\end{theorem}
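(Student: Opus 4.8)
The plan is to decompose the mechanism into its three constituent steps and apply the Composition Theorem (Theorem~\ref{thm:composition}). First I would observe that our mechanism consists of exactly three phases, each posing queries to every user $v_i$: the degree sharing step, the group randomized response step, and the counting step. The strategy is to establish the per-step privacy budget of each phase separately, and then combine them additively via composition.

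The key steps, in order, are as follows. First, I would argue that the degree sharing step (Algorithm~\ref{algo:degree-sharing}) is $\varepsilon_0$-edge locally differentially private. This follows because each user releases $d_i + \Lap(1/\varepsilon_0)$, which is precisely the edge local Laplacian mechanism of Definition~\ref{def:laplacian} applied to the low-degree function $f(a_i) = d_i$; since changing a single edge alters $d_i$ by at most one, the global sensitivity is $\Delta_f = 1$, so the Laplacian noise $\Lap(1/\varepsilon_0)$ yields exactly privacy budget $\varepsilon_0$. Second, I would invoke Lemma~\ref{lem:privacy_of_step2}, already proved in the excerpt, which establishes that the group randomized response step is $\varepsilon_1$-edge locally differentially private via amplification by sub-sampling. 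Third, I would argue that the counting step (Algorithm~\ref{algo:couting}) is $\varepsilon_2$-edge locally differentially private: here each user obfuscates $|\tilde{\Delta}_i|$ using either the local Laplacian mechanism or the smooth sensitivity mechanism with budget $\varepsilon_2$, both of which are stated earlier to be $\varepsilon$-edge locally differentially private for their respective budgets.

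Finally, I would apply the Composition Theorem to conclude that the sequential composition of the three mechanisms, with budgets $\varepsilon_0$, $\varepsilon_1$, and $\varepsilon_2$, is $(\varepsilon_0 + \varepsilon_1 + \varepsilon_2)$-edge locally differentially private, completing the proof.

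I expect the main subtlety to lie in justifying the privacy of the counting step rather than in the routine composition bookkeeping. The concern is that the quantity $|\tilde{\Delta}_i|$ computed at Line~4 depends not only on user $v_i$'s own private adjacency information (through $W_i$) but also on the already-published obfuscated values $\tilde{a}_{k,h_k(j)}$ of other users. I would need to argue that this dependence on \emph{published} outputs does not consume additional budget: since those obfuscated values were released under budgets $\varepsilon_0$ and $\varepsilon_1$ and are treated as fixed public inputs when user $v_i$ forms its query, the only fresh private dependence is through $v_i$'s own edges, whose sensitivity is controlled by the Laplacian or smooth-sensitivity noise calibrated to $\varepsilon_2$. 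A careful statement of what constitutes the query posed to user $v_i$ in each step — and verifying that each such query is individually private in the sense of Definition~\ref{ref1} — is the part that warrants the most attention before the composition step can be applied cleanly.
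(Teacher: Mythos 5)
Your proposal is correct and follows essentially the same route as the paper, which simply cites Lemma~\ref{lem:privacy_of_step2} together with the composition theorem (Theorem~\ref{thm:composition}); you merely spell out the per-step budgets ($\varepsilon_0$ for the Laplacian degree release with sensitivity $1$, $\varepsilon_1$ from the lemma, $\varepsilon_2$ for the obfuscated count) that the paper leaves implicit. Your closing observation --- that $|\tilde{\Delta}_i|$ depends on other users' data only through already-published obfuscated values, which are treated as fixed public inputs so that only $v_i$'s own edges consume fresh budget --- is a valid and careful point that the paper's one-line proof glosses over, but it does not change the argument.
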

\begin{proof}
    We directly obtain this result from Lemma \ref{lem:privacy_of_step2} and the composition theorem (Theorem \ref{thm:composition}).
\end{proof}

\subsection{Expected Value of \texorpdfstring{$\tilde{a}_{i,c_t}$}{a}} 

In the subsequent theorem, we discuss the expected value of \(\tilde{a}_{i,c_t}\). The theorem reveals that when an estimate from the counting phase is a linear composition of \(\tilde{a}_{i,c_t}\), it remains unbiased. The triangle count estimation in Algorithm \ref{algo:couting} is an example of these estimations.

In the following, we will restrict our analysis to the values of $\tilde{a}_{j,h_j(k)}$ where $j > k$. This does not pose any limitation, as $a_{j,k}$ can always be estimated using either $\tilde{a}_{j,h_j(k)}$ or $\tilde{a}_{k,h_k(j)}$. Therefore, for the remainder of this section, we will assume that $j > k$.

Recall that \( d_i \) denotes the number of connections between user \( v_i \) and nodes with smaller indices, which we refer to as the \emph{low degree} of \( v_i \).

Define
\[
p_{\text{p},i} := \Pr[a_{j,h_j(k)} = 1 \mid a_{j,k} = 1],
\]
which evaluates to
\[
p_{\text{p},i} = \frac{s - 1}{s} \cdot \frac{d_i - 1}{ms - 1} + \frac{1}{s}.
\]
Let
\[
p'_{\text{p},i} := \Pr[a'_{j,h_j(k)} = 1 \mid a_{j,k} = 1],
\]
which is given by the randomized response mechanism as
\[
p'_{\text{p},i} = \frac{e^{\varepsilon'}}{1 + e^{\varepsilon'}} \cdot p_{\text{p},i} + \frac{1 - p_{\text{p},i}}{1 + e^{\varepsilon'}}.
\]
Similarly, define
\[
p_{\text{a},i} := \Pr[a_{j,h_j(k)} = 1 \mid a_{j,k} = 0] = \frac{s - 1}{s} \cdot \frac{d_i}{ms - 1},
\]
and let
\[
p'_{\text{a},i} := \Pr[a'_{j,h_j(k)} = 1 \mid a_{j,k} = 0],
\]
which satisfies
\[
p'_{\text{a},i} = \frac{e^{\varepsilon'}}{1 + e^{\varepsilon'}} \cdot p_{\text{a},i} + \frac{1 - p_{\text{a},i}}{1 + e^{\varepsilon'}}.
\]

In the next lemma, we show a property of $\omega_i$ and $\tilde{\sigma}_i$ calculated in Algorithm \ref{algo:couting}. From now, denote $\mathbb{E}[\tilde{\sigma}_i]$ by $\sigma_i$.

\begin{lemma}
    \label{lem0}
    For all $i$, $\omega_i = \frac{1}{p_{\text{p},i}' - p_{\text{a},i}'}$ and $\sigma_i = \frac{p_{\text{a},i}'}{p_{\text{p},i}' - p_{\text{a},i}'}$. 
\end{lemma}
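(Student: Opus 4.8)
The plan is to verify both identities by direct computation, exploiting the fact that the randomized response acts as an affine map on success probabilities. First I would pin down $\sigma_i$ in closed form. Since $\tilde{\sigma}_i = \frac{s-1}{ms-s}\tilde{d}_i + \frac{1}{e^{\varepsilon'}-1}\frac{ms-1}{m-1}$ and the Laplace noise added in the degree sharing step has mean zero, we have $\mathbb{E}[\tilde{d}_i] = d_i$, so by linearity $\sigma_i = \frac{s-1}{s(m-1)}d_i + \frac{1}{e^{\varepsilon'}-1}\frac{ms-1}{m-1}$ (using $ms-s = s(m-1)$). This gives the concrete target value of $\sigma_i$ against which the right-hand side of the lemma must be matched.

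For the $\omega_i$ identity, I would first record that the randomized response with budget $\varepsilon'$ sends any success probability $p$ to $p' = \frac{(e^{\varepsilon'}-1)p+1}{1+e^{\varepsilon'}}$, an affine function of $p$. Consequently $p'_{\text{p},i}-p'_{\text{a},i} = \frac{e^{\varepsilon'}-1}{1+e^{\varepsilon'}}(p_{\text{p},i}-p_{\text{a},i})$, so the whole computation reduces to evaluating $p_{\text{p},i}-p_{\text{a},i}$. Substituting the two definitions, the $d_i$-dependent terms cancel, leaving $p_{\text{p},i}-p_{\text{a},i} = \frac{1}{s}\left(1 - \frac{s-1}{ms-1}\right) = \frac{m-1}{ms-1}$. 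Taking the reciprocal of $\frac{e^{\varepsilon'}-1}{1+e^{\varepsilon'}}\cdot\frac{m-1}{ms-1}$ then yields exactly $\omega_i = \frac{e^{\varepsilon'}+1}{e^{\varepsilon'}-1}\cdot\frac{ms-1}{m-1}$.

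For the $\sigma_i$ identity, I would rewrite $\frac{p'_{\text{a},i}}{p'_{\text{p},i}-p'_{\text{a},i}} = \omega_i\,p'_{\text{a},i}$ using the $\omega_i$ identity just proved, then expand $p'_{\text{a},i} = \frac{(e^{\varepsilon'}-1)p_{\text{a},i}+1}{1+e^{\varepsilon'}}$ with $p_{\text{a},i}=\frac{s-1}{s}\cdot\frac{d_i}{ms-1}$. The prefactor $\omega_i$ cancels the $1+e^{\varepsilon'}$ and exposes the factor $\frac{ms-1}{m-1}$; the contribution of the $+1$ reproduces $\frac{1}{e^{\varepsilon'}-1}\frac{ms-1}{m-1}$, while the contribution of $p_{\text{a},i}$ collapses to $\frac{s-1}{s(m-1)}d_i$ after the $ms-1$ factors cancel. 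This matches the closed form of $\sigma_i$ obtained in the first step, completing the proof.

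There is no genuine obstacle here: every step is an elementary algebraic manipulation. The only point demanding care is the cancellation of the $d_i$ terms in $p_{\text{p},i}-p_{\text{a},i}$, which is precisely what makes $\omega_i$ independent of the unknown low degree $d_i$, whereas $\sigma_i$ legitimately retains a $d_i$-linear term that the algorithm estimates via $\tilde{d}_i$. Keeping the identity $ms-s = s(m-1)$ in mind throughout is what lets the final expressions line up cleanly.
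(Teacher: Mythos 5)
Your proposal is correct and follows essentially the same route as the paper's proof: both exploit the affine action of randomized response to reduce $p'_{\text{p},i}-p'_{\text{a},i}$ to $\frac{e^{\varepsilon'}-1}{e^{\varepsilon'}+1}\left(p_{\text{p},i}-p_{\text{a},i}\right) = \frac{e^{\varepsilon'}-1}{e^{\varepsilon'}+1}\cdot\frac{m-1}{ms-1}$ (the $d_i$ terms cancelling), and then verify $\omega_i\,p'_{\text{a},i} = \frac{s-1}{s(m-1)}d_i + \frac{1}{e^{\varepsilon'}-1}\cdot\frac{ms-1}{m-1} = \sigma_i$ using $\mathbb{E}[\tilde{d}_i]=d_i$. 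Your explicit observation that the cancellation is exactly what makes $\omega_i$ independent of the private degree, while $\sigma_i$ legitimately retains a $d_i$-linear term estimated by $\tilde{d}_i$, is left implicit in the paper but adds nothing the computation does not already contain.
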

\begin{proof}
We have
\[
\frac{1}{p_{\text{p},i}' - p_{\text{a},i}'} 
= \frac{1}{\left(\frac{e^{\varepsilon'} - 1}{e^{\varepsilon'} + 1}\right)(p_{\text{p},i} - p_{\text{a},i})}
= \frac{1}{\left(\frac{e^{\varepsilon'} - 1}{e^{\varepsilon'} + 1}\right)\left(\frac{1}{s} - \frac{s - 1}{s} \cdot \frac{1}{ms - 1}\right)}.
\]

Simplifying the expression gives
\[
\frac{1}{p_{\text{p},i}' - p_{\text{a},i}'} 
= \frac{e^{\varepsilon'} + 1}{e^{\varepsilon'} - 1} \cdot \frac{ms - 1}{m - 1} 
=: \omega_i.
\]

On the other hand, we obtain
\[
\frac{p_{\text{a},i}'}{p_{\text{p},i}' - p_{\text{a},i}'} 
= \left( \frac{s - 1}{ms - s} \cdot d_i \right) + \left( \frac{1}{e^{\varepsilon'} - 1} \cdot \frac{ms - 1}{m - 1} \right)
= \left( \frac{s - 1}{ms - s} \cdot \mathbb{E}[\tilde{d}_i] \right) + \left( \frac{1}{e^{\varepsilon'} - 1} \cdot \frac{ms - 1}{m - 1} \right)
=: \sigma_i.
\]
\end{proof}

 We are now ready the proof the main theorem of this subsection.

\begin{theorem}
For all $j,k \in \{1, \dots, n\}$, the expected value of $\tilde{a}_{j,h_j(k)}$ equals $a_{j,k}$. \label{thm:bias}
\end{theorem}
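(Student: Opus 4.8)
The plan is to reduce the claim to the expressions for $\omega_j$ and $\sigma_j$ already established in Lemma~\ref{lem0}, so that essentially all of the combinatorial work is inherited from the probabilities $p'_{\text{p},j}$ and $p'_{\text{a},j}$ defined just before that lemma. Recall from Line~3 of Algorithm~\ref{algo:couting} that $\tilde{a}_{j,h_j(k)} = \omega_j\, a'_{j,h_j(k)} - \tilde{\sigma}_j$. First I would take the expectation of this identity. The coefficient $\omega_j$ is a deterministic constant depending only on $\varepsilon'$, $m$, and $s$, and $\tilde{\sigma}_j$ depends only on the noisy low degree $\tilde{d}_j$ released in Step~1, which is drawn with fresh Laplace noise, independently of the hashing, the representative sampling, and the randomized response that together produce $a'_{j,h_j(k)}$. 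Hence the expectation factors across the two terms and, writing $\sigma_j = \mathbb{E}[\tilde{\sigma}_j]$ as defined above, we obtain $\mathbb{E}[\tilde{a}_{j,h_j(k)}] = \omega_j\,\mathbb{E}[a'_{j,h_j(k)}] - \sigma_j$.

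Next I would evaluate $\mathbb{E}[a'_{j,h_j(k)}] = \Pr[a'_{j,h_j(k)} = 1]$ by splitting on the true edge value $a_{j,k}$. Since $j > k$ is fixed and the underlying graph is not random, $a_{j,k}$ is a deterministic element of $\{0,1\}$, so this is a case split rather than a genuine conditioning event. By the very definitions of $p'_{\text{p},j}$ and $p'_{\text{a},j}$, this probability equals $p'_{\text{p},j}$ when $a_{j,k} = 1$ and $p'_{\text{a},j}$ when $a_{j,k} = 0$; both cases are captured uniformly by $\mathbb{E}[a'_{j,h_j(k)}] = a_{j,k}\,p'_{\text{p},j} + (1 - a_{j,k})\,p'_{\text{a},j}$.

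Finally I would substitute $\omega_j = 1/(p'_{\text{p},j} - p'_{\text{a},j})$ and $\sigma_j = p'_{\text{a},j}/(p'_{\text{p},j} - p'_{\text{a},j})$ from Lemma~\ref{lem0} and simplify:
\[
\mathbb{E}[\tilde{a}_{j,h_j(k)}] = \frac{a_{j,k}\,p'_{\text{p},j} + (1 - a_{j,k})\,p'_{\text{a},j} - p'_{\text{a},j}}{p'_{\text{p},j} - p'_{\text{a},j}} = \frac{a_{j,k}\,(p'_{\text{p},j} - p'_{\text{a},j})}{p'_{\text{p},j} - p'_{\text{a},j}} = a_{j,k},
\]
which is the desired identity. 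Checking the two cases $a_{j,k} = 1$ and $a_{j,k} = 0$ separately yields $1$ and $0$ respectively, matching $a_{j,k}$.

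The genuinely delicate part of this argument lies not in the present theorem but upstream, in justifying the closed forms for $p_{\text{p},j}$ and $p_{\text{a},j}$ (hence for $p'_{\text{p},j}, p'_{\text{a},j}$) that I am taking as given: these rest on the pairwise independence of linear congruence hashing, namely that, conditioned on $k$ lying in the group indexed by $h_j(k)$, every other low-degree neighbor of $v_j$ falls in that same group with probability $\tfrac{s-1}{ms-1}$, after which uniform selection of the representative contributes the factor $1/s$. Within this theorem itself, the only point requiring care is the independence claim that lets the expectation factor as $\omega_j\,\mathbb{E}[a'_{j,h_j(k)}] - \sigma_j$; I would make this explicit by observing that $\tilde{d}_j$ is generated in a separate step with its own noise, so $\mathbb{E}[\tilde{\sigma}_j]$ is computed by replacing $\tilde{d}_j$ with $d_j$, exactly as used in Lemma~\ref{lem0}.
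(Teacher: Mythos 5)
Your proposal is correct and follows essentially the same route as the paper's proof: a case split on the deterministic value $a_{j,k} \in \{0,1\}$, writing $\mathbb{E}[\tilde{a}_{j,h_j(k)}] = \omega_j \mathbb{E}[a'_{j,h_j(k)}] - \sigma_j$, and closing via the identities $\omega_j = 1/(p'_{\text{p},j} - p'_{\text{a},j})$ and $\sigma_j = p'_{\text{a},j}/(p'_{\text{p},j} - p'_{\text{a},j})$ from Lemma~\ref{lem0}. One cosmetic remark: since $\tilde{a}_{j,h_j(k)}$ is a linear combination of $a'_{j,h_j(k)}$ and $\tilde{\sigma}_j$, linearity of expectation alone gives the factorization, so the independence of the degree-sharing noise that you carefully argue for is not actually needed at this step (though it is harmless, and the paper's own conditional decomposition tacitly uses it).
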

\begin{proof}
When \( a_{j,k} = 0 \), the expected value of \( \tilde{a}_{j,h_j(k)} \) is given by
\[
\mathbb{E}\left[\tilde{a}_{j,h_j(k)} \mid a_{j,k} = 0\right] 
= (\omega_j - \sigma_j) \cdot p'_{\text{a},j} - \sigma_j \cdot (1 - p'_{\text{a},j}) 
= \omega_j \cdot p'_{\text{a},j} - \sigma_j = 0.
\]

Similarly, when \( a_{j,k} = 1 \), we have
\[
\mathbb{E}\left[\tilde{a}_{j,h_j(k)} \mid a_{j,k} = 1\right] 
= (\omega_j - \sigma_j) \cdot p'_{\text{p},j} - \sigma_j \cdot (1 - p'_{\text{p},j}) 
= \omega_j \cdot p'_{\text{p},j} - \sigma_j = 1.
\]
\end{proof}

\subsection{Variance of \texorpdfstring{$\tilde{a}_{i,c_t}$}{a}}

The previous section demonstrated that our mechanism gives no bias for certain specific publications. In this subsection, we show that the variance of our mechanism is also quite minimal. 

We use the following lemmas in the analysis.
\begin{lemma}
    For any $j,k$, $\prob{a'_{j,h_j(k)} = 1} = \frac{a_{j,k} + \sigma_j}{\omega_j}.$ \label{lem1}
\end{lemma}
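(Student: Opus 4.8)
The plan is to compute $\Pr[a'_{j,h_j(k)} = 1]$ by conditioning on the value of the true bit $a_{j,k}$ and then showing that the resulting expression simplifies to $(a_{j,k} + \sigma_j)/\omega_j$. First I would write
\[
\Pr[a'_{j,h_j(k)} = 1]
= a_{j,k} \cdot p'_{\text{p},j} + (1 - a_{j,k}) \cdot p'_{\text{a},j},
\]
using the fact that $a_{j,k} \in \{0,1\}$, so the true bit acts as a selector between the two conditional probabilities $p'_{\text{p},j} = \Pr[a'_{j,h_j(k)} = 1 \mid a_{j,k} = 1]$ and $p'_{\text{a},j} = \Pr[a'_{j,h_j(k)} = 1 \mid a_{j,k} = 0]$ defined just before Lemma~\ref{lem0}. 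Rearranging, this equals $p'_{\text{a},j} + a_{j,k}(p'_{\text{p},j} - p'_{\text{a},j})$.

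Next I would invoke Lemma~\ref{lem0}, which gives the two identities $\omega_j = 1/(p'_{\text{p},j} - p'_{\text{a},j})$ and $\sigma_j = p'_{\text{a},j}/(p'_{\text{p},j} - p'_{\text{a},j})$. From these, $p'_{\text{p},j} - p'_{\text{a},j} = 1/\omega_j$ and $p'_{\text{a},j} = \sigma_j/\omega_j$. Substituting both into the expression above yields
\[
\Pr[a'_{j,h_j(k)} = 1] = \frac{\sigma_j}{\omega_j} + a_{j,k} \cdot \frac{1}{\omega_j} = \frac{a_{j,k} + \sigma_j}{\omega_j},
\]
which is exactly the claimed identity.

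This lemma is essentially a one-line restatement of Lemma~\ref{lem0} combined with the law of total probability, so I do not anticipate a genuine obstacle. The only point requiring a little care is the conditioning setup: the randomness in $a'_{j,h_j(k)}$ comes from two independent sources, namely the uniform choice of representative within the group (which determines the intermediate value $a_{j,h_j(k)}$ given the true adjacency list) and the subsequent randomized-response bit flip. The conditional probabilities $p'_{\text{p},j}$ and $p'_{\text{a},j}$ already fold both sources together, so the decomposition by $a_{j,k}$ is immediate; I would simply make sure the reader sees that $\sigma_j$ is being used in place of $\mathbb{E}[\tilde\sigma_j]$ consistently with the definition $\sigma_i := \mathbb{E}[\tilde\sigma_i]$ introduced before Lemma~\ref{lem0}. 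One subtlety worth flagging is that the definitions of $p_{\text{p},i}$, $p_{\text{a},i}$ (and hence their primed versions) depend on the low degree $d_i$, and $\sigma_j$ is defined via $\mathbb{E}[\tilde d_j] = d_j$; the proof of Lemma~\ref{lem0} already resolves this by equating $d_i$ with $\mathbb{E}[\tilde d_i]$, so no additional argument is needed here.
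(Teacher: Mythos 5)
Your proposal is correct and follows essentially the same route as the paper: both rely on Lemma~\ref{lem0} to obtain $p'_{\text{a},j} = \sigma_j/\omega_j$ and $p'_{\text{p},j} - p'_{\text{a},j} = 1/\omega_j$, and the paper's two-case evaluation ($\sigma_j/\omega_j$ when $a_{j,k}=0$, $(1+\sigma_j)/\omega_j$ when $a_{j,k}=1$) is just your selector formula $p'_{\text{a},j} + a_{j,k}\bigl(p'_{\text{p},j} - p'_{\text{a},j}\bigr)$ written out case by case. Your added remarks on the sources of randomness and on $\sigma_j = \mathbb{E}[\tilde{\sigma}_j]$ are accurate but not needed beyond what the paper already establishes.
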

\begin{proof}
We have
\[
\Pr\left[a'_{j,h_j(k)} = 1 \mid a_{j,k} = 0\right] = p'_{\text{a},j} = \frac{\sigma_j}{\omega_j}.
\]
On the other hand,
\[
\Pr\left[a'_{j,h_j(k)} = 1 \mid a_{j,k} = 1\right] = p'_{\text{p},j} = \frac{1 + \sigma_j}{\omega_j}.
\]
\end{proof}

In Lemma 2 and later, we work under the assumption that $d_i < m \approx n/s$.
This assumption is reasonable for a large class of graphs as the maximum degree can often be inherently limited by physical or practical constraints. For instance, if we consider a graph representing physical interactions over the past two weeks--relevant in contexts such as epidemic modeling--the number of interactions any individual can have is bounded and cannot approach the size of the global population. Another example is the case of Facebook that imposes a hard cap of 5,000 friends per user \cite{chandler2012what}.
Additionally, given that the objective of this article is to mitigate the increase in communication overhead caused by privacy measures, this assumption aligns well with our intended use-case. Notably, should $d_i$ exceed $m$, the communication cost associated with our method actually becomes lower than that of the corresponding non-private algorithm.
To better under the consequence on the accuracy when that assumption is violated, we conducted experiments on power-law graphs in Section~\ref{subsubsec:power-law}.

\begin{lemma}
    For all $i$ such that $d_i < m$, $\omega_i = O\left(\frac{s}{e^{\varepsilon_1} - 1}\right)$ and $\sigma_i = O\left(\frac{1}{e^{\varepsilon_1} - 1}\right)$.
    \label{lem:lem2}
\end{lemma}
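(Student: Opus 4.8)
The plan is to substitute the closed forms of $\omega_i$ and $\sigma_i$ obtained in Lemma~\ref{lem0} and then perform an asymptotic analysis under the standing assumption $d_i < m$. Recall that Lemma~\ref{lem0} gives
\[
\omega_i = \frac{e^{\varepsilon'} + 1}{e^{\varepsilon'} - 1} \cdot \frac{ms - 1}{m - 1}, \qquad
\sigma_i = \frac{s - 1}{ms - s} \cdot d_i + \frac{1}{e^{\varepsilon'} - 1} \cdot \frac{ms - 1}{m - 1}.
\]
My first step is to relate $\varepsilon'$ back to $\varepsilon_1$. Since the algorithm sets $\varepsilon' = \ln(1 + s(e^{\varepsilon_1} - 1))$, we have $e^{\varepsilon'} - 1 = s(e^{\varepsilon_1} - 1)$ exactly, and $e^{\varepsilon'} + 1 = 2 + s(e^{\varepsilon_1}-1)$. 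This is the key identity: the subsampling amplification converts the $e^{\varepsilon'}-1$ in the denominators into $s(e^{\varepsilon_1}-1)$, which is precisely where the factor $s$ in the numerator of the target bound comes from.

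For $\omega_i$, I would substitute $e^{\varepsilon'} - 1 = s(e^{\varepsilon_1} - 1)$ to get
\[
\omega_i = \frac{e^{\varepsilon'} + 1}{s(e^{\varepsilon_1} - 1)} \cdot \frac{ms - 1}{m - 1}.
\]
It then remains to argue that the remaining factors are $O(s)$, so that the whole expression is $O\!\left(\frac{s}{e^{\varepsilon_1}-1}\right)$ after the $s$ in the denominator is accounted for. The factor $\frac{ms-1}{m-1}$ is asymptotically $s$ (more precisely it lies between $s$ and $s + O(s/m)$ for $m \geq 2$), and $e^{\varepsilon'} + 1 = 2 + s(e^{\varepsilon_1}-1) = O(s)$ for fixed $\varepsilon_1$. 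Combining, $\omega_i = \frac{O(s)}{s(e^{\varepsilon_1}-1)}\cdot O(s) = O\!\left(\frac{s}{e^{\varepsilon_1}-1}\right)$, which is the claimed bound. I would be slightly careful to state whether the hidden constants depend on $\varepsilon_1$ or on how $m$ and $s$ scale with $n$; the cleanest reading treats $\varepsilon_1$ as fixed and $m,s \to \infty$.

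For $\sigma_i$, the second summand is exactly $\frac{1}{e^{\varepsilon'}-1}\cdot\frac{ms-1}{m-1} = \frac{1}{s(e^{\varepsilon_1}-1)}\cdot\frac{ms-1}{m-1} = O\!\left(\frac{1}{e^{\varepsilon_1}-1}\right)$ by the same estimate $\frac{ms-1}{m-1}=O(s)$. The first summand $\frac{s-1}{ms-s}\,d_i = \frac{d_i}{m}$ is where the hypothesis $d_i < m$ enters: it forces this term to be $O(1)$, hence at most of the same order as the second summand (and in fact negligible compared to $\frac{1}{e^{\varepsilon_1}-1}$ when $\varepsilon_1$ is small). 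Adding the two summands gives $\sigma_i = O\!\left(\frac{1}{e^{\varepsilon_1}-1}\right)$. The main obstacle is purely bookkeeping rather than conceptual: one must pin down the asymptotic regime precisely so that $\frac{ms-1}{m-1} = \Theta(s)$ and $d_i/m = O(1)$ are both legitimate, and verify that the $d_i/m$ term does not dominate. Once the assumption $d_i < m$ is invoked, the whole proof reduces to the substitution $e^{\varepsilon'}-1 = s(e^{\varepsilon_1}-1)$ followed by elementary order-of-magnitude estimates.
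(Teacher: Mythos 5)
Your proof is correct and takes essentially the same route as the paper's: both substitute the exact identity $e^{\varepsilon'}-1=s(e^{\varepsilon_1}-1)$ into the closed forms from Lemma~\ref{lem0}, bound $\frac{ms-1}{m-1}=O(s)$, and invoke $d_i<m$ to bound the degree term of $\sigma_i$ by a constant. One small nit: $\frac{s-1}{ms-s}\,d_i$ equals $\frac{s-1}{s}\cdot\frac{d_i}{m-1}$, not exactly $\frac{d_i}{m}$, but your $O(1)$ conclusion stands, and your explicit caveat that the hidden constants require $\varepsilon_1$ bounded is actually more careful than the paper, which assumes this regime implicitly.
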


\begin{proof}
From the relation $\varepsilon' = \ln(1 + s(e^{\varepsilon_1} - 1))$, it follows that
\[
e^{\varepsilon'} - 1 = s(e^{\varepsilon_1} - 1).
\]
Therefore, we obtain
\[
\omega_i = \frac{2 + s(e^{\varepsilon_1} - 1)}{s(e^{\varepsilon_1} - 1)} \cdot \frac{ms - 1}{m - 1} 
= O\left(\frac{s}{e^{\varepsilon_1} - 1}\right).
\]
Moreover, under the assumption that $d_i < m$, we have
\[
\sigma_i \leq \frac{1}{e^{\varepsilon'} - 1} \cdot \frac{ms - 1}{m - 1} + 1 
= \frac{1}{s(e^{\varepsilon_1} - 1)} \cdot \frac{ms - 1}{m - 1} + 1
\leq 1 + \frac{m}{m - 1} \cdot \frac{1}{e^{\varepsilon_1} - 1}
= O\left(\frac{1}{e^{\varepsilon_1} - 1}\right).
\]
\end{proof}

We consider the variance of the variable $\tilde{a}_{j, h_j(k)}$ in the next theorem.
\begin{theorem} For all $j$ such that $d_j < m$,
    the variance of $\tilde{a}_{j, h_j(k)}$ is $O\left(\frac{s}{(e^{\varepsilon_1} - 1)^2} + \frac{s^2}{n^2 \varepsilon_0^2}\right)$. \label{thm:variance}
\end{theorem}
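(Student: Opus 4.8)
The plan is to isolate the two independent sources of randomness in $\tilde{a}_{j,h_j(k)} = \omega_j \cdot a'_{j,h_j(k)} - \tilde{\sigma}_j$ and handle each separately. The first is the Bernoulli variable $a'_{j,h_j(k)}$, whose randomness stems from the group sampling and randomized response of Step~2; the second is the Laplacian noise in $\tilde{d}_j = d_j + \Lap(1/\varepsilon_0)$ from the degree sharing step, which enters through $\tilde{\sigma}_j$. Crucially, the multiplier $\omega_j$ depends only on the public parameters $s,m,\varepsilon'$ and is therefore deterministic. I would write $\tilde{\sigma}_j = \sigma_j + \frac{s-1}{s(m-1)}L$ with $L \sim \Lap(1/\varepsilon_0)$ and $\sigma_j = \E{\tilde{\sigma}_j}$, so that $\tilde{a}_{j,h_j(k)} = \omega_j a'_{j,h_j(k)} - \sigma_j - \frac{s-1}{s(m-1)}L$. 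Since $L$ (Step~1) is independent of $a'_{j,h_j(k)}$ (Step~2), the variance splits as
\[
\Var{\tilde{a}_{j,h_j(k)}} = \omega_j^2 \,\Var{a'_{j,h_j(k)}} + \left(\frac{s-1}{s(m-1)}\right)^2 \Var{L}.
\]

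For the first term, I would use that $a'_{j,h_j(k)}$ is Bernoulli with parameter $p = \frac{a_{j,k}+\sigma_j}{\omega_j}$ by Lemma~\ref{lem1}, hence $\Var{a'_{j,h_j(k)}} = p(1-p) \leq p$, giving $\omega_j^2 \Var{a'_{j,h_j(k)}} \leq \omega_j(a_{j,k}+\sigma_j)$. Invoking Lemma~\ref{lem:lem2} (applicable since $d_j < m$), we have $\omega_j = O\!\left(\frac{s}{e^{\varepsilon_1}-1}\right)$ and $\sigma_j = O\!\left(\frac{1}{e^{\varepsilon_1}-1}\right)$, and since $a_{j,k}\in\{0,1\}$ is $O(1) = O\!\left(\frac{1}{e^{\varepsilon_1}-1}\right)$ in the small-budget regime, the product is of order $O\!\left(\frac{s}{(e^{\varepsilon_1}-1)^2}\right)$.

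For the second term, I would substitute $\Var{L} = 2/\varepsilon_0^2$ and use $m = \lceil p/s \rceil \approx n/s$ to get $\frac{s-1}{s(m-1)} = O(1/m) = O(s/n)$; squaring and multiplying yields order $O\!\left(\frac{s^2}{n^2\varepsilon_0^2}\right)$. Adding the two contributions produces exactly the claimed bound $O\!\left(\frac{s}{(e^{\varepsilon_1} - 1)^2} + \frac{s^2}{n^2 \varepsilon_0^2}\right)$.

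The hard part will be justifying the variance decomposition cleanly: keeping careful track of which quantities are random ($a'_{j,h_j(k)}$ and $L$) versus deterministic ($\omega_j$ together with the coefficients $\sigma_j$ and $\frac{s-1}{s(m-1)}$), and verifying that the two noise sources are genuinely independent so that no cross term survives. Once this is settled, the remainder is routine asymptotic bookkeeping, whose only subtlety is confirming $a_{j,k}+\sigma_j = O\!\left(\frac{1}{e^{\varepsilon_1}-1}\right)$ in the privacy regime of interest.
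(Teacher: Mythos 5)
Your proposal is correct and takes essentially the same route as the paper's own proof: the identical decomposition $\Var{\tilde{a}_{j,h_j(k)}} = \omega_j^2 \Var{a'_{j,h_j(k)}} + \Var{\tilde{\sigma}_j}$ justified by independence of the Step-1 Laplace noise and the Step-2 randomness, with Lemma~\ref{lem1} giving the Bernoulli parameter and Lemma~\ref{lem:lem2} (under $d_j < m$) supplying the asymptotics for $\omega_j$ and $\sigma_j$, plus the same $\left(\frac{s-1}{s(m-1)}\right)^2 \cdot \frac{2}{\varepsilon_0^2} = O\left(\frac{s^2}{n^2\varepsilon_0^2}\right)$ bound on the Laplacian term. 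Your explicit flag that absorbing $a_{j,k} + \sigma_j$ into $O\left(\frac{1}{e^{\varepsilon_1}-1}\right)$ requires the constant-budget regime $\varepsilon_1 = O(1)$ is the same convention the paper leaves implicit when it passes from $\omega_j(1+\sigma_j)$ to the stated bound.
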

\begin{proof}
We begin by analyzing the variance of the binary variable $a'_{j,h_j(k)}$. Since it takes values in $\{0,1\}$, we apply Lemma~\ref{lem1} to obtain
\[
\Var{a'_{j,h_j(k)}} = \Pr\left[a'_{j,h_j(k)} = 1\right] \cdot \left(1 - \Pr\left[a'_{j,h_j(k)} = 1\right]\right)
= \left( \frac{a_{j,k} + \sigma_j}{\omega_j} \right) \cdot \left(1 - \frac{a_{j,k} + \sigma_j}{\omega_j} \right)
\leq \frac{1 + \sigma_j}{\omega_j}.
\]

Next, using Lemma~\ref{lem:lem2}, we bound the variance of the estimator $\tilde{a}_{j,h_j(k)}$ as follows:
\[
\Var{\tilde{a}_{j,h_j(k)}} = \Var{\omega_j a'_{j,h_j(k)} + \tilde{\sigma}_i} 
= \omega_j^2 \Var{a'_{j,h_j(k)}} + \Var{\tilde{\sigma}_i}
\leq \omega_j (1 + \sigma_j) + \left(\frac{s - 1}{ms - s}\right)^2 \cdot \frac{2}{\varepsilon_0^2}.
\]

Therefore, we conclude, using the assumption that for all $j$, $d_j < m$ and Lemma~\ref{lem:lem2}, that
\[
\Var{\tilde{a}_{j,h_j(k)}} = O\left(\frac{s}{(e^{\varepsilon_1} - 1)^2} + \frac{s^2}{n^2 \varepsilon_0^2}\right).
\]
\end{proof}
It follows from Theorem \ref{thm:covariance} and subsequent theorems that the variance component associated with $\varepsilon_1$ dominates the component associated with $\varepsilon_0$. The $\varepsilon_0$-dependent term scales as $1/n^2$ and is therefore negligible on large networks. Under a fixed privacy budget $\varepsilon=\varepsilon_0+\varepsilon_1+\varepsilon_2$, this suggests setting $\varepsilon_0 \ll \varepsilon_1$ to minimize variance. On the other hand, the sensitivity of the counting step—and the error introduced at Line~5 (Laplace or smooth sensitivity)—can be large for certain networks; accordingly, we recommend choosing $\varepsilon_1 \approx \varepsilon_2$.

In Section \ref{sec:communication}, we will elaborate that $\tilde{a}_{j,h_j(k)}$ can cut the communication cost by a factor of $s^2$ in contrast to using the direct randomized response results $a'_{j,k}$. Given that the variance of $a'_{j,k}$ is known to be $\Theta\left(\frac{1}{(e^{\epsilon_1} - 1)^2} \right)$, the previous theorem suggests that, while we manage to decrease the communication cost by a factor of $s^2$, this is at the expense of amplifying the variance by a factor of $s$. In the upcoming section, we will demonstrate that, for a set communication cost, this mechanism yields a considerably reduced variance.

\subsection{Covariance of Variables}
\label{subsec54}

In situations where the goal is to sum up specific estimators, like in subgraph counting, a covariance emerges between these estimators. The size of this covariance is addressed in the subsequent theorem.
\begin{theorem}
    Assume $d_i < m$ for all $i$. The covariance of the estimators $\tilde{a}_{j,h_j(k)}$ and $\tilde{a}_{j',h_{j'}(k')}$ is in $\bigo{\frac{s^2}{n (e^{\varepsilon_1} - 1)^2}}$ when $j = j'$ and $k,k' < j$. Otherwise, the covariance of the two estimators is zero.
    \label{thm:covariance}
\end{theorem}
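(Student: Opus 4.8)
The plan is to split along the dichotomy in the statement and to treat the nonzero regime by conditioning on whether the two second indices collide under the hash. Throughout I assume (as in the rest of the section) that $j>k$, $j'>k'$, so that the estimators are the valid ones published by users $j$ and $j'$; the degenerate case $j=j'$, $k=k'$ is the variance already covered by Theorem~\ref{thm:variance}, so below I take $k\ne k'$ with $k,k'<j$.

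The zero case is a pure independence argument. When $j\ne j'$, the estimator $\tilde a_{j,h_j(k)}$ is a function only of user $j$'s hash coefficients $(\theta_j,\phi_j)$, the representative selections inside user $j$'s groups, the randomized response performed by user $j$, and the Laplacian noise in $\tilde d_j$; the estimator $\tilde a_{j',h_{j'}(k')}$ depends only on the analogous quantities for user $j'$. Since all of these are drawn independently across users, the two estimators are independent and $\Cov(\tilde a_{j,h_j(k)},\tilde a_{j',h_{j'}(k')})=0$.

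For the nonzero regime ($j=j'$) I would condition on the collision event $C=\{h_j(k)=h_j(k')\}$. First I would bound $\prob{C}$: because the map $x\mapsto(\theta_j x+\phi_j)\bmod p$ sends $(k,k')$ to a uniformly random ordered pair of \emph{distinct} residues of $\mathbb{Z}_p$, a collision occurs exactly when these two residues are congruent modulo $m$; counting distinct pairs within each residue class, together with $p>n$ and $m=\lceil p/s\rceil$, yields $\prob{C}=\bigo{1/m}=\bigo{s/n}$. This near-universality of the linear congruence family is precisely what lets the collision probability supply the factor $s/n$. I would then expand $\Cov(\tilde a_{j,h_j(k)},\tilde a_{j,h_j(k')})$ by the law of total covariance with respect to $C$. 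On $\bar C$ the two estimators read off different groups, so the representative bits $a'_{j,\cdot}$ are independent; writing $\tilde a=\omega_j a'_{j,\cdot}-\tilde\sigma_j$ and using that $\tilde\sigma_j$ is independent of the selection and the randomized response, every cross term cancels and only the shared Laplacian term survives, leaving conditional covariance $\Var{\tilde\sigma_j}=\bigo{s^2/(n^2\varepsilon_0^2)}$. On $C$ the two estimators coincide, so the contribution is $\prob{C}\cdot\E{\tilde a_{j,h_j(k)}^2\mid C}$, whose variance part is $\bigo{s/(e^{\varepsilon_1}-1)^2}$ by Theorem~\ref{thm:variance}; multiplying by $\prob{C}=\bigo{s/n}$ produces the advertised $\bigo{s^2/(n(e^{\varepsilon_1}-1)^2)}$, which dominates the $\bar C$ term by a factor of $n$.

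The main obstacle I anticipate is controlling the conditional-mean-shift corrections arising from $\E{\tilde a\mid C}\ne\E{\tilde a\mid\bar C}$. Naively bounding this shift by the full range $\bigo{s/(e^{\varepsilon_1}-1)}$ of $\tilde a$ would generate a spurious term of order $\bigo{s^3/(n(e^{\varepsilon_1}-1)^2)}$, exceeding the claimed bound. The required observation is that swapping whether $k'$ belongs to the same group changes the probability that the selected representative bit equals one by only $\bigo{1/s}$, so the induced shift in $\E{\tilde a\mid C}$ is $\omega_j\cdot\bigo{1/s}=\bigo{1/(e^{\varepsilon_1}-1)}$; the factor-$1/s$ attenuation of a single index inside a group of size $s$ exactly compensates the factor-$s$ growth of $\omega_j$. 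With this, all correction terms are of order at most $\bigo{s/(n(e^{\varepsilon_1}-1)^2)}$ and are dominated by the collision-variance term. Making this cancellation precise, together with the clean estimate $\prob{C}=\bigo{s/n}$, is the crux of the argument.
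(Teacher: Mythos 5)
Your proposal is correct and reaches the paper's bound by the same core mechanism --- conditioning on the hash-collision event $C^{(j)}_{k,k'}$, whose probability $\frac{s-1}{ms-1} = \bigo{s/n}$ you rederive from the pairwise uniformity of the linear congruential map (matching the paper's exact value), and charging the collision branch with $\prob{C^{(j)}_{k,k'}} \cdot \bigo{\omega_j(1+\sigma_j)} = \bigo{s^2/(n(e^{\varepsilon_1}-1)^2)}$ --- but your bookkeeping differs genuinely from the paper's. The paper expands $\Cov\left(\tilde{a}_{j,h_j(k)}, \tilde{a}_{j,h_j(k')}\right) = \omega_j^2\left(\E{a'_{j,h_j(k)}a'_{j,h_j(k')}} - \E{a'_{j,h_j(k)}}\E{a'_{j,h_j(k')}}\right)$ and disposes of the non-collision branch multiplicatively: it shows $p_k = \prob{a_{j,h_j(k)}=1 \mid \overline{C^{(j)}_{k,k'}}} \le \frac{ms-1}{ms-2}\prob{a_{j,h_j(k)}=1}$, so that $\prob{\overline{C^{(j)}_{k,k'}}}\,\E{a'_{j,h_j(k)}a'_{j,h_j(k')} \mid \overline{C^{(j)}_{k,k'}}} \le \E{a'_{j,h_j(k)}}\E{a'_{j,h_j(k')}}$ for $s \ge 3$, and the conditional-mean shift never appears as a separate object. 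You instead use the law of total covariance, which surfaces the shift $\E{\tilde{a}_{j,h_j(k)} \mid C^{(j)}_{k,k'}} - \E{\tilde{a}_{j,h_j(k)} \mid \overline{C^{(j)}_{k,k'}}}$ explicitly; your key observation --- that moving $k'$ in or out of $k$'s bin perturbs $\prob{a_{j,h_j(k)}=1}$ by only $\bigo{1/s}$, so the shift is $\omega_j \cdot \bigo{1/s} = \bigo{1/(e^{\varepsilon_1}-1)}$ --- is exactly the cancellation the paper's $\left(\frac{ms-1}{ms-2}\right)^2$ factor encodes in multiplicative form, and both routes correctly avoid the spurious $s^3$ term you flag. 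Your route buys two things the paper glosses: you track the shared Laplacian noise in $\tilde\sigma_j$ (contributing $\Var{\tilde\sigma_j} = \bigo{s^2/(n^2\varepsilon_0^2)}$, dominated by a factor of $n$), which the paper's first display silently drops by treating $\tilde\sigma_j$ as nonrandom, and your explicit restriction to $k \ne k'$ repairs a reading on which the stated bound would fail, since at $k = k'$ the covariance is the variance $\bigo{s/(e^{\varepsilon_1}-1)^2}$ of Theorem~\ref{thm:variance}. Two points to tighten. First, conditional on $\overline{C^{(j)}_{k,k'}}$ the two representative bits are not exactly independent: the bin contents are jointly constrained by the partition, so the neighbor counts in the two bins are negatively correlated; this only pushes the conditional covariance below $\Var{\tilde\sigma_j}$, so your bound survives, and the paper's own product formula makes the same idealization. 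Second, on the collision branch you cite the unconditional Theorem~\ref{thm:variance} for the second moment, but the conditional law given the collision differs slightly; the clean fix is the paper's step $p'_k \le \prob{a_{j,h_j(k)}=1} + \prob{a_{j,h_j(k')}=1}$ combined with Lemma~\ref{lem1}, which yields the needed $\E{a'_{j,h_j(k)}a'_{j,h_j(k')} \mid C^{(j)}_{k,k'}} \le 2(1+\sigma_j)/\omega_j$ in two lines.
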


\begin{proof}
When $j \neq j'$, the two estimators are computed from disjoint sets of random variables and are therefore independent.

When $j = j'$, the estimators $\tilde{a}_{j,h_j(k)}$ and $\tilde{a}_{j,h_j(k')}$ are identical if $k$ and $k'$ fall into the same bin, i.e., $h_j(k) = h_j(k')$. We denote this event by $C^{(j)}_{k,k'}$, which occurs with probability
\[
\Pr\left(C^{(j)}_{k,k'}\right) = \frac{s - 1}{ms - 1}.
\]
As a result, for $k, k' < j$,
\[
\Cov\left(\tilde{a}_{j,h_j(k)}, \tilde{a}_{j,h_j(k')}\right)
= \omega_j^2 \cdot \Cov\left(a'_{j,h_j(k)}, a'_{j,h_j(k')}\right)
= \omega_j^2 \left( \mathbb{E}\left[a'_{j,h_j(k)} a'_{j,h_j(k')}\right]
- \mathbb{E}\left[a'_{j,h_j(k)}\right] \mathbb{E}\left[a'_{j,h_j(k')}\right] \right).
\]

We consider the conditional expectations $\mathbb{E}\left[a'_{j,h_j(k)} a'_{j,h_j(k')} \mid C^{(j)}_{k,k'}\right]$ and $\mathbb{E}\left[a'_{j,h_j(k)} a'_{j,h_j(k')} \mid \overline{C^{(j)}_{k,k'}}\right]$.

Recall from the previous section that
\[
\Pr\left[a_{j,h_j(k)} = 1\right] = \frac{s-1}{s} \cdot \frac{d_j}{ms - 1} + a_{j,k} \left( \frac{1}{s} - \frac{1}{ms - 1} \right).
\]
Multiplying both sides by $\frac{ms - 1}{ms - 2}$ yields
\[
\frac{ms - 1}{ms - 2} \cdot \Pr\left[a_{j,h_j(k)} = 1\right]
= \frac{s - 1}{s} \cdot \frac{d_j}{ms - 2} + a_{j,k} \left( \frac{ms - 1}{s(ms - 2)} - \frac{1}{ms - 2} \right)
\geq \frac{s - 1}{s} \cdot \frac{d_j - a_{j,k}}{ms - 2} + \frac{a_{j,k}}{s}.
\]

Let $p_k = \Pr\left[a_{j,h_j(k)} = 1 \mid \overline{C^{(j)}_{k,k'}}\right]$. Then,
\[
p_k = \frac{a_{j,k}}{s} + \frac{s - 1}{s} \cdot \frac{d_j - a_{j,k} - a_{j,k'}}{ms - 2}
\leq \frac{ms - 1}{ms - 2} \cdot \Pr\left[a_{j,h_j(k)} = 1\right].
\]

It follows that
\begin{align*}
\Pr\left[a'_{j,h_j(k)} = 1 \text{ and } a'_{j,h_j(k')} = 1 \mid \overline{C^{(j)}_{k,k'}}\right]
&= \left( p_k \cdot \frac{e^{\varepsilon'} - 1}{e^{\varepsilon'} + 1} + \frac{1 - p_k}{e^{\varepsilon'} + 1} \right)
\times \left( p_{k'} \cdot \frac{e^{\varepsilon'} - 1}{e^{\varepsilon'} + 1} + \frac{1 - p_{k'}}{e^{\varepsilon'} + 1} \right) \\
&\leq \left( \frac{ms - 1}{ms - 2} \right)^2 \mathbb{E}\left[a'_{j,h_j(k)}\right] \mathbb{E}\left[a'_{j,h_j(k')}\right].
\end{align*}

Hence, for $s \geq 3$ (which is always satisfied in practice),
\[
\Pr\left[\overline{C^{(j)}_{k,k'}}\right] \cdot \mathbb{E}\left[a'_{j,h_j(k)} a'_{j,h_j(k')} \mid \overline{C^{(j)}_{k,k'}}\right]
- \mathbb{E}\left[a'_{j,h_j(k)}\right] \mathbb{E}\left[a'_{j,h_j(k')}\right] \leq 0.
\]

Thus,
\[
\Cov\left(\tilde{a}_{j,h_j(k)}, \tilde{a}_{j,h_j(k')}\right)
\leq \omega_j^2 \cdot \mathbb{E}\left[a'_{j,h_j(k)} a'_{j,h_j(k')} \mid C^{(j)}_{k,k'}\right] \cdot \Pr\left(C^{(j)}_{k,k'}\right).
\]

Let $p_k' = \Pr\left[a_{j,h_j(k)} = 1 \mid C^{(j)}_{k,k'}\right]$. Then,
\[
p_k' = \frac{a_{j,k} + a_{j,k'}}{s} + \frac{s - 2}{s} \cdot \frac{d_j - a_{j,k} - a_{j,k'}}{ms - 2}
\leq \Pr\left[a_{j,h_j(k)} = 1\right] + \Pr\left[a_{j,h_j(k')} = 1\right].
\]

It follows that
\[
\mathbb{E}\left[a'_{j,h_j(k)} a'_{j,h_j(k')} \mid C^{(j)}_{k,k'}\right]
\leq \mathbb{E}\left[a'_{j,h_j(k)}\right] + \mathbb{E}\left[a'_{j,h_j(k')}\right].
\]

Using Lemma~\ref{lem1} and $\Pr\left(C^{(j)}_{k,k'}\right) = \frac{s - 1}{ms - 1}$, we obtain
\[
\Cov\left(\tilde{a}_{j,h_j(k)}, \tilde{a}_{j,h_j(k')}\right)
\leq 2 \cdot \frac{s - 1}{ms - 1} \cdot \omega_j (1 + \sigma_j)
= O\left( \frac{s^2}{n (e^{\varepsilon_1} - 1)^2} \right).
\]
\end{proof}

When \( s \ll n \), the covariance in the previous theorem is significantly smaller than the variance calculated in Theorem \ref{thm:variance}. While the variance of the summation includes more covariance terms than variance terms, we will demonstrate in the next section that the contribution from the covariance terms is not substantially larger than the that of the variance terms in Theorem \ref{thm:variance}.

\subsection{Communication Cost}
\label{sec:communication}

The final aspect we will examine regarding the mechanism's efficiency is the communication needed for graph aggregation and distribution to each user. We will particularly look at two metrics: the ``download cost'' (the average number of bits the server transmits to each user) and the ``upload cost'' (the average number of bits each user sends to the server). Our analysis will offer approximate values under the assumption that the graph is sparse, meaning the number of edges ($|E|$) is much less than the square of the number of vertices ($|V|^2$). 

From our discussion in the previous section, it is evident that the majority of the upload cost arises from the randomized response step, while the download cost primarily stems from the counting step. Therefore, in this analysis, we will disregard the costs associated with the other steps.

\begin{theorem}
The upload cost of user $v_i$ in our mechanism is $O\left(\left(\frac{d_i}{s} + \frac{n}{s^2 (e^{\varepsilon_1} - 1)}\right) \log{\frac{n}{s}}\right).$
\end{theorem}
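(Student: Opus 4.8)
The plan is to count the number of bits each user $v_i$ uploads during the group randomized response step (Algorithm~\ref{algo:grouprr}, Line~7), since we have agreed to disregard the upload cost of the other two steps. The user transmits the obfuscated vector $(a'_{i,c_1}, \dots, a'_{i,c_m})$, but because this vector is sparse, it is cheaper to transmit only the indices $t$ for which $a'_{i,c_t} = 1$. Each such index lives in $[|1,m|]$ and therefore requires $\Theta(\log m) = \Theta(\log(n/s))$ bits to encode, using $m = \lceil p/s \rceil = \Theta(n/s)$. So the upload cost is the expected number of ones, multiplied by $\log(n/s)$.

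The core of the argument is thus to bound $\mathbb{E}\bigl[\sum_{t=1}^m a'_{i,c_t}\bigr]$. I would split the $m$ representative values into two groups according to whether the underlying true entry is one or zero. By the condition enforced in Line~6 ($a_{i,c_t} \leftarrow 0$ when $c_t > i$), the relevant true connections are the $d_i$ low-degree neighbors. First I would handle the ``true-one'' contribution: a given neighbor is selected as its group's representative with probability $1/s$, after which the randomized response keeps it as a one with probability close to $1$; summing over the $d_i$ neighbors contributes roughly $d_i/s$ expected ones. Second, the ``true-zero'' contribution comes from the roughly $m$ groups whose representative carries a true zero, each flipped to one with probability $\tfrac{1}{1+e^{\varepsilon'}}$. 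Using $\varepsilon' = \ln(1 + s(e^{\varepsilon_1}-1))$ so that $e^{\varepsilon'} - 1 = s(e^{\varepsilon_1}-1)$, this flip probability is $\Theta\!\bigl(1/(s(e^{\varepsilon_1}-1))\bigr)$, and multiplying by $m = \Theta(n/s)$ gives a contribution of $\Theta\!\bigl(\tfrac{n}{s^2(e^{\varepsilon_1}-1)}\bigr)$.

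Adding the two expected counts yields $\mathbb{E}\bigl[\sum_t a'_{i,c_t}\bigr] = O\!\bigl(\tfrac{d_i}{s} + \tfrac{n}{s^2(e^{\varepsilon_1}-1)}\bigr)$, and multiplying by the per-index cost $\log(n/s)$ delivers the claimed bound. I would lean on the amplification identity $e^{\varepsilon'}-1 = s(e^{\varepsilon_1}-1)$ exactly as it was used in Lemma~\ref{lem:lem2}, since it is what converts the naive $\Theta(n/s)$ flipped zeros of a budget-$\varepsilon'$ randomized response into the much smaller $\Theta(n/(s^2(e^{\varepsilon_1}-1)))$ term — this is precisely where the promised $s^2$ communication saving enters.

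The main obstacle is being careful about the ``true-zero'' count: I must confirm that the number of groups whose representative is a genuine zero is $\Theta(m)$ rather than something smaller, which relies on the sparsity assumption $d_i \ll m$ (so that at most $d_i$ of the $m$ groups can possibly hold a true-one representative). I also want to make sure the $1/(1+e^{\varepsilon'})$ flip probability is correctly approximated as $\Theta(1/e^{\varepsilon'}) = \Theta(1/(s(e^{\varepsilon_1}-1)))$ in the regime of interest, and that rounding effects in $m = \lceil p/s\rceil$ and the dummy-element padding do not change the asymptotics. None of these are deep, but the bookkeeping that keeps the two contributions cleanly separated is where a sloppy constant could creep in.
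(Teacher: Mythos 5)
Your proposal is correct and follows essentially the same route as the paper: transmit only the index set $\{t : a'_{i,c_t}=1\}$ at $\log m \approx \log(n/s)$ bits per index, then bound the expected number of ones by splitting into the true-one contribution $\approx d_i/s$ and the flipped-zero contribution $\approx m \cdot \frac{1}{1+e^{\varepsilon'}} = O\left(\frac{n}{s^2(e^{\varepsilon_1}-1)}\right)$ via the identity $e^{\varepsilon'}-1 = s(e^{\varepsilon_1}-1)$. One small note: since the claim is only an upper bound, you do not need the true-zero group count to be $\Theta(m)$ (nor the sparsity assumption $d_i \ll m$) --- bounding it by $m$ suffices, exactly as the paper's single-line expectation computation does.
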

\begin{proof}
Recall that, in the randomized response step, each user $v_i$ submits a binary vector $[a'_{i,c_1}, \dots, a'_{i,c_m}] \in \{0, 1\}^m$, where $m \approx n/s$. To reduce communication cost, the user may instead transmit only the index set $\{t : a'_{i,c_t} = 1\} \subseteq \{1, \dots, m\}$. If $\ell$ entries in the vector are equal to one, the communication cost is $\ell \cdot \log m \approx \ell \cdot \log(n/s)$.

We now analyze the expected value of $\ell$. Since each $a_{i,j}$ is sampled into the vector with probability $1/s$, and noting that $e^{\varepsilon'} - 1 = s(e^{\varepsilon_1} - 1)$, we obtain:
\[
\mathbb{E}[\ell] = \left( \frac{e^{\varepsilon'}}{1 + e^{\varepsilon'}} \cdot d_i + \frac{1}{1 + e^{\varepsilon'}} \cdot (n - d_i) \right) \cdot \frac{1}{s}
\leq \frac{d_i}{s} + \frac{1}{s^2(e^{\varepsilon_1} - 1)} \cdot (n - d_i)
= O\left( \frac{d_i}{s} + \frac{n}{s^2 (e^{\varepsilon_1} - 1)} \right).
\]
\end{proof}
Next, we consider the download cost for our mechanism.
\begin{corollary}
The download cost for all users is $O\left(\log{\frac{n}{s}} \left(\frac{|E|}{s} + \frac{n^2}{s^2 (e^{\varepsilon_1} - 1)}\right)\right).$   
\end{corollary}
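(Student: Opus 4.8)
The plan is to obtain the download cost as a direct aggregation of the per-user upload costs established in the preceding theorem. The key observation is that the bits each user downloads in the counting step are precisely the obfuscated representative values broadcast by all users in the randomized response step: each published vector $[a'_{i,c_1}, \dots, a'_{i,c_m}]$ is made available to every participant, and the number of bits needed to transmit it in sparse form equals the per-user upload cost already bounded. The total download cost for all users is therefore the sum over $i$ of the upload cost of user $v_i$, and I would simply invoke the previous theorem and sum.

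Concretely, I would first note that summing the bound $O\left(\left(\frac{d_i}{s} + \frac{n}{s^2(e^{\varepsilon_1}-1)}\right)\log\frac{n}{s}\right)$ over all $n$ users factors the common $\log\frac{n}{s}$ term out of the summation. I would then evaluate the two resulting sums separately: the term $\sum_i d_i/s$ collapses to $\frac{1}{s}\sum_i d_i$, and since $d_i$ denotes the low degree (connections to lower-indexed nodes), $\sum_i d_i$ counts each edge exactly once, giving $|E|$; hence this contributes $|E|/s$. The second term $\sum_i \frac{n}{s^2(e^{\varepsilon_1}-1)}$ has no dependence on $i$, so it sums to $\frac{n^2}{s^2(e^{\varepsilon_1}-1)}$. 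Combining these yields exactly the claimed bound $O\left(\log\frac{n}{s}\left(\frac{|E|}{s} + \frac{n^2}{s^2(e^{\varepsilon_1}-1)}\right)\right)$.

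The main obstacle, such as it is, lies not in the arithmetic but in justifying the identification of download cost with aggregate upload cost. I would need to argue carefully that in the counting step every user must download the full broadcast adjacency information of the relevant other users, so that no user downloads fewer bits than were uploaded in aggregate, and that the representation used for download (the sparse index set $\{t : a'_{i,c_t} = 1\}$) matches the one costed in the upload analysis. Once this correspondence is made explicit, the result follows as a corollary with essentially no further computation, which is consistent with the statement being labelled a corollary rather than a theorem.
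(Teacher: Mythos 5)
Your proposal is correct and takes essentially the same route as the paper: the paper's proof likewise sums the per-user upload bound over all $n$ users, factors out $\log\frac{n}{s}$, uses $\sum_i d_i = |E|$ for the degree term, and multiplies the uniform term by $n$. The identification of each user's download with the aggregate of all broadcasts, which you justify at some length, is dispatched in the paper by the single observation that each user must download the published information of all other users.
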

\begin{proof}
Since each user must download information from all other users, the total download cost is
\[
\sum_{i} O\left( \log\left( \frac{n}{s} \right) \left( \frac{d_i}{s} + \frac{n}{s^2 (e^{\varepsilon_1} - 1)} \right) \right)
= O\left( \log\left( \frac{n}{s} \right) \left( \frac{|E|}{s} + \frac{n^2}{s^2 (e^{\varepsilon_1} - 1)} \right) \right),
\]
where $|E| = \sum_i d_i$ is the total number of edges.
\end{proof}

When \( |E| = O(n) \), the upload cost is \( O\left(\frac{n}{s^2} \log n \right) \) and the download cost is \( O\left(\frac{n^2}{s^2} \log n \right) \). Recall that the communication cost in the two-step mechanism is \( \Theta(n^2) \). Therefore, our technique reduces the communication cost by a factor of \( s^2 \).
\section{Use Case: Triangle Counting}
\label{sec:triangles}

While our GroupRR mechanism is applicable to the counting of various subgraphs, the majority of prior work—such as \citet{imola2021locally,imola2022communication,eden2023triangle}—has focused specifically on triangle counting under local differential privacy. To facilitate direct comparison with these existing approaches, we also focus on triangle counting in this section.

In Subsection~\ref{subsec61}, we analyze the estimation loss introduced by our mechanism, excluding the effect of Line~5 in the counting step of Algorithm~\ref{algo:couting}, where noise is added via either the smooth sensitivity or local Laplace mechanism. In Subsection~\ref{subsec62}, we propose an algorithmic refinement aimed at further reducing this estimation loss for triangle counting. Finally, Subsections~\ref{subsec63} and~\ref{subsec64} are devoted to analyzing the noise contribution introduced in the final step of the algorithm.

\subsection{Sensitivity and Loss of Our Mechanism}
\label{subsec61}

In the counting step of our mechanism, each user $v_i$ estimates $|\Delta_i|$ when $\Delta_i = \{(v_i, v_j, v_k): k < j < i, \{v_i,v_j\}, \{v_i, v_k\}, \{v_j, v_k\} \in E\}$. The number of triangles in the graph $G$, denoted by $|\Delta_G|$ is then the sum of all $|\Delta_i|$. The estimation of $|\Delta_i|$ in our mechanism is $|\tilde{\Delta}_i| := \sum\limits_{j,k: \{i,j\}, \{i,k\} \in E} \tilde{a}_{j,h_j(k)}$ added with the Laplacian noise when the variable $\tilde{a}_{j,h_j(k)}$ is defined in the previous section.
The variance of our mechanism without the noise in the final step of our mechanism is shown in the subsequent theorem.

\begin{theorem}
Let $S_2$, $C_4$, and $W_4$ denote the number of two-stars, four-cycles, and walks of length four in the graph $G$, respectively.
Then, the variance of the estimator 
$\sum_{i} \sum\limits_{{j,k:\\ \{i,j\}, \{i,k\} \in E}} \tilde{a}_{j,h_j(k)}$
is bounded by
\[
O\left( \frac{s}{(e^{\varepsilon_1} - 1)^2} \cdot \left( S_2 + C_4 + \frac{s}{n} \cdot W_4 \right) \right).
\]
\label{thm:variance-triangle}
\end{theorem}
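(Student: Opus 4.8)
The plan is to expand the variance of the sum $\sum_i\sum_{j,k:\{i,j\},\{i,k\}\in E}\tilde a_{j,h_j(k)}$ into its diagonal (variance) terms and its off-diagonal (covariance) terms, then bound each family of terms by a combinatorial count of the graph structures $S_2$, $C_4$, and $W_4$. First I would write $\Var{\sum_i\sum_{j,k}\tilde a_{j,h_j(k)}} = \sum \Var{\tilde a_{j,h_j(k)}} + \sum \Cov{\tilde a_{j,h_j(k)}, \tilde a_{j',h_{j'}(k')}}$, where the indices range over all pairs $(j,k)$ with $\{i,j\},\{i,k\}\in E$ (summed over $i$). The key observation, borrowed from Theorem~\ref{thm:covariance}, is that two estimators are independent unless they share the same first index $j=j'$ (and both second indices are smaller than $j$), so the only surviving covariance terms couple pairs that collide in the first coordinate. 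This is precisely the place where the combinatorics enters.

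Next I would count how many times each of $\Var{\tilde a_{j,h_j(k)}}$ appears in the diagonal sum. A single estimator $\tilde a_{j,h_j(k)}$ (with $j>k$) is counted once for each node $v_i$ adjacent to both $v_j$ and $v_k$; the number of such contributions, summed appropriately, is governed by the number of two-stars $S_2$ centered so that the two leaves correspond to the pair $(j,k)$. Invoking Theorem~\ref{thm:variance}, each such variance term is $O\!\left(\frac{s}{(e^{\varepsilon_1}-1)^2}\right)$ (the $\varepsilon_0$-term being negligible as discussed after Theorem~\ref{thm:variance}), so the diagonal contribution is $O\!\left(\frac{s}{(e^{\varepsilon_1}-1)^2}\,S_2\right)$. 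For the off-diagonal terms with $j=j'$ but $(k)\neq(k')$, I would split according to whether the two pairs share an endpoint or are fully distinct: pairs sharing an endpoint again map to two-stars, while fully distinct collisions $\{i,j,k\}$ and $\{i',j,k'\}$ trace out a four-cycle $C_4$ (when the two center nodes $i,i'$ coincide up to reindexing) or a four-walk $W_4$ in the remaining configurations. Each covariance term is $O\!\left(\frac{s^2}{n(e^{\varepsilon_1}-1)^2}\right)$ by Theorem~\ref{thm:covariance}, and the extra factor of $s/n$ relative to the variance terms is exactly what produces the $\frac{s}{n}\cdot W_4$ contribution while the $C_4$ terms absorb the leftover counting.

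The main obstacle I anticipate is the careful bookkeeping of which graph motif each surviving covariance term corresponds to, and ensuring the counts are assigned to $S_2$, $C_4$, and $W_4$ without double-counting or overcounting. Concretely, I must verify that every pair of colliding estimators, once the shared-first-index constraint and the ordering $k,k'<j$ are imposed, can be injectively charged to either a two-star, a four-cycle, or a four-walk, and that the number of such charges per motif is $O(1)$. The delicate point is distinguishing the $C_4$ contribution (where the multiplicative factor stays $O\!\left(\frac{s^2}{n(e^{\varepsilon_1}-1)^2}\right)$ but the number of terms is $O(C_4)$, so after one factor of $\frac{s}{n}$ the leading $s$ matches the variance scaling) from the $W_4$ contribution (which retains the explicit $\frac{s}{n}$ factor in the final bound). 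Once this charging is established, assembling the three pieces yields the claimed bound $O\!\left(\frac{s}{(e^{\varepsilon_1}-1)^2}\bigl(S_2 + C_4 + \frac{s}{n}W_4\bigr)\right)$, and the remaining algebra is routine substitution of the per-term bounds from Theorems~\ref{thm:variance} and~\ref{thm:covariance}.
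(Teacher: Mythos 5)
Your scaffolding is the same as the paper's (expand into variance and covariance terms, invoke Theorems~\ref{thm:variance} and~\ref{thm:covariance}, and charge each pair to a motif), but there is a genuine gap in the bookkeeping: you never correctly account for the pairs of summands that are the \emph{same} random variable. The estimator $\tilde{a}_{j,h_j(k)}$ appears in the sum once for every common neighbor $v_i$ of $v_j$ and $v_k$ (with $i>j>k$), so if $(j,k)$ has multiplicity $m_{j,k}$, the second-moment expansion contains $m_{j,k}(m_{j,k}-1)$ cross terms whose covariance is the \emph{full variance} $\Theta\bigl(\frac{s}{(e^{\varepsilon_1}-1)^2}\bigr)$ --- the bound of Theorem~\ref{thm:covariance}, $O\bigl(\frac{s^2}{n(e^{\varepsilon_1}-1)^2}\bigr)$, applies only to genuinely distinct targets $k \neq k'$, and hashing provides no $s/n$ damping between two copies of an identical variable. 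These identical-copy pairs are exactly the four-cycles $v_i\,v_j\,v_{i'}\,v_k$ (same pair $(j,k)$, distinct centers $i \neq i'$), and they are the sole source of the \emph{undamped} $C_4$ term in the theorem; this is precisely the paper's second component. Your proof excludes them twice over: the diagonal is charged to $S_2$ with multiplicity one, and your off-diagonal analysis explicitly imposes $k \neq k'$.

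Compounding this, your motif identification is backwards: you attribute $C_4$ to the case where ``the two center nodes $i,i'$ coincide,'' but $i=i'$ with $k \neq k'$ is a three-star --- equivalently the degenerate walk $k\text{--}i\text{--}j\text{--}i\text{--}k'$, which the paper absorbs into $W_4$ at the small covariance rate --- whereas the genuine four-cycle is $k=k'$ with $i \neq i'$. As written, your accounting assigns every $j=j'$ covariance the rate $O\bigl(\frac{s^2}{n(e^{\varepsilon_1}-1)^2}\bigr)$ and would therefore ``prove'' the strictly stronger bound $O\bigl(\frac{s}{(e^{\varepsilon_1}-1)^2}\bigl(S_2+\frac{s}{n}(C_4+W_4)\bigr)\bigr)$, which is false in general: for $s \ll n$ and graphs rich in four-cycles, the repeated-term covariances contribute $\Theta\bigl(\frac{s\,C_4}{(e^{\varepsilon_1}-1)^2}\bigr)$ with no $s/n$ factor. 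The fix is exactly the paper's three-way split: multiplicity-one diagonal terms charged to $S_2$ at rate $s$, identical-copy pairs charged to $C_4$ at rate $s$ (via Theorem~\ref{thm:variance}, since covariance of a variable with itself is its variance), and distinct-target pairs $k \neq k'$ charged to $W_4$ at rate $s^2/n$ (via Theorem~\ref{thm:covariance}).
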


\begin{proof}
The variance of the estimator 
$\sum_{i} \sum\limits_{{j,k:\\ \{i,j\}, \{i,k\} \in E}} \tilde{a}_{j,h_j(k)}$
can be decomposed into the following three components:

\begin{enumerate}
    \item \textbf{Variance of individual terms:}  
    From Theorem~\ref{thm:variance}, we have $\Var {\tilde{a}_{j,h_j(k)}} = O\left( \frac{s}{(e^{\varepsilon_1} - 1)^2} \right)$. As there are $S_2$ such terms (one for each two-star), the total contribution from the variances is 
    \[
    O\left( \frac{s \cdot S_2}{(e^{\varepsilon_1} - 1)^2} \right).
    \]

    \item \textbf{Covariance due to repeated terms:}  
    A term $\tilde{a}_{j,h_j(k)}$ may appear multiple times in the summation if there exist $i, i' > j > k$ such that $\{i,j\}, \{i,k\}, \{i',j\}, \{i',k\} \in E$. Such configurations correspond to four-cycles, of which there are at most $C_4$. From Theorem~\ref{thm:variance}, each such repeated term contributes 
    $O\left( \frac{s}{(e^{\varepsilon_1} - 1)^2} \right)$
    to the total covariance, yielding a contribution bounded by
    \[
    O\left( \frac{s \cdot C_4}{(e^{\varepsilon_1} - 1)^2} \right).
    \]

    \item \textbf{Covariance between distinct terms:}  
    Dependencies may also arise between distinct terms $\tilde{a}_{j,h_j(k)}$ and $\tilde{a}_{j,h_j(k')}$ if there exist $i > j,k$ and $i' > j,k'$ such that the edge sets $\{i,j\}, \{i,k\}, \{i',j\}, \{i',k'\} \in E$. These configurations correspond to walks of length four involving shared center nodes. There are at most $W_4$ such configurations, and each contributes 
    $O\left( \frac{s^2}{n (e^{\varepsilon_1} - 1)^2} \right)$
    to the total covariance, resulting in an overall contribution of
    \[
    O\left( \frac{s^2 \cdot W_4}{n (e^{\varepsilon_1} - 1)^2} \right).
    \]
\end{enumerate}

Combining all three components, we conclude that the total variance is
\[
O\left( \frac{s}{(e^{\varepsilon_1} - 1)^2} \cdot \left( S_2 + C_4 + \frac{s}{n} \cdot W_4 \right) \right).
\]
\end{proof}

\subsection{Further Optimization for Triangle Counting: Central Server Sampling}
\label{subsec62}

We believe that the variance detailed in Theorem \ref{thm:variance-triangle} is comparatively minimal. Nevertheless, specifically for triangle counting, this variance can be further reduced through a method we have termed ``central server sampling.'' The specifics of this technique will be outlined in this subsection.

\subsubsection{Modification} Remember that through the GroupRR mechanism, the server acquires the set \(\bar{\xi}_i = \{t \mid a'_{i,c_t} = 1\}\) from user \(v_i\) following the group randomized response step. This set is then shared with all users at the start of the counting step. As identified in the prior subsection, this approach not only leads to significant communication costs but also contributes to a substantial covariance in triangle counting. To address these issues, we suggest generating subsets \(\bar{\xi}_i^{(1)}, \dots, \bar{\xi}_i^{(n)}\) from \(\bar{\xi}_i\) independently, and then distributing the subset \(\bar{\xi}_i^{(j)}\) to user \(v_j\). An element \(t\) in \(\bar{\xi}_i\) is independently selected for inclusion in \(\bar{\xi}_i^{(j)}\) with a constant probability, which we denote as \(\mu_c\). The central server sampling is described in Algorithm \ref{algo:css}.

\begin{algorithm}
    \caption{Central Server Sampling: This process is executed between Step 3 and 4 of the GroupRR mechanism}
    \label{algo:css}
    
    \Fn{\CentralSampling}{
        \KwIn{For all user $j$, the grouped and obfuscated adjacency list $(a'_{j,c_1}, \dots, a'_{j,c_m})$ obtained from the randomized response step, central server sampling probability $\mu_C$}
        \KwOut{For all user $i,j$, the grouped and obfuscated adjacency list for user $j$, which user $i$ will use for counting the number of triangles in the counting step, $(a^{(i)}_{j, c_1} \dots, a^{(i)}_{j, c_m})$.}
        \textbf{[Server]} For each $i,j,t$, $a_{j,c_t}^{(i)} = 0$ if $a'_{j, c_t} = 0$. If $a'_{j, c_t} = 1$, $a_{j,c_t}^{(i)} = 1$ with probability $\mu_c$, and $a_{j,c_t}^{(i)} = 0$ with probability $1 - \mu_c$.\;
        \textbf{[Server]} Send $(a^{(i)}_{j, c_1} \dots, a^{(i)}_{j, c_m})$ to user $i$.
    }
\end{algorithm}

Let $a^{(i)}_{j,h_j(k)} \in \{0,1\}$ is a random variable indicating if $h_j(k) \in \bar{\xi}_j^{(i)}$. The user $v_i$ believe that there is an edge $\{v_j,v_k\}$ in the graph $G$ if $a^{(i)}_{j,h_j(k)} = 1$. We have that $\E{a^{(i)}_{j,h_j(k)}} = \mu_c \cdot \E{a'_{j,h_j(k)}}$ for all $i,j,k$.

Recall the variables $\omega_i$ and $\tilde{\sigma}_i$ defined in the previous section. We define $\tilde{a}^{(i)}_{j,h_j(k)} = \frac{\omega_i}{\mu_c} a^{(i)}_{j,h_j(k)} - \tilde{\sigma}_i$.
\begin{theorem}
    The variable $\tilde{a}^{(i)}_{j,h_j(k)}$ is an unbiased estimation of $a_{j,k}$. \label{thm:bias-css}
\end{theorem}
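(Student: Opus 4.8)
The plan is to reproduce the unbiasedness argument of Theorem~\ref{thm:bias}, inserting one extra layer of expectation to absorb the central-server sampling. The estimator $\tilde{a}^{(i)}_{j,h_j(k)}$ now depends on two sources of randomness that are generated independently: the group randomized response of Step~2, which produces $a'_{j,h_j(k)}$, and the per-recipient Bernoulli retention of Algorithm~\ref{algo:css}, which keeps a published one with probability $\mu_c$ and turns a published zero into a zero deterministically. Because the central-server sampling is applied independently of everything done in Steps~1--2, I would condition on $a'_{j,h_j(k)}$ first; this is exactly the identity already recorded just above the statement, namely $\E{a^{(i)}_{j,h_j(k)}} = \mu_c \cdot \E{a'_{j,h_j(k)}}$, which I would simply invoke rather than re-derive.

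First I would evaluate $\E{a'_{j,h_j(k)}}$. Since $a'_{j,h_j(k)}$ is a $\{0,1\}$ variable, its expectation equals $\prob{a'_{j,h_j(k)} = 1}$, and Lemma~\ref{lem1} gives this probability as $\frac{a_{j,k} + \sigma_j}{\omega_j}$. Combining with the sampling identity yields $\E{a^{(i)}_{j,h_j(k)}} = \mu_c \cdot \frac{a_{j,k} + \sigma_j}{\omega_j}$. Next I would take expectations in the definition $\tilde{a}^{(i)}_{j,h_j(k)} = \frac{\omega_j}{\mu_c}\, a^{(i)}_{j,h_j(k)} - \tilde{\sigma}_j$, using linearity together with $\E{\tilde{\sigma}_j} = \sigma_j$. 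Here $\omega$ is node-independent (so $\omega_i=\omega_j$), while the additive correction must be evaluated at the publishing endpoint $j$, whose noisy low degree $\tilde{d}_j$ is available to every user from the degree-sharing step. The prefactor $\frac{\omega_j}{\mu_c}$ is chosen precisely so that $\mu_c$ cancels the sampling factor and $\omega_j$ cancels the denominator from Lemma~\ref{lem1}, reducing the computation to $\E{\tilde{a}^{(i)}_{j,h_j(k)}} = (a_{j,k} + \sigma_j) - \sigma_j = a_{j,k}$, which is the claim.

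The argument is essentially routine once the conditioning is in place, so I expect no serious difficulty. The two points requiring care — and hence the main obstacle — are structural rather than computational: justifying the product form $\E{a^{(i)}_{j,h_j(k)}} = \mu_c\,\E{a'_{j,h_j(k)}}$ from the independence of the central-server sampling, and the bookkeeping of the two scaling constants, i.e.\ checking that $\frac{\omega_j}{\mu_c}$ exactly undoes both the $\mu_c$ of the sampling and the $\omega_j^{-1}$ supplied by Lemma~\ref{lem1}, with the subtracted term matched to node $j$ so that the $\sigma_j$ contributions cancel. Everything after the $\mu_c$ has been cancelled coincides with the computation in the proof of Theorem~\ref{thm:bias}, which I would reuse directly.
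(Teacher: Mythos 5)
Your proof is correct and takes essentially the same route as the paper's: both factor the central-server sampling through the identity $\mathbb{E}\bigl[a^{(i)}_{j,h_j(k)}\bigr] = \mu_c \cdot \mathbb{E}\bigl[a'_{j,h_j(k)}\bigr]$ and then cancel $\mu_c$ and $\omega$ against the expectation of $a'_{j,h_j(k)}$, the only (immaterial) difference being that you invoke Lemma~\ref{lem1} directly where the paper cites Theorem~\ref{thm:bias}, which is just a rearrangement of the same identity. Your explicit note that the additive correction must be $\tilde{\sigma}_j$ of the publishing node while $\omega$ is node-independent is in fact slightly more careful than the paper's own proof, which writes $\omega_i$ and $\tilde{\sigma}_i$ throughout.
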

\begin{proof} By the definitions and Theorem~\ref{thm:bias}, we have
\[
\mathbb{E}[\tilde{a}^{(i)}_{j,h_j(k)}] 
= \frac{\omega_i}{\mu_c} \cdot \mathbb{E}[a^{(i)}_{j,h_j(k)}] - \mathbb{E}[\tilde{\sigma}_i] 
= \omega_i \cdot \mathbb{E}[a'_{j,h_j(k)}] - \mathbb{E}[\tilde{\sigma}_i] 
= a_{j,k}.
\]
\end{proof}

From the previous theorem, our unbiased estimation of the number of triangles under this modification is then $|\tilde{\Delta}_i| = \sum\limits_{j,k: \{v_i,v_j\}, \{v_i,v_k\} \in E} \tilde{a}^{(i)}_{j,h_j(k)}$ added by the Laplacian noise or the noise from the smooth sensitivity mechanism.

\subsubsection{Variance of the Modified Mechanism} We first consider the variance of $\tilde{a}^{(i)}_{j,h_j(k)}$ in the subsequent theorem.

\begin{theorem}
    $\Var{\tilde{a}^{(i)}_{j,h_j(k)}} = O\left( \frac{s}{\mu_c (e^{\varepsilon_1} - 1)^2} + \frac{s^2}{n^2 \varepsilon_0^2}\right)$.
    \label{thm:variance-css}
\end{theorem}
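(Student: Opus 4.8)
The plan is to mirror the structure of the variance computation in Theorem~\ref{thm:variance}, substituting the new estimator $\tilde{a}^{(i)}_{j,h_j(k)} = \frac{\omega_i}{\mu_c} a^{(i)}_{j,h_j(k)} - \tilde{\sigma}_i$ in place of $\tilde{a}_{j,h_j(k)}$. First I would decompose the variance using independence between the central-server sampling step and the degree-sharing noise, writing
\[
\Var{\tilde{a}^{(i)}_{j,h_j(k)}} = \frac{\omega_i^2}{\mu_c^2} \Var{a^{(i)}_{j,h_j(k)}} + \Var{\tilde{\sigma}_i}.
\]
The second term is handled exactly as in Theorem~\ref{thm:variance}: since $\tilde{\sigma}_i$ inherits its randomness from $\tilde{d}_i = d_i + \Lap(1/\varepsilon_0)$, we get $\Var{\tilde{\sigma}_i} = \left(\frac{s-1}{ms-s}\right)^2 \cdot \frac{2}{\varepsilon_0^2} = O\!\left(\frac{s^2}{n^2 \varepsilon_0^2}\right)$, which reuses $m \approx n/s$.

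The core of the argument is bounding $\frac{\omega_i^2}{\mu_c^2} \Var{a^{(i)}_{j,h_j(k)}}$. Since $a^{(i)}_{j,h_j(k)} \in \{0,1\}$, I would compute $\Pr[a^{(i)}_{j,h_j(k)} = 1] = \mu_c \cdot \Pr[a'_{j,h_j(k)} = 1] = \mu_c \cdot \frac{a_{j,k} + \sigma_j}{\omega_j}$, invoking Lemma~\ref{lem1} for the last equality. Then the Bernoulli variance is
\[
\Var{a^{(i)}_{j,h_j(k)}} = \mu_c \frac{a_{j,k}+\sigma_j}{\omega_j}\left(1 - \mu_c \frac{a_{j,k}+\sigma_j}{\omega_j}\right) \leq \mu_c \cdot \frac{1 + \sigma_j}{\omega_j},
\]
using $a_{j,k} \le 1$ and dropping the nonnegative subtracted term. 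Multiplying by $\frac{\omega_i^2}{\mu_c^2}$ and treating $\omega_i \approx \omega_j$ (both $O(s/(e^{\varepsilon_1}-1))$ by Lemma~\ref{lem:lem2} under $d_i < m$) yields a leading term of order $\frac{\omega_i (1+\sigma_i)}{\mu_c}$, which by Lemma~\ref{lem:lem2} is $O\!\left(\frac{s}{\mu_c (e^{\varepsilon_1}-1)^2}\right)$. Combining the two contributions gives the claimed bound.

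The main obstacle I anticipate is the bookkeeping around the factor of $1/\mu_c$: because the sampling probability divides the estimator but only multiplies the success probability, the variance scales like $1/\mu_c$ rather than $1/\mu_c^2$, and one must be careful that the $\mu_c^2$ in the denominator of $\omega_i^2/\mu_c^2$ is only partially cancelled by the single $\mu_c$ in the Bernoulli variance. A secondary subtlety is ensuring that $\omega_i$ (indexed by the counting user $i$) and $\omega_j$ (indexed by the edge-owner $j$) are interchangeable up to constants; this is justified because both satisfy the same $O(s/(e^{\varepsilon_1}-1))$ bound from Lemma~\ref{lem:lem2} whenever the low-degree assumption $d_i, d_j < m$ holds, so no additional hypothesis beyond those already in force is needed.
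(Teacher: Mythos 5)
Your proposal is correct and follows essentially the same route as the paper's proof: the same decomposition $\frac{\omega^2}{\mu_c^2}\Var{a^{(i)}_{j,h_j(k)}} + \Var{\tilde{\sigma}_i}$, the same use of Lemma~\ref{lem1} to get $\Pr[a^{(i)}_{j,h_j(k)}=1]=\mu_c\,\Pr[a'_{j,h_j(k)}=1]$, the same Bernoulli bound $\leq \mu_c\frac{1+\sigma_j}{\omega_j}$, and the same appeal to Lemma~\ref{lem:lem2} under the standing assumption $d_j < m$. Your explicit leading term $\frac{\omega(1+\sigma)}{\mu_c}$ is in fact the correct arithmetic (the paper's intermediate display contains a typo, writing $\frac{1}{\mu_c}\cdot\frac{1+\sigma_j}{\omega_j}$, though its final bound agrees with yours), so no gap remains.
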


\begin{proof}
Since $a^{(i)}_{j,h_j(k)}$ is a binomial random variable, Lemma~\ref{lem1} implies
\begin{eqnarray*}
\Var{a^{(i)}_{j,h_j(k)}}
& = & \Pr\left[a^{(i)}_{j,h_j(k)} = 1\right] \cdot \left(1 - \Pr\left[a^{(i)}_{j,h_j(k)} = 1\right]\right)
\\ & = & \mu_c \cdot \Pr\left[a'_{j,h_j(k)} = 1\right] \cdot \left(1 - \mu_c \cdot \Pr\left[a'_{j,h_j(k)} = 1\right]\right)
\\ & \leq & \mu_c \cdot \Pr\left[a'_{j,h_j(k)} = 1\right]
 \leq \mu_c \cdot \frac{1 + \sigma_j}{\omega_j}.
\end{eqnarray*}

Then, by Lemma~\ref{lem:lem2}, we have
\[
\Var{\tilde{a}^{(i)}_{j,h_j(k)}}
= \frac{\omega_j^2}{\mu_c^2} \cdot \Var{a^{(i)}_{j,h_j(k)}} + \Var{\tilde{\sigma}_i}
\leq \frac{1}{\mu_c} \cdot \frac{1 + \sigma_j}{\omega_j} + \frac{s^2}{n^2 \varepsilon_0^2}.
\]

Therefore,
\[
\Var{\tilde{a}^{(i)}_{j,h_j(k)}}
= O\left( \frac{s}{\mu_c (e^{\varepsilon_1} - 1)^2} + \frac{s^2}{n^2 \varepsilon_0^2} \right).
\]
\end{proof}

At first glance, when contrasting this outcome with the findings in Theorem~\ref{thm:variance}, it might appear that employing central server sampling amplifies the variance of the estimations, potentially diminishing their utility. However, it is important to recognize that for a communication cost reduction by $1/\mu_c$ times. 
Additionally, in the same way as the four-cycle trick in Section \ref{sec:previous}, the central server sampling can reduce the covariance stated in Theorem \ref{thm:variance-triangle}. In particular, it significantly reduces the covariance for the graph with a lot of walks with length four (or graph with large number of $W_4$).

\begin{theorem}
    The covariance of $\tilde{a}^{(i)}_{j,h_j(k)}$ and $\tilde{a}^{(i')}_{j',h_j'(k')}$ is 
  $O\left(s / (e^{\varepsilon_1} - 1)^2\right)$ when $j = j'$, $i \neq i'$ and $k = k'$, $O\left(s^2 / (n (e^{\varepsilon_1} - 1)^2)\right)$ when $j = j'$, $i \neq i'$ and $k, k' < j$, and $O\left(s^2 / (\mu_c n (e^{\varepsilon_1} - 1)^2)\right)$ when $j = j'$, $i = i'$ and $k, k' < j$. It is equal to zero in other cases.
    \label{thm:covariance-css}
\end{theorem}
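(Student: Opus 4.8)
The plan is to reduce every covariance to the covariance of the pre-sampling randomized-response bits, which was already controlled in Theorem~\ref{thm:covariance}, by exploiting the fact that central server sampling simply multiplies each published bit by an independent $\mathrm{Bernoulli}(\mu_c)$ coin. Concretely, I would write $a^{(i)}_{j,h_j(k)} = a'_{j,h_j(k)} \cdot B^{(i)}_{j,h_j(k)}$, where $B^{(i)}_{j,h_j(k)}$ is the central-server coin, which is independent across distinct (user, bin) pairs and independent of the randomized response. Since $\tilde{a}^{(i)}_{j,h_j(k)}$ is affine in $a^{(i)}_{j,h_j(k)}$ with a common scaling factor $\omega/\mu_c$ and an additive shift $\tilde{\sigma}$ that depends only on a published degree, I would first note that the cross-covariances between $a^{(i)}_{j,h_j(k)}$ and $\tilde{\sigma}$ vanish (the degree's Laplace noise is independent of the RR/CSS randomness and the true degree is deterministic), so that $\Cov(\tilde{a}^{(i)}_{j,h_j(k)}, \tilde{a}^{(i')}_{j',h_{j'}(k')}) = \frac{\omega^2}{\mu_c^2}\Cov(a^{(i)}_{j,h_j(k)}, a^{(i')}_{j',h_{j'}(k')})$ up to a lower-order term from the degree-sharing noise (the $\varepsilon_0$-dependent term, which is $O(s^2/(n^2\varepsilon_0^2))$ and hence dominated by the stated bounds on large graphs). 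When $j \neq j'$ the two estimators are built from disjoint sources of randomness and are independent, giving covariance zero exactly as in Theorem~\ref{thm:covariance}.

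For $j = j'$ I would split into the three cases according to how the coins and the RR bits are shared. \emph{Case 1} ($i \neq i'$, $k = k'$): both estimators read the same RR bit $a'_{j,h_j(k)}$ but multiply it by two \emph{independent} coins $B^{(i)}, B^{(i')}$; since $(a')^2 = a'$, a short computation gives $\Cov(a^{(i)}_{j,h_j(k)}, a^{(i')}_{j,h_j(k)}) = \mu_c^2\Var{a'_{j,h_j(k)}}$, so the prefactor cancels the $\mu_c^2$ and I recover $\omega^2\Var{a'_{j,h_j(k)}} \leq \omega(1+\sigma_j) = O(s/(e^{\varepsilon_1}-1)^2)$ via Lemma~\ref{lem1} and Lemma~\ref{lem:lem2}. \emph{Case 2} ($i \neq i'$, $k \neq k'$): the two coins are still independent, and conditioning on whether $k,k'$ collide in the same bin, every product factorizes multiplicatively across the coins, yielding the clean identity $\Cov(a^{(i)}_{j,h_j(k)}, a^{(i')}_{j,h_j(k')}) = \mu_c^2\Cov(a'_{j,h_j(k)}, a'_{j,h_j(k')})$; after the prefactor this equals exactly the quantity bounded in Theorem~\ref{thm:covariance}, namely $O(s^2/(n(e^{\varepsilon_1}-1)^2))$.

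\emph{Case 3} ($i = i'$, $k \neq k'$) is the genuinely new case, where the single user reuses its own coin. Here I would condition on the collision event $C^{(j)}_{k,k'} = \{h_j(k) = h_j(k')\}$, whose probability $\frac{s-1}{ms-1}$ is taken from Theorem~\ref{thm:covariance}. On the complement the two coins sit on different bins and are independent, so that contribution is again $\mu_c^2$ times the non-collision contribution already shown to be non-positive in the proof of Theorem~\ref{thm:covariance}, and may be discarded. On $C^{(j)}_{k,k'}$, however, both estimators read the same RR bit \emph{and} the same coin, so $a^{(i)}_{j,h_j(k)}a^{(i)}_{j,h_j(k')} = (a'B)^2 = a'B$, whose expectation is $\mu_c\Pr[a'_{j,h_j(k)} = 1 \mid C^{(j)}_{k,k'}]$ — only one factor of $\mu_c$ rather than two. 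This single surviving $\mu_c$ is exactly what produces the $1/\mu_c$ in the target: combining $\Pr(C^{(j)}_{k,k'}) = O(s/n)$, the conditional bound $\Pr[a'=1\mid C^{(j)}_{k,k'}] = O((1+\sigma_j)/\omega)$, and the cancellation $\frac{\omega^2}{\mu_c^2}\cdot\mu_c = \frac{\omega^2}{\mu_c}$ gives $O\!\big(\frac{1}{\mu_c}\cdot\frac{s}{n}\cdot\omega(1+\sigma_j)\big) = O(s^2/(\mu_c n (e^{\varepsilon_1}-1)^2))$.

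I expect Case 3 to be the main obstacle, precisely because it is where coin-sharing breaks the clean $\mu_c^2$ factorization used in the other cases: tracking that only one power of $\mu_c$ survives in the collision term, and confirming that the non-collision term stays non-positive by reusing the $s \geq 3$ conditional-probability estimates from Theorem~\ref{thm:covariance}, is what distinguishes the $1/\mu_c$ blow-up of the same-user case from the other two covariances. The remaining bookkeeping — verifying that the additive degree-noise term is genuinely lower order and that distinct publishers yield independence — is routine.
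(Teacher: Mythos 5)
Your proof is correct and takes essentially the same route as the paper's: the same case split, the same conditioning on the collision event $C^{(j)}_{k,k'}$ with $\Pr\left(C^{(j)}_{k,k'}\right) = \frac{s-1}{ms-1}$, the same reuse of the non-positivity of the non-collision term from the proof of Theorem~\ref{thm:covariance}, and the same key observation that coin-sharing in the $i=i'$ collision case leaves only a single factor of $\mu_c$ (which the paper encodes by invoking $\Var{\tilde{a}^{(i)}_{j,h_j(k)}}$ from Theorem~\ref{thm:variance-css}). Your explicit Bernoulli-coin factorization $a^{(i)}_{j,h_j(k)} = a'_{j,h_j(k)} \cdot B^{(i)}_{j,h_j(k)}$ simply makes rigorous the independence inequalities the paper asserts directly, and your explicit dismissal of the lower-order $\tilde{\sigma}$ noise term is, if anything, more careful than the paper's treatment.
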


\begin{proof}
When $j \neq j'$, the two estimators are based on disjoint sets of variables, and the server sampling step is conducted independently for each group. Therefore, the estimators are independent.

When $j = j'$, $i \neq i'$, and $k, k' < j$, the sampling steps for $\tilde{a}^{(i)}_{j,h_j(k)}$ and $\tilde{a}^{(i')}_{j,h_j(k')}$ are independent. Hence, by Theorem~\ref{thm:covariance},
\[
\Cov\left( \tilde{a}^{(i)}_{j,h_j(k)}, \tilde{a}^{(i')}_{j,h_j(k')} \right) 
\leq \Cov\left( \tilde{a}_{j,h_j(k)}, \tilde{a}_{j,h_j(k')} \right) 
= O\left( \frac{s^2}{n(e^{\varepsilon_1} - 1)^2} \right).
\]

Similarly, when $j = j'$, $i \neq i'$, and $k = k'$, Theorem~\ref{thm:variance} gives
\[
\Cov\left( \tilde{a}^{(i)}_{j,h_j(k)}, \tilde{a}^{(i')}_{j,h_j(k)} \right) 
\leq \Var{ \tilde{a}_{j,h_j(k)} } 
= O\left( \frac{s}{(e^{\varepsilon_1} - 1)^2} \right).
\]

Next, consider the case when $j = j'$, $i = i'$, and $k, k' < j$. We distinguish two subcases depending on whether $h_j(k) = h_j(k')$. Let $C^{(j)}_{k,k'}$ denote the event that $h_j(k) = h_j(k')$. By Theorem~\ref{thm:variance-css},
\[
\mathbb{E}\left[ \tilde{a}^{(i)}_{j,h_j(k)} \tilde{a}^{(i)}_{j,h_j(k')} \mid C^{(j)}_{k,k'} \right] 
= \Var{\tilde{a}^{(i)}_{j,h_j(k)} } 
= O\left( \frac{s}{\mu_c (e^{\varepsilon_1} - 1)^2} \right).
\]

In the case $h_j(k) \neq h_j(k')$, we obtain from the proof of Theorem~\ref{thm:covariance} that
\[
\mathbb{P}\left( \overline{C^{(j)}_{k,k'}} \right) 
\cdot \mathbb{E}\left[ a^{(i)}_{j,h_j(k)} a^{(i)}_{j,h_j(k')} \mid \overline{C^{(j)}_{k,k'}} \right] 
- \mathbb{E}\left[ a^{(i)}_{j,h_j(k)} \right] \mathbb{E}\left[ a^{(i)}_{j,h_j(k')} \right] 
\leq 0.
\]
Therefore,
\[
\Cov\left( \tilde{a}^{(i)}_{j,h_j(k)}, \tilde{a}^{(i)}_{j,h_j(k')} \right) 
\leq \mathbb{P}\left( C^{(j)}_{k,k'} \right) 
\cdot \mathbb{E}\left[ \tilde{a}^{(i)}_{j,h_j(k)} \tilde{a}^{(i)}_{j,h_j(k')} \mid C^{(j)}_{k,k'} \right] 
= O\left( \frac{s^2}{\mu_c n (e^{\varepsilon_1} - 1)^2} \right).
\]
\end{proof}

Using the previous two theorems, we will quantify the variance of the triangle counting algorithm performed using GroupRR and central server sampling. We describe in the next theorem this variance prior to the addition of the Laplacian noise in the final step of the mechanism. 

\begin{theorem}
    Let $S_2$, $S_3$, $C_4$, and $W_4$ be the number of two-star, three-star, four-cycle, and the walk with length four in $G$. The variance of our estimator for triangle counting using the GroupRR and central server sampling technique, denoted by \(\sum\limits_i \left(\sum\limits_{j,k: \{v_i,v_j\}, \{v_i,v_k\} \in E} \tilde{a}^{(i)}_{j,h_j(k)}\right) \), is $\bigo{\frac{s}{\mu_c (e^{\varepsilon_1} - 1)^2} \left(S_2 + \mu_c C_4 + \frac{\mu_c s}{n} W_4 + \frac{s}{n} S_3 \right)}$. \label{thm:covariance-csstriangle}
\end{theorem}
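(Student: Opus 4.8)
The plan is to expand the variance of the full estimator
$T = \sum_{i}\sum_{j,k:\{v_i,v_j\},\{v_i,v_k\}\in E}\tilde{a}^{(i)}_{j,h_j(k)}$
into its diagonal (variance) and off-diagonal (covariance) parts and then match each part with exactly one of the four subgraph counts. Writing $X_{i,j,k}:=\tilde{a}^{(i)}_{j,h_j(k)}$ for each admissible triple $i>j>k$ with $\{v_i,v_j\},\{v_i,v_k\}\in E$ (a cherry whose center $i$ is the largest-index vertex), bilinearity of covariance gives
\[
\Var{T}=\sum_{(i,j,k)}\Var{X_{i,j,k}}+\sum_{(i,j,k)\neq(i',j',k')}\Cov\!\left(X_{i,j,k},X_{i',j',k'}\right).
\]
By Theorem~\ref{thm:covariance-css} every covariance term vanishes unless the shared first index satisfies $j=j'$, so I would immediately discard all pairs with $j\neq j'$ and reduce to a finite case split on the remaining indices $(i,i')$ and $(k,k')$.

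The core of the argument is a combinatorial matching between each surviving case and a subgraph family, followed by multiplication with the magnitude supplied by Theorems~\ref{thm:variance-css} and~\ref{thm:covariance-css}. For the diagonal, there are at most $S_2$ admissible triples (each a cherry centered at its max-index vertex), and Theorem~\ref{thm:variance-css} bounds each variance by $O\!\left(\frac{s}{\mu_c(e^{\varepsilon_1}-1)^2}\right)$ up to the negligible $\varepsilon_0$-term, contributing $O\!\left(\frac{s\,S_2}{\mu_c(e^{\varepsilon_1}-1)^2}\right)$. For the off-diagonal terms with $j=j'$ I distinguish three cases. When $i\neq i'$ and $k=k'$, the four vertices satisfy $\{v_i,v_j\},\{v_i,v_k\},\{v_{i'},v_j\},\{v_{i'},v_k\}\in E$, i.e.\ they form a four-cycle, giving $O(C_4)$ ordered pairs each of covariance $O\!\left(\frac{s}{(e^{\varepsilon_1}-1)^2}\right)$ (the $\mu_c$ cancels here because the two central-sampling copies are independent thinnings of the same randomized bit), hence a total of $O\!\left(\frac{s\,C_4}{(e^{\varepsilon_1}-1)^2}\right)$. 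When $i\neq i'$ and $k\neq k'$, the vertices trace the length-four walk $k-i-j-i'-k'$ with middle vertex $j$, giving $O(W_4)$ pairs each of covariance $O\!\left(\frac{s^2}{n(e^{\varepsilon_1}-1)^2}\right)$, hence $O\!\left(\frac{s^2 W_4}{n(e^{\varepsilon_1}-1)^2}\right)$. When $i=i'$ and $k\neq k'$, the vertex $i$ is adjacent to the three distinct leaves $j,k,k'$, a three-star, giving $O(S_3)$ pairs each of covariance $O\!\left(\frac{s^2}{\mu_c n(e^{\varepsilon_1}-1)^2}\right)$, hence $O\!\left(\frac{s^2 S_3}{\mu_c n(e^{\varepsilon_1}-1)^2}\right)$.

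Summing the four contributions and factoring out $\frac{s}{\mu_c(e^{\varepsilon_1}-1)^2}$ reproduces exactly the claimed bound $\bigo{\frac{s}{\mu_c(e^{\varepsilon_1}-1)^2}\left(S_2+\mu_c C_4+\frac{\mu_c s}{n}W_4+\frac{s}{n}S_3\right)}$. The main obstacle I anticipate is the bookkeeping in the matching step: I must verify that each surviving ordered pair of index-triples injects, up to a constant factor, into the corresponding subgraph family, using the ordering constraints $i>j>k$ to pin down a unique center or middle vertex so that no subgraph is overcounted by more than a constant (in particular, that the common first index $j$ is forced to be the largest leaf in the three-star case and the path midpoint in the walk case). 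A secondary point is justifying the omission of the $\frac{s^2}{n^2\varepsilon_0^2}$ part of the diagonal variance; following the remark after Theorem~\ref{thm:variance}, this $\varepsilon_0$-dependent term scales as $1/n^2$ relative to the $\varepsilon_1$-dependent term and is negligible on large graphs, so it is absorbed into the stated bound.
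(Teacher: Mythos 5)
Your proposal is correct and takes essentially the same approach as the paper: the identical four-way decomposition of $\Var{T}$ (diagonal terms matched to $S_2$, and the surviving $j=j'$ covariance cases with $i\neq i', k=k'$, with $i\neq i', k\neq k'$, and with $i=i', k\neq k'$ matched to $C_4$, $W_4$, and $S_3$ respectively), with the same per-pair magnitudes imported from Theorems~\ref{thm:variance-css} and~\ref{thm:covariance-css}. Your explicit justification for dropping the $s^2/(n^2\varepsilon_0^2)$ diagonal term is in fact slightly more careful than the paper's proof, which omits it silently.
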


\begin{proof}
The variance of 
\[
\sum_i \left( \sum_{\substack{j,k:\\ \{v_i, v_j\}, \{v_i, v_k\} \in E}} \tilde{a}^{(i)}_{j,h_j(k)} \right)
\]
can be decomposed into four components:
\begin{enumerate}
    \item \textbf{Variance of individual terms:}  
    \[
    \sum_i \sum_{\substack{j,k:\\ \{v_i, v_j\}, \{v_i, v_k\} \in E}} \Var{ \tilde{a}^{(i)}_{j,h_j(k)} }
    \]
    By Theorem~\ref{thm:variance-css}, we have 
    \[
    \Var{ \tilde{a}^{(i)}_{j,h_j(k)} } = O\left( \frac{s}{\mu_c (e^{\varepsilon_1} - 1)^2} \right).
    \]
    Since there are at most $S_2$ such terms in total, the overall contribution is
    $O\left( \frac{s \cdot S_2}{\mu_c (e^{\varepsilon_1} - 1)^2} \right)$.
    
    \item \textbf{Covariance from repeated index pairs \(\{j,k\}\):}  
    This arises from the covariance between $\tilde{a}^{(i)}_{j,h_j(k)}$ and $\tilde{a}^{(i')}_{j,h_j(k)}$ for \(i, i' > j > k\) such that $\{v_i,v_j\}$, $\{v_i,v_k\}$, $\{v_{i'},v_j\}$, and $\{v_{i'},v_k\}$ are all in $E$.  
    By Theorem~\ref{thm:covariance-css}, the covariance is bounded by
    $O\left( \frac{s}{(e^{\varepsilon_1} - 1)^2} \right)$,
    and the number of such co-occurrences is at most $C_4$. Hence, this part contributes    
    $O\left( \frac{s \cdot C_4}{(e^{\varepsilon_1} - 1)^2} \right)$.
    
    \item \textbf{Covariance across different $i \neq i'$:}  
    This refers to terms $\tilde{a}^{(i)}_{j,h_j(k)}$ and $\tilde{a}^{(i')}_{j,h_j(k')}$ where $i \neq i'$, and there exist $i > j, k$ and $i' > j, k'$ such that $\{v_i,v_j\}, \{v_i,v_k\}, \{v_{i'},v_j\}, \{v_{i'},v_{k'}\} \in E$.  
    Each such occurrence contributes at most
    $O\left( \frac{s^2}{n (e^{\varepsilon_1} - 1)^2} \right)$,
    and the total number of such cases is bounded by $W_4$, leading to a total contribution of
    $O\left( \frac{s^2 \cdot W_4}{n (e^{\varepsilon_1} - 1)^2} \right)$.
    
    \item \textbf{Covariance within the same $i$:}  
    This arises from pairs $\tilde{a}^{(i)}_{j,h_j(k)}$ and $\tilde{a}^{(i)}_{j,h_j(k')}$ for fixed $i$ and $j$, where $i > j$ and the edge set contains $\{v_i, v_j\}, \{v_i, v_k\}, \{v_i, v_{k'}\} \in E$.  
    Each such pair contributes at most
    $O\left( \frac{s^2}{\mu_c n (e^{\varepsilon_1} - 1)^2} \right)$,
    and there are at most $S_3$ such cases, giving a total contribution of
    $O\left( \frac{s^2 \cdot S_3}{\mu_c n (e^{\varepsilon_1} - 1)^2} \right)$.
\end{enumerate}
\end{proof}

Let us compare the variance derived from the GroupRR as outlined in Theorem~\ref{thm:variance-triangle} with that obtained from GroupRR combined with central server sampling, as detailed in Theorem~\ref{thm:covariance-csstriangle}, particularly for graphs with numerous walks of length four. In such scenarios, the term \(W_4\) becomes the dominant factor in the variances of both methods. We find that the constant associated with this dominant term is identical in both variances. This observation suggests that while achieving a similar level of variance, we can also reduce the download cost by a factor of \(1/\mu_c\).

\subsection{Smooth Sensitivity for Triangle Counting}
\label{subsec63}


We calculate the sensitivity at distance $k$ (defined in Definition \ref{def:sensitivity}) for the triangle counting. First, let us consider the local sensitivity (defined in Definition \ref{def:local}) for the adjacency vector $a' \in \{0,1\}^n$ (denoted by $LS_f(a')$). As previously discussed, the maximum change in the count $\sum\limits_{j,k: a'_j = a'_k = 1} \tilde{a}^{(i)}_{j,h_j(k)}$ when one bit of $a'$ is updated is no more than \[UB_{\text{LS}}(a') = \max\limits_j \left(\sum\limits_{k: a'_k = 1} \tilde{a}^{(i)}_{j,h_j(k)} + \sum\limits_{k: a'_k = 1} \tilde{a}^{(i)}_{k,h_k(j)} \right).\]

When $a_i$ is the adjacent vector of user $i$ and $d_i$ is the number of ones in the vector, the number of ones in the vector $a$ with $|a - a_i| \leq k$ is not more than $UB_{\text{LS}}(a_i) + k \cdot \omega_i$. Hence, the sensitivity at $k$ of the vector $a_i$, denoted by $A^{(k)}(a_i)$ is no more than $O\left( UB_{\text{LS}}(a_i) + \frac{k\cdot s}{e^{\varepsilon_1} + 1} \right)$. Combining with Theorem \ref{thm:covariance-csstriangle}, we obtain that:
\begin{theorem}
When using the smooth sensitivity mechanism, the variance of our estimator is \(O(\frac{1}{\varepsilon_2^2} \sum \limits_{i=1}^n \left(UB_{\text{LS}}(a_i) + \frac{s}{\varepsilon_2 (e^{\varepsilon_1} + 1)}\right)^2  + \frac{s}{\mu_c (e^{\varepsilon_1} - 1)^2} \left(S_2 + \mu_c C_4 + \frac{\mu_c s}{n} W_4 + \frac{s}{n} S_3 \right))\).
\end{theorem}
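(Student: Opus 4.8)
The plan is to combine the smooth sensitivity noise contribution with the sampling variance already established in Theorem~\ref{thm:covariance-csstriangle}. The estimator is $\sum_i |\tilde{\Delta}_i|$ where each user adds noise from the smooth sensitivity mechanism with budget $\varepsilon_2$. Since the noise added at different users is independent of each other and of the sampling randomness, the total variance decomposes additively into the variance of the noiseless estimator plus the sum over users of the variance contributed by each user's added noise. The second summand is exactly the quantity in Theorem~\ref{thm:covariance-csstriangle}, so the whole task reduces to bounding the variance injected by the smooth sensitivity mechanism at each node.

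First I would recall that, for $\gamma = 4$, the noise variable $Z_\gamma$ in the smooth sensitivity mechanism has variance one (as stated in the preliminaries), so the variance of the noise added by user $v_i$ is proportional to $\tfrac{1}{\varepsilon_2^2} \bigl(S^*_{f,\beta}(a_i)\bigr)^2$ up to the constant $\tfrac{2(\gamma-1)}{\varepsilon_2}$ from the mechanism definition. The next step is to bound the smooth sensitivity $S^*_{f,\beta}(a_i) = \max_k e^{-\beta k} A^{(k)}(a_i)$ using the bound on $A^{(k)}(a_i)$ derived just before the theorem statement, namely $A^{(k)}(a_i) = O\bigl(UB_{\text{LS}}(a_i) + \tfrac{k s}{e^{\varepsilon_1}+1}\bigr)$. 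Substituting this into the $\max_k e^{-\beta k}(\cdots)$ and taking $\beta = \varepsilon_2 / 2(\gamma-1) = \Theta(\varepsilon_2)$, the geometric decay $e^{-\beta k}$ damps the linear growth in $k$; optimizing over $k$ (or simply bounding $\max_k k e^{-\beta k} = O(1/\beta) = O(1/\varepsilon_2)$) yields $S^*_{f,\beta}(a_i) = O\bigl(UB_{\text{LS}}(a_i) + \tfrac{s}{\varepsilon_2(e^{\varepsilon_1}+1)}\bigr)$, which matches the inner factor squared in the theorem.

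Putting these together, the per-user noise variance is $O\bigl(\tfrac{1}{\varepsilon_2^2}\bigl(UB_{\text{LS}}(a_i) + \tfrac{s}{\varepsilon_2(e^{\varepsilon_1}+1)}\bigr)^2\bigr)$; summing over the $n$ users gives the first term of the claimed bound, and adding the sampling variance from Theorem~\ref{thm:covariance-csstriangle} gives the second term. The two contributions add because independence makes all cross terms vanish. The main obstacle I anticipate is the smooth-sensitivity bound: one must verify that $UB_{\text{LS}}(a_i)$ is a legitimate bound on the \emph{local} sensitivity (not merely on the estimator's value) and then correctly propagate the at-distance-$k$ growth through the $\max_k e^{-\beta k}(\cdot)$ optimization, since a careless choice of $\beta$ or a loose handling of the $\max$ could produce the wrong dependence on $\varepsilon_2$ inside the squared term. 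The additive-variance decomposition and the summation over users are routine once the single-node smooth sensitivity is pinned down.
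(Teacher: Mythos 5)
Your proposal is correct and follows essentially the same route as the paper: the paper likewise bounds the sensitivity at distance $k$ by $O\bigl(UB_{\text{LS}}(a_i) + \tfrac{k s}{e^{\varepsilon_1}+1}\bigr)$, absorbs the linear growth in $k$ through the $\max_k e^{-\beta k}(\cdot)$ with $\beta = \Theta(\varepsilon_2)$ to get the $\tfrac{s}{\varepsilon_2(e^{\varepsilon_1}+1)}$ term inside the square, and then adds the per-user noise variances to the sampling variance of Theorem~\ref{thm:covariance-csstriangle} by independence. Your filled-in details (variance one of $Z_\gamma$ at $\gamma=4$, and $\max_k k e^{-\beta k} = O(1/\beta)$) are exactly the steps the paper leaves implicit.
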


\subsection{Clipping for Triangle Counting}
\label{subsec64}

Although the smooth sensitivity mechanism discussed in the previous subsection provides accurate results, calculating the sensitivity \(UB_{\text{LS}}(a_i)\) can be time-consuming in large networks. In such cases, we opt for the local Laplacian mechanism (Definition \ref{def:laplacian}).

Let \( f: \{0,1\}^n \rightarrow \mathbb{R} \) be defined as \( f(a') = \sum\limits_{j,k: a'_j = a'_k = 1} \tilde{a}^{(i)}_{j,h_j(k)} \), representing the value we intend to publish for user \( i \). Recall that the magnitude of the Laplacian noise added by the mechanism is determined by the global sensitivity \( \Delta_f = \max\limits_{a',a'': |a' - a''| = 1} \left( f(a') - f(a'') \right) \). 
Given that \( \tilde{a}^{(i)}_{j,h_j(k)} \leq \omega_i \), we find that \( \Delta_f \leq n \cdot \omega_i \).

We can use $n \cdot \omega_i$ as the magnitude of the Laplacian noise. However, $n \cdot \omega_i$ is to large that the noise can dominate the information we intend to publish. To have a better bound for the sensitivity, we employ the ideas of double clipping in \citet{imola2022communication}. Although we use some ideas from the paper, as our mechanism is different from theirs, we attain a better sensitivity by a different mathematical analysis. Our double counting algorithm is as follows:
\begin{enumerate}
    \item \textbf{Degree Clipping:} Recall that $\tilde{d}_i$ is the estimation for the degree of $v_i$ published in the first step of our mechanism. Let $\hat{d}_i = \tilde{d}_i + \varepsilon_0 \ln \frac{2}{\beta}$\footnote{We notice that, in \citet{imola2022communication}, edges are removed to match with $\hat{d}_i$ if $d_i > \hat{d}_i$. However, that process can be skipped, because we clip the sensitivity in the noisy triangle process anyway.}.
    \item \textbf{Noisy Triangle Clipping:} Let \(\text{Var}\) represent the maximum variance of a single edge estimator, \(\text{Cov}\) the maximum covariance between two edge estimators, and define \(b_i = \hat{d}_i + \sqrt{\frac{2}{\beta} (\hat{d}_i \cdot \text{Var} + \hat{d}_i^2 \cdot \text{Cov})}\). Let \( |\tilde{\Delta}_{i}^{(j')}| = \sum\limits_{(j,k) \in W_i: j' \in \{j,k\}} \tilde{a}^{(i)}_{j, h_j(k)} \), where \( |\tilde{\Delta}_{i}^{(j')}| \) denotes the contribution of node \( j' \) to \( |\tilde{\Delta}_i| \). During the calculation of \( |\tilde{\Delta}_i| \) in Line 4 of Algorithm 4, if there exists a \( j' \) such that \( |\tilde{\Delta}_{i}^{(j')}| > b_i \), we select a pair \((j,k) \in W_i\) such that \( \tilde{a}^{(i)}_{j, h_j(k)} = \omega_i - \tilde{\sigma}_i \) and update the value to 0. This process is repeated until \( |\tilde{\Delta}_{i}^{(j')}| \leq b_i \).
\end{enumerate}
The clipping process described above ensures that the contribution of any edge \((i,j')\) to the publication of \( |\tilde{\Delta}_i| \) does not exceed \( b_i \ll n \cdot \omega_i\). As a result, the sensitivity of the publication of \( |\tilde{\Delta}_i| \) is \( b_i \), allowing us to set the Laplacian noise to this value.

While the clipping process allows us to have a smaller noise in the final step of our mechanism, the process can incur bias to our publication. We give a theoretical result for this clipping technique in the following theorem.

\begin{theorem}
\label{thm:clipping}
    The clipping process incur no bias with probability at least $1 - \beta$. The variance of our estimator by the clipping process is \(O(\frac{\sum_i b_i^2}{\varepsilon_2^2}  + \frac{s}{\mu_c (e^{\varepsilon_1} - 1)^2} \left(S_2 + \mu_c C_4 + \frac{\mu_c s}{n} W_4 + \frac{s}{n} S_3 \right))\)
\end{theorem}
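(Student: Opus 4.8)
The plan is to prove the two assertions separately. For the bias claim I would show that, with probability at least $1-\beta$, the clipping procedure never fires, i.e.\ every per-node contribution $|\tilde{\Delta}_i^{(j')}|$ already satisfies $|\tilde{\Delta}_i^{(j')}| \leq b_i$; on that event the clipped estimator coincides with the unclipped one, which is unbiased by Theorem~\ref{thm:bias-css}. For the variance claim I would decompose the published quantity as (clipped signal) $+$ (independent Laplace noise), bound the noise term directly from its scale $b_i/\varepsilon_2$, and bound the signal term by the variance already established in Theorem~\ref{thm:covariance-csstriangle}.

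For the no-bias part I would argue in two stages. First, degree clipping: since $\tilde{d}_i = d_i + \Lap(1/\varepsilon_0)$ from Algorithm~\ref{algo:degree-sharing}, the offset $\varepsilon_0 \ln(2/\beta)$ in $\hat{d}_i = \tilde{d}_i + \varepsilon_0\ln(2/\beta)$ controls the one-sided Laplace tail, so that $\prob{\hat{d}_i < d_i}$ is at most a constant fraction of $\beta$; we may therefore condition on $\hat{d}_i \geq d_i$ for every $i$. Second, conditioned on this event, observe that $|\tilde{\Delta}_i^{(j')}| = \sum_{(j,k)\in W_i:\, j'\in\{j,k\}} \tilde{a}^{(i)}_{j,h_j(k)}$ is a sum of at most $d_i-1 \leq \hat{d}_i$ unbiased edge estimators, so by Theorem~\ref{thm:bias-css} its mean is $\sum a_{j,k} \leq \hat{d}_i$, and by Theorems~\ref{thm:variance-css} and~\ref{thm:covariance-css} its variance is at most $\hat{d}_i\cdot\text{Var} + \hat{d}_i^2\cdot\text{Cov}$, where $\text{Var}$ and $\text{Cov}$ are the maximal single-edge variance and pairwise covariance. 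Chebyshev's inequality at the threshold $b_i = \hat{d}_i + \sqrt{\tfrac{2}{\beta}(\hat{d}_i\text{Var}+\hat{d}_i^2\text{Cov})}$ then gives
\[
\prob{|\tilde{\Delta}_i^{(j')}| > b_i} \;\leq\; \frac{\hat{d}_i\text{Var}+\hat{d}_i^2\text{Cov}}{\tfrac{2}{\beta}\left(\hat{d}_i\text{Var}+\hat{d}_i^2\text{Cov}\right)} \;=\; \frac{\beta}{2},
\]
so each individual clipping event is rare, and a union bound over all nodes and incident $j'$ yields the global guarantee.

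For the variance part I would use independence of the Laplace noise from the signal. The noise added to each $|\tilde{\Delta}_i|$ has scale $b_i/\varepsilon_2$ and hence variance $2b_i^2/\varepsilon_2^2$; summing over users produces the $\bigo{\sum_i b_i^2/\varepsilon_2^2}$ term. The signal term is obtained from the estimator of Theorem~\ref{thm:covariance-csstriangle} by zeroing out some maximal-valued entries, i.e.\ by a coordinatewise truncation toward a bounded configuration, which does not increase its variance; thus the signal variance remains $\bigo{\frac{s}{\mu_c(e^{\varepsilon_1}-1)^2}\left(S_2 + \mu_c C_4 + \frac{\mu_c s}{n}W_4 + \frac{s}{n}S_3\right)}$. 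Adding the two independent contributions gives the stated bound.

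The main obstacle is the bookkeeping in the union bound: there are $\Theta(\sum_i d_i) = \Theta(|E|)$ clipping events together with the $n$ degree-clipping events, so a naive union of the per-event $\beta/2$ bounds would force the failure parameter inside $b_i$ and $\hat{d}_i$ to be $\beta$ divided by this count rather than $\beta$ itself. A clean $1-\beta$ statement therefore requires either rescaling $\beta$ in the definitions of $b_i$ and $\hat{d}_i$, or a sharper argument that controls all contributions of a single node jointly. The second delicate point is justifying rigorously that clipping cannot increase the signal variance; rather than invoking a generic contraction inequality, I would handle this by conditioning on the no-clip event (on which the clipped and unclipped estimators coincide, so Theorem~\ref{thm:covariance-csstriangle} applies verbatim) and absorbing the complementary $\beta$-probability event into the constants.
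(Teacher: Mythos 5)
Your proposal follows essentially the same route as the paper: a one-sided Laplace tail bound for the degree clipping, a Chebyshev argument at the threshold $b_i$ for the noisy-triangle clipping (the paper's Lemmas \ref{lemma:degree-clipping} and \ref{lemma:noisy-triangle-clipping}), a union bound combining the two $\beta/2$ failure events, and then adding the Laplace-noise variance $\bigo{\sum_i b_i^2/\varepsilon_2^2}$ to the signal variance inherited from Theorem \ref{thm:covariance-csstriangle}. The two delicate points you flag --- the bookkeeping of per-event $\beta/2$ bounds under the union bound, and conditioning on the no-clip event so that the unclipped variance bound applies verbatim --- are glossed over in the paper's own proof as well, so your treatment is, if anything, slightly more careful than the original.
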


We use the following 2 lemmas in our analysis.

\begin{lemma}
The probability that we remove some edges in the degree clipping is smaller than $\beta / 2$. \label{lemma:degree-clipping}
\end{lemma}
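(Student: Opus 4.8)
The plan is to recognize the degree-clipping removal event as a one-sided tail event of the Laplace noise injected in the degree-sharing step, and to control it with the explicit Laplace CDF.

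First I would make the removal event explicit. Following the convention of \citet{imola2022communication} recalled in the footnote, the degree clipping removes an edge from user $v_i$ exactly when the true low degree exceeds the clipped threshold, i.e.\ when $d_i > \hat{d}_i$. Since $\hat{d}_i = \tilde{d}_i + \varepsilon_0 \ln\frac{2}{\beta}$ and $\tilde{d}_i = d_i + X_i$ with $X_i \sim \mathrm{Lap}(1/\varepsilon_0)$ the noise drawn in Algorithm~\ref{algo:degree-sharing}, we have $\hat{d}_i - d_i = X_i + \varepsilon_0\ln\frac{2}{\beta}$. Hence the removal event for user $v_i$ is precisely $\{X_i < -\varepsilon_0 \ln\frac{2}{\beta}\}$, a lower-tail event for a zero-mean Laplace variable.

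Next I would bound this tail. Using the closed form $\prob{X_i < -c} = \tfrac12 e^{-c/b}$, valid for any $c>0$, where $b$ denotes the scale of $X_i$, and substituting $c = \varepsilon_0\ln\frac2\beta$, a direct computation shows the per-user removal probability is at most $\beta/2$ (indeed the slack is comfortable, the factor $\tfrac12$ coming from the one-sidedness of the event, since only downward deviations of the noisy degree trigger a clip). The only care needed here is to track which argument of $\mathrm{Lap}$ denotes the scale versus the rate, so that the offset $\varepsilon_0\ln\frac2\beta$ cancels against $1/b$ and leaves the factor $\beta$ inside the exponential as intended.

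The step I expect to be the main obstacle is the aggregation across users. If the statement is read for a single user, the tail bound above finishes the proof directly. If instead ``we remove some edges'' refers to the whole mechanism, i.e.\ the event $\bigcup_i \{d_i > \hat{d}_i\}$, then I would apply a union bound over the $n$ users; this forces the per-user probability to be $O(\beta/n)$ rather than $O(\beta)$, so I would need to verify that the chosen offset already carries the appropriate logarithmic dependence, or that the $\beta$ in this lemma is a per-user quantity later combined with the companion noisy-triangle-clipping lemma to yield the $1-\beta$ guarantee of Theorem~\ref{thm:clipping}. Reconciling this bookkeeping with the stated offset $\varepsilon_0\ln\frac2\beta$, together with fixing the Laplace-scale convention, is the delicate part; the tail estimate itself is routine.
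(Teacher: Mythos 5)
Your proposal is correct and takes essentially the same route as the paper: the paper likewise reduces the removal event $d_i > \hat{d}_i$ to the event that the degree-sharing Laplace noise deviates by more than $\varepsilon_0 \ln\frac{2}{\beta}$, and bounds this by $\beta/2$ (using the two-sided magnitude tail where you use the slightly sharper one-sided tail, giving $\beta/4$). The paper reads the event per user, matching your first reading, so the union-bound bookkeeping you flag as the main obstacle does not arise in its argument.
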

\begin{proof}
    We remove edges when \( d_i > \hat{d}_i = \tilde{d}_i + \varepsilon_0 \ln \frac{2}{\beta} \). This indicates that \( \tilde{d}_i < d_i - \varepsilon_0 \ln \frac{2}{\beta} \), meaning the magnitude of the Laplacian noise added during the degree-sharing step is greater than \( \varepsilon_0 \ln \frac{2}{\beta} \). This occurs with a probability smaller than \( \beta / 2 \).
\end{proof}

While we consider the degree clipping in the previous lemma, we consider the noisy triangle clipping in the subsequent lemma.

\begin{lemma}
    The probability that we update some $\tilde{a}_{j, h_j(k)}^{(i)}$ in the noisy triangle clipping is not greater than $\beta / 2$. \label{lemma:noisy-triangle-clipping}
\end{lemma}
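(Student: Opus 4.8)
The plan is to bound the probability of performing any noisy triangle clipping by controlling the deviation of each partial sum $|\tilde{\Delta}_i^{(j')}|$ above its threshold $b_i$. First I would observe that each $|\tilde{\Delta}_i^{(j')}| = \sum_{(j,k) \in W_i: j' \in \{j,k\}} \tilde{a}^{(i)}_{j,h_j(k)}$ is a sum of at most $d_i$ edge estimators (one for each neighbor $k$ of $v_i$ paired with the fixed $j'$), each of which is unbiased for the corresponding $a_{j,k} \in \{0,1\}$ by Theorem~\ref{thm:bias-css}. Conditioning on the event from Lemma~\ref{lemma:degree-clipping} that degree clipping does not trigger (so $d_i \leq \hat{d}_i$), the expected value of $|\tilde{\Delta}_i^{(j')}|$ is at most $\hat{d}_i$, since it equals the true number of triangles through edge $(i,j')$, which cannot exceed the degree.

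Next I would compute the variance of $|\tilde{\Delta}_i^{(j')}|$. Since it is a sum over at most $\hat{d}_i$ terms, its variance decomposes into at most $\hat{d}_i$ individual variances, each bounded by $\text{Var}$ (the maximum single-edge variance from Theorem~\ref{thm:variance-css}), plus at most $\hat{d}_i^2$ covariance terms, each bounded by $\text{Cov}$ (the maximum pairwise covariance from Theorem~\ref{thm:covariance-css}). This yields $\Var{|\tilde{\Delta}_i^{(j')}|} \leq \hat{d}_i \cdot \text{Var} + \hat{d}_i^2 \cdot \text{Cov}$, which is precisely the quantity appearing inside the square root in the definition of $b_i$.

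With the mean bounded by $\hat{d}_i$ and the variance bounded by $\hat{d}_i \cdot \text{Var} + \hat{d}_i^2 \cdot \text{Cov}$, I would apply Chebyshev's inequality to control the probability that $|\tilde{\Delta}_i^{(j')}|$ exceeds $b_i = \hat{d}_i + \sqrt{\frac{2}{\beta}(\hat{d}_i \cdot \text{Var} + \hat{d}_i^2 \cdot \text{Cov})}$. Since the gap $b_i - \hat{d}_i$ equals $\sqrt{\frac{2}{\beta}}$ times the standard-deviation bound, Chebyshev gives $\Pr[|\tilde{\Delta}_i^{(j')}| > b_i] \leq \beta/2$ for a single $(i,j')$. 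The main obstacle is moving from this per-pair bound to the claimed overall bound of $\beta/2$: a naive union bound over all nodes $i$ and all candidate $j'$ would lose a factor proportional to the number of edges. I expect the argument to handle this either by noting that the degree-clipping event and the noisy-triangle-clipping event together need only absorb the total failure probability $\beta$, so the $\beta/2$ budget for this lemma should in fact be allocated per user via a rescaled threshold, or by arguing that the clipping is triggered rarely enough in aggregate. Resolving this bookkeeping—ensuring the constant $\sqrt{2/\beta}$ genuinely suffices across all relevant $(i,j')$ pairs rather than just one—will be the delicate part, and I would finish by combining Lemmas~\ref{lemma:degree-clipping} and~\ref{lemma:noisy-triangle-clipping} via a union bound to conclude that clipping introduces no bias with probability at least $1 - \beta$, while the variance bound follows by substituting $b_i$ as the Laplacian sensitivity into the variance expression from Theorem~\ref{thm:covariance-csstriangle}.
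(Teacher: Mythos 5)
Your argument coincides with the paper's proof: both bound $\mathbb{E}\left[|\tilde{\Delta}_i^{(j')}|\right]$ by $\hat{d}_i$ using the unbiasedness from Theorem~\ref{thm:bias-css}, bound the variance by $\hat{d}_i \cdot \text{Var} + \hat{d}_i^2 \cdot \text{Cov}$ via the at-most-$\hat{d}_i$ terms in the sum, and apply Chebyshev with the gap $b_i - \hat{d}_i = \sqrt{\tfrac{2}{\beta}\left(\hat{d}_i \cdot \text{Var} + \hat{d}_i^2 \cdot \text{Cov}\right)}$ to obtain $\Pr\left[\,|\tilde{\Delta}_i^{(j')}| > b_i\,\right] \leq \beta/2$. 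The union-bound bookkeeping you flag as the delicate step is not resolved in the paper either: its proof establishes only this per-pair bound and asserts the lemma's conclusion about updating \emph{some} $\tilde{a}^{(i)}_{j,h_j(k)}$ directly, so your proposal is, if anything, more candid about that remaining gap than the paper's own proof.
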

\begin{proof}
By the degree clipping and the definition of $W_i$, the number of terms in the summation 
\[
\sum_{(j,k) \in W_i : j' \in \{j,k\}} \tilde{a}^{(i)}_{j, h_j(k)}
\]
is at most $\hat{d}_i$. From Theorem~\ref{thm:bias-css}, we have that 
\[
\mathbb{E}[\tilde{a}^{(i)}_{j, h_j(k)}] = a_{j,k} \in \{0,1\}.
\]
Therefore,
\[
\mathbb{E}\left[ \sum_{(j,k) \in W_i : j' \in \{j,k\}} \tilde{a}^{(i)}_{j, h_j(k)} \right] \leq \hat{d}_i.
\]

Since the summation contains at most $\hat{d}_i$ terms, its variance can be bounded by
\[
\Var{ \sum_{(j,k) \in W_i : j' \in \{j,k\}} \tilde{a}^{(i)}_{j, h_j(k)} } 
\leq \hat{d}_i \cdot \text{Var} + \hat{d}_i^2 \cdot \Cov.
\]

Using the bounds on the expectation and variance, we conclude that
\[
\Pr\left[ |\tilde{\Delta}_i^{(j')}| > b_i \right] \leq \frac{\beta}{2}.
\]
\end{proof}

We are now ready to prove Theorem \ref{thm:clipping}.
\begin{proof}[Proof of Theorem \ref{thm:clipping}]
From Lemmas \ref{lemma:degree-clipping}, \ref{lemma:noisy-triangle-clipping}, and union bound, we obtain that the clipping has no effect and incur no bias with probability at most $\beta/2$. The variance of the Laplacian noise added by the node $v_i$ is $b_i^2/\varepsilon_2^2$. Therefore, the final step of our mechanism increases the variance by $\sum_i b_i^2/\varepsilon_2^2$.
\end{proof}

In most practical cases, we have that $b_i = O\left(\frac{d_i \cdot s}{\sqrt{\beta} (e^{\varepsilon_1} - 1)}\right)$.
Based on that result, in all our experiments, the failure probability \( \beta \) was set to \( 10^{-3} \).

One might wonder why we do not set the value of $b_i$ to $\max\limits_{j'} |\tilde{\Delta}_i^{(j')}|$, as it is smaller than the value used in the current clipping process. However, this is not feasible because $\max\limits_{j'} |\tilde{\Delta}_i^{(j')}|$ contains sensitive information, and we cannot determine the noise parameter of our mechanism based on such private data.

In contrast, the variance and covariance required for the computation of the clipping parameter $b_i$ can be computed using $n, s, \varepsilon$ and $d_i$. With the exception of $d_i$, all those are public information. Concerning $d_i$, it can be bounded with high probability by 0 and $\hat{d}_i$ to obtain a bound on $b_i$. Note that in the rare case that $d_i > \hat{d}_i$, the clipping mechanism still guaranties the privacy protection.

\section{Experiments}
\label{sec:experiments}

In this section, we assess our method, GroupRR, in comparison to the leading communication-constrained graph publishing technique, ARR, as detailed in \citet{imola2022communication}. As \citet{imola2022communication} has proven itself to be a more accurate algorithm than the rest of the state of the art, we restrict ourself to the comparison with this algorithm. We focus on the task of counting triangles in Subsection~\ref{sec:exp-triangles} and the task of counting 4-cycles in Subsection~\ref{sec:exp-cycles}. All the code used for these experiments can be accessed at the following address \url{https://github.com/Gericko/GroupRandomizedResponse}.

\subsection{Triangle Counting}
\label{sec:exp-triangles}

\begin{figure}[h!]
    \centering
    \begin{subfigure}{0.70\textwidth}
        \centering
        \caption{Wikipedia Article Network}
        \includegraphics[width=\textwidth]{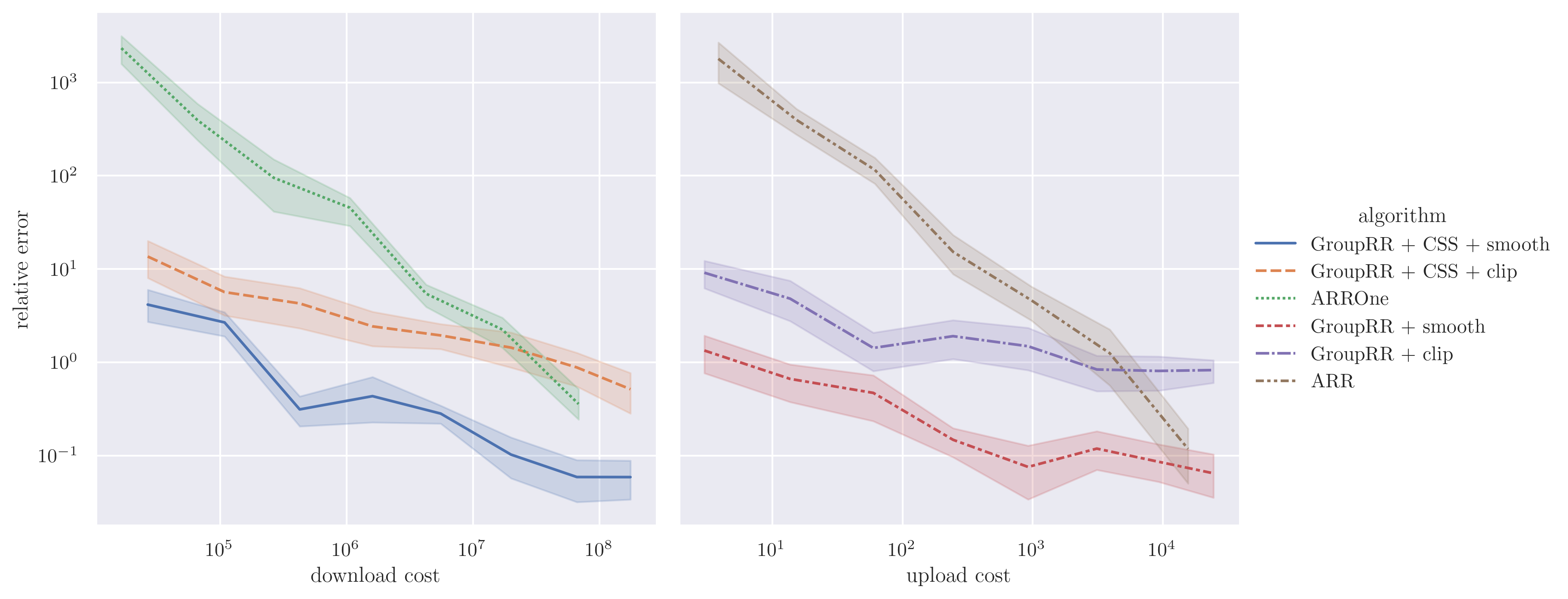}
    \end{subfigure} \\
    \begin{subfigure}{0.70\textwidth}
        \centering
        \caption{Facebook}
        \includegraphics[width=\textwidth]{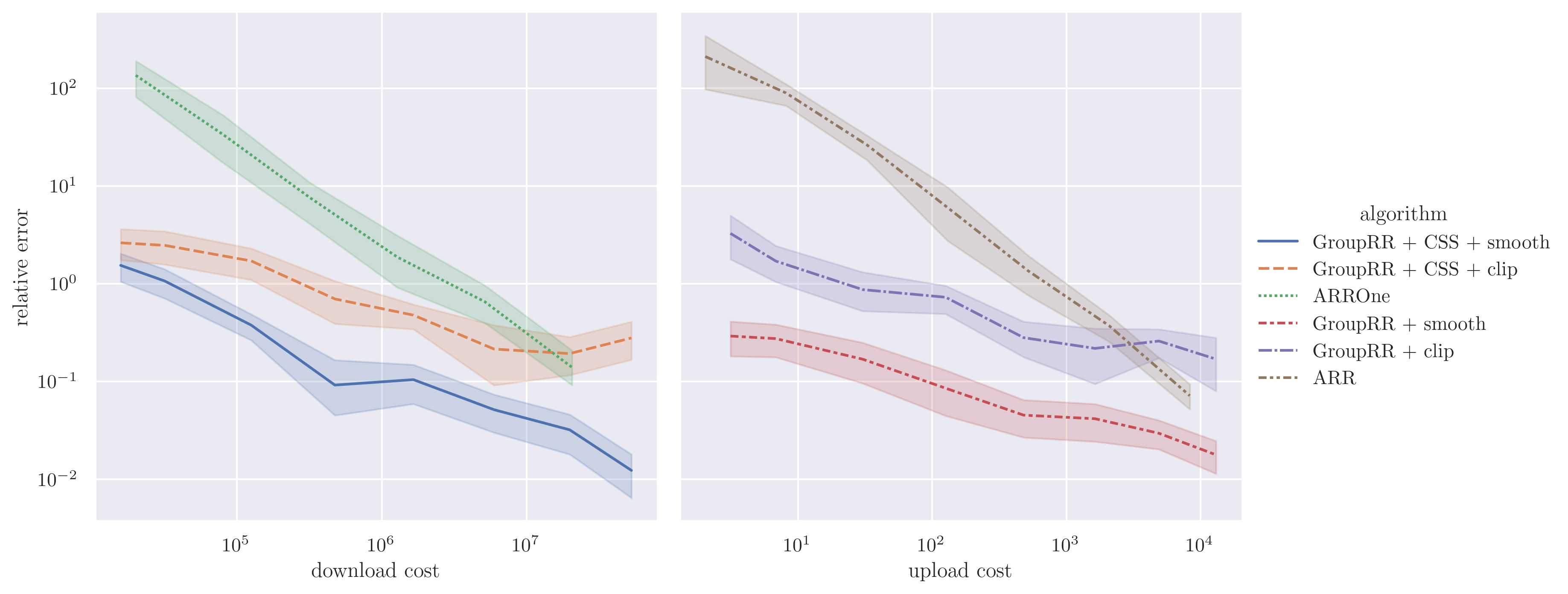}
    \end{subfigure}
    \caption{Comparative analysis of relative error between our algorithm and ARR across different download and upload costs applied to the Wikipedia Article Network and the Facebook graphs, assuming a privacy budget of $\varepsilon = 1$. The values of the parameter $s$ used for these experiments vary from 1 to 26 for the Wikipedia Article Network, and from 1 to 21 for Facebook, resulting in values of $\mu_{c}$ varying from 1 to $16,384$ and from 1 to $8,192$ respectively. The term `CSS' in the legend stands for our central server sampling method}
    \label{fig:exp1}
\end{figure}

In this subsection, we compare the triangle counting algorithm described in earlier sections with the leading-edge algorithm named ARROne from \citet{imola2022communication}\footnote{It is important to note, as mentioned in Appendix I of \cite{imola2022communication}, that the original version of the algorithm underestimates the sensitivity. Consequently, we have adjusted the sensitivity calculations used in our evaluations. Specifically, in our modified version, the contribution of an edge \((v_i,v_j)\) to the count by user \(v_i\) is defined as \(|\{ k: a_{i,k}=1, \{v_j, v_k\} \in M_i, \text{ and } k<i \}|\) instead of \(|\{ k: a_{i,k}=1, \{v_j, v_k\} \in M_i, \text{ and } j<k<i \}|\).}. Among the various algorithms discussed in the paper, our analysis focuses on the one that utilizes the 4-cycle trick, which has been shown to yield the highest performance.

Our experiments primarily utilized the Wikipedia Article Networks dataset \citep{leskovec2010signed, leskovec2010predicting}, which contains 7,115 nodes and 100,762 edges, and the Facebook dataset
\citep{leskovec2012learning}, which contains 4,039 nodes and 88,234 edges. We chose these two graphs to illustrate two different types of social networks. The Facebook graph represents a network with many clusters, while the Wikipedia graph exemplifies networks centered around a few key nodes.
Additionally, to prove that our methods scales to larger graph, we also used the Google+ dataset \citep{leskovec2012learning} that is constructed from Google circles. It contains 107,614 nodes and 13,673,453 edges.

To conduct experiments on graphs of different sizes from the original, we generated extracted graphs by randomly selecting nodes to match the desired size and then examining the subgraph induced by these nodes. We have verified that selecting random subgraphs does not alter the graph topology of either graphs with respect to the subgraph counting problem.

The focus of our experiments is on assessing accuracy, measured by relative error, as we adjust various parameters. These parameters include download cost, upload cost, graph size, and privacy budget. The results presented are averages calculated from 10 separate simulation runs for each set of parameters. 
All experiments are comparisons between the following three methods:
(1) The leading-edge method detailed in \citet{imola2022communication}
(2) Our GroupRR mechanism enhanced by the central server sampling and the clipping method (Subsection \ref{subsec64})
(3) Our GroupRR mechanism enhanced by the central server sampling and the smooth sensitivity (Subsection \ref{subsec63})

\subsubsection{Error Analysis for Various Download Costs}

\label{subsubsec:download}

Figure~\ref{fig:exp1} illustrates that our algorithm outperforms the leading-edge algorithm across all download costs. When the download cost is minimal, the enhancement in relative error can be as substantial as a 1000-fold reduction.
As the cost of communication goes down, the difference in performance between our method and the one described in \citet{imola2022communication} becomes more noticeable.

\paragraph{Parameters selection for GroupRR} Recall that the download cost of the randomized response algorithm is approximately $\mathsf{c} n^2$, for some constant $\mathsf{c}$ depending on the privacy budget $\varepsilon$. Suppose our budget for the download cost is $M$; then we aim to reduce the cost by a factor of $\mu^* = \mathsf{c} n^2 / M$. As discussed in the previous section, for \textsc{GroupRR} with central server sampling, this reduction factor is given by $\mu^* = \mu_c / s^2$, where $\mu_c$ and $s$ are tunable parameters in \textsc{GroupRR}.
To achieve the desired reduction, we set $s = (1/\mu^*)^{1/3}$ and $\mu_c = (\mu^*)^{1/3}$. Under this setting, the $S_2$ term in the error bound (as described in Theorem~\ref{thm:covariance-csstriangle}) scales as $(\mu^*)^{-2/3}$, while the $C_4$ term scales as $(\mu^*)^{-1/3}$.

\paragraph{Parameters selection for ARROne} On the other hand, in the context of ARROne, where $\mu$ represents the sampling rate of the mechanism, the reduction in download costs amounts to $\mu^2$. We hence set $\mu$ to $\sqrt{\mu^*}$. By that, the \(S_2\) factor in the error term scales with \(1 / \mu^*\), and the \(C_4\) factor scales with \(1 / \sqrt{\mu^*}\). This scaling behavior highlights that our algorithm proves more effective at lower communication costs compared to the method proposed in \citet{imola2022communication}.

When the download cost reaches its maximum — a scenario that occurs when no sampling is implemented — our mechanism and the leading mechanism exhibit similar performance. This outcome is anticipated, as in this scenario, both approaches essentially align with each other and with the classical local differential privacy-based triangle counting algorithm, as described in~\citet{imola2021locally}.

The experimental findings also reveal that, across all the download cost budgets we considered, the smooth sensitivity method consistently outperforms the clipping method.

\begin{figure}[h!]
    \centering
    \includegraphics[width=0.4\linewidth]{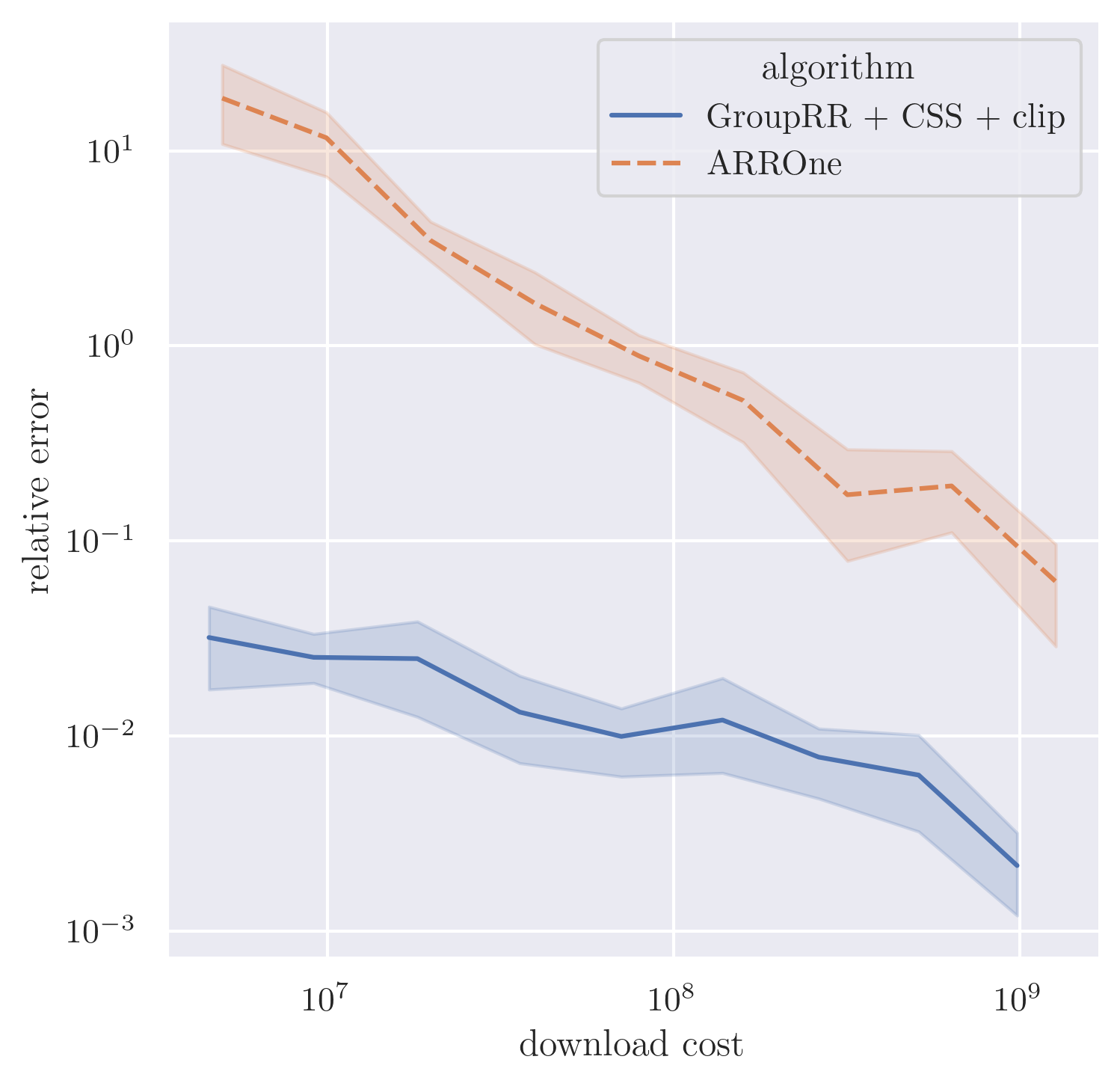}
    \caption{Comparative analysis of relative error between our algorithm and ARR across different download costs applied to the Google+ graph, assuming a privacy budget of $\varepsilon = 1$. The values of the parameter $s$ used for this experiment vary from 1 to 32, resulting in values of $\mu_{c}$ varying from 1 to $32,768$. The term `CSS' in the legend stands for our central server sampling method.}
    \label{fig:exp-gplus}
\end{figure}

Figure~\ref{fig:exp-gplus} illustrates that the performance of our method compares to the state of the art becomes better as the size of the graph increases. Indeed, as the download cost needs to be reduced by a larger factor to keep the same communication, the gap in relative errors becomes bigger.
For this experience we only used the clipping version of our algorithm as smooth sensitivity is slow for large graphs.

\subsubsection{Error Analysis for Upload Costs}


Throughout this section, we fix the download-reduction target at $\mu^* = 1/1000$. As discussed in Section~\ref{subsubsec:download}, this choice corresponds to GroupRR parameters $s=10$ and $\mu_C=1/10$. For \textsc{ARROne}, we set $\mu \approx 1/31.6$ (i.e., $\mu \approx 0.0316$).

Figure \ref{fig:exp1} demonstrates that our approach has enhanced the relative errors in counting across all ranges of upload cost budgets. Notably, the smaller the budget, the more significant the improvement we observe. For the lowest upload cost budget in this experiment, we have achieved an enhancement in relative error by up to a factor of 1000. It is also evident that the smooth sensitivity method exhibits superior performance compared to the clipping method in this context.



\subsubsection{Error Analysis for Various Graph Sizes and Privacy Budgets}

\begin{figure}[h!]
    \centering
    \begin{subfigure}{0.70\columnwidth}
        \centering
        \caption{Wikipedia Article Network}
        \includegraphics[width=0.49\textwidth]{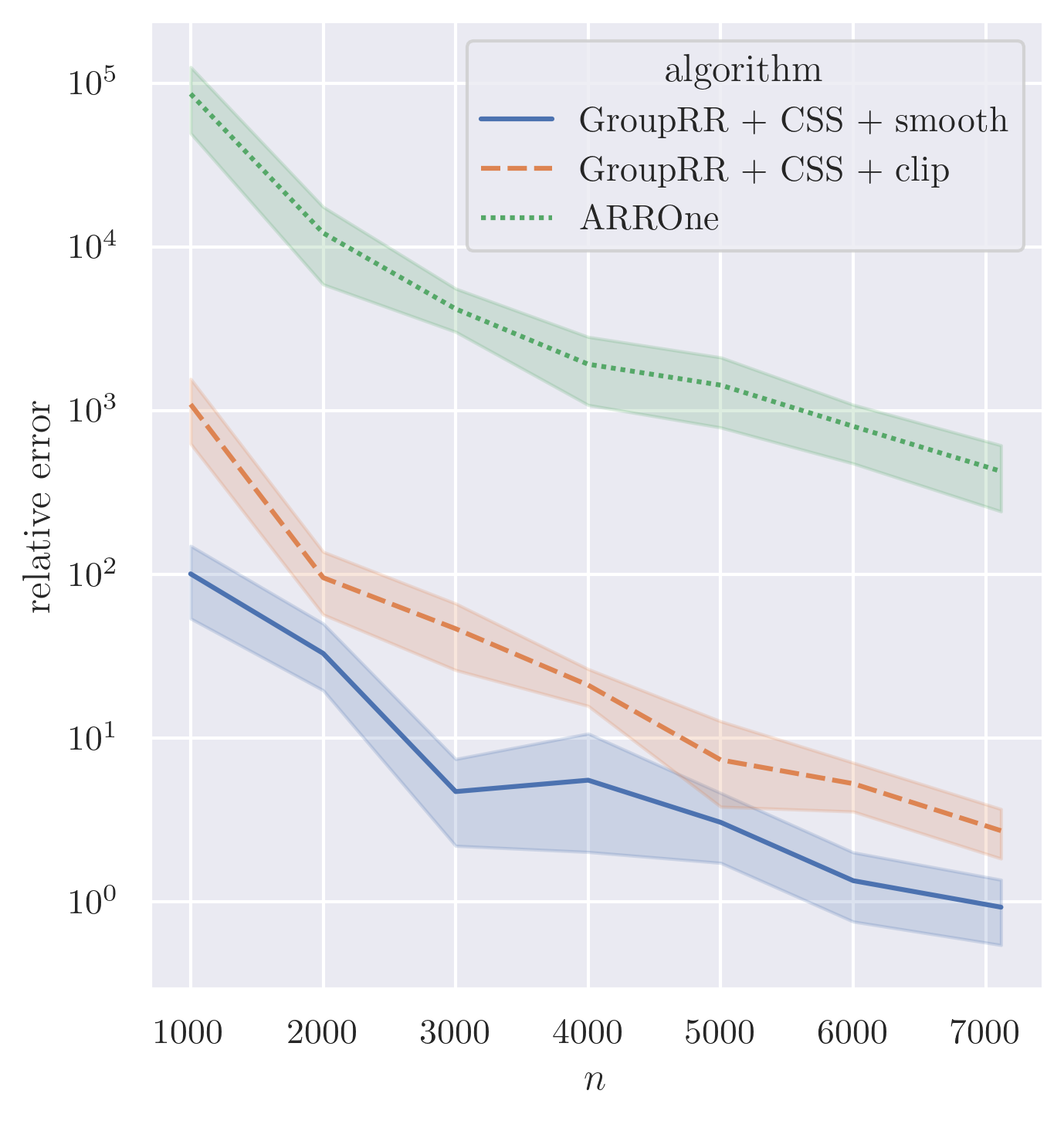}
        \includegraphics[width=0.49\textwidth]{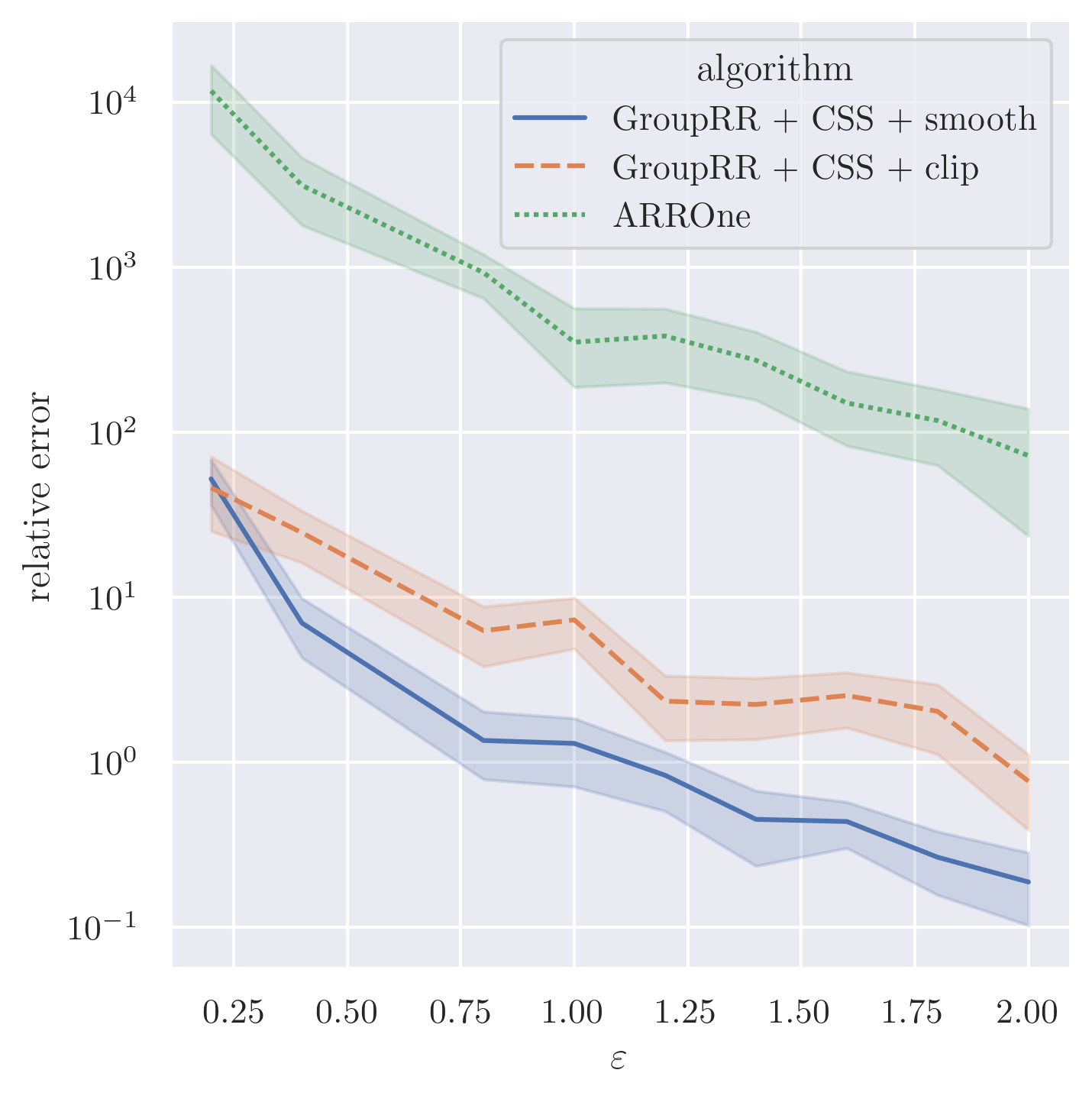}
    \end{subfigure}
    \begin{subfigure}{0.70\columnwidth}
        \centering
        \caption{Facebook}
        \includegraphics[width=0.49\textwidth]{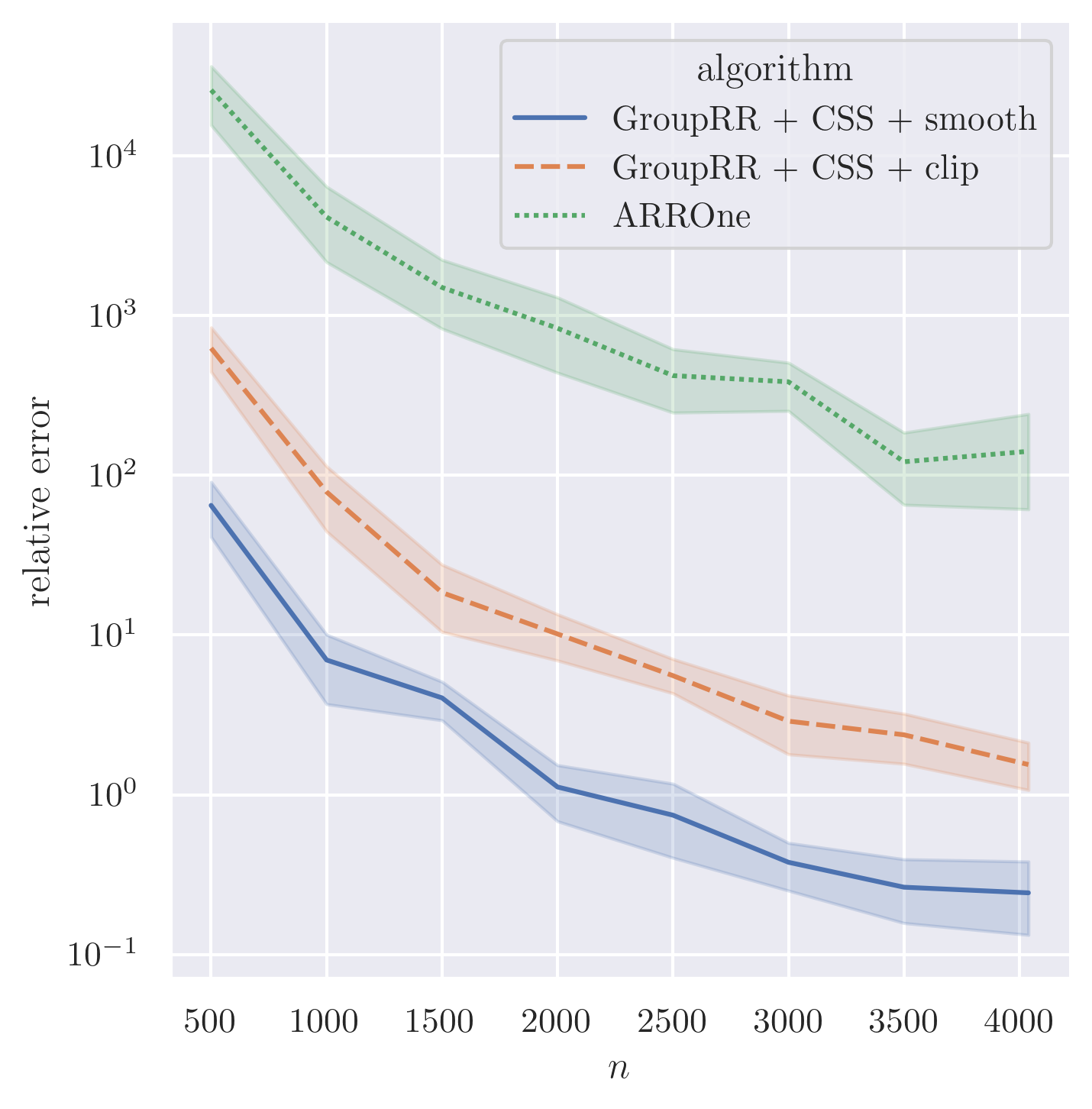}
        \includegraphics[width=0.49\textwidth]{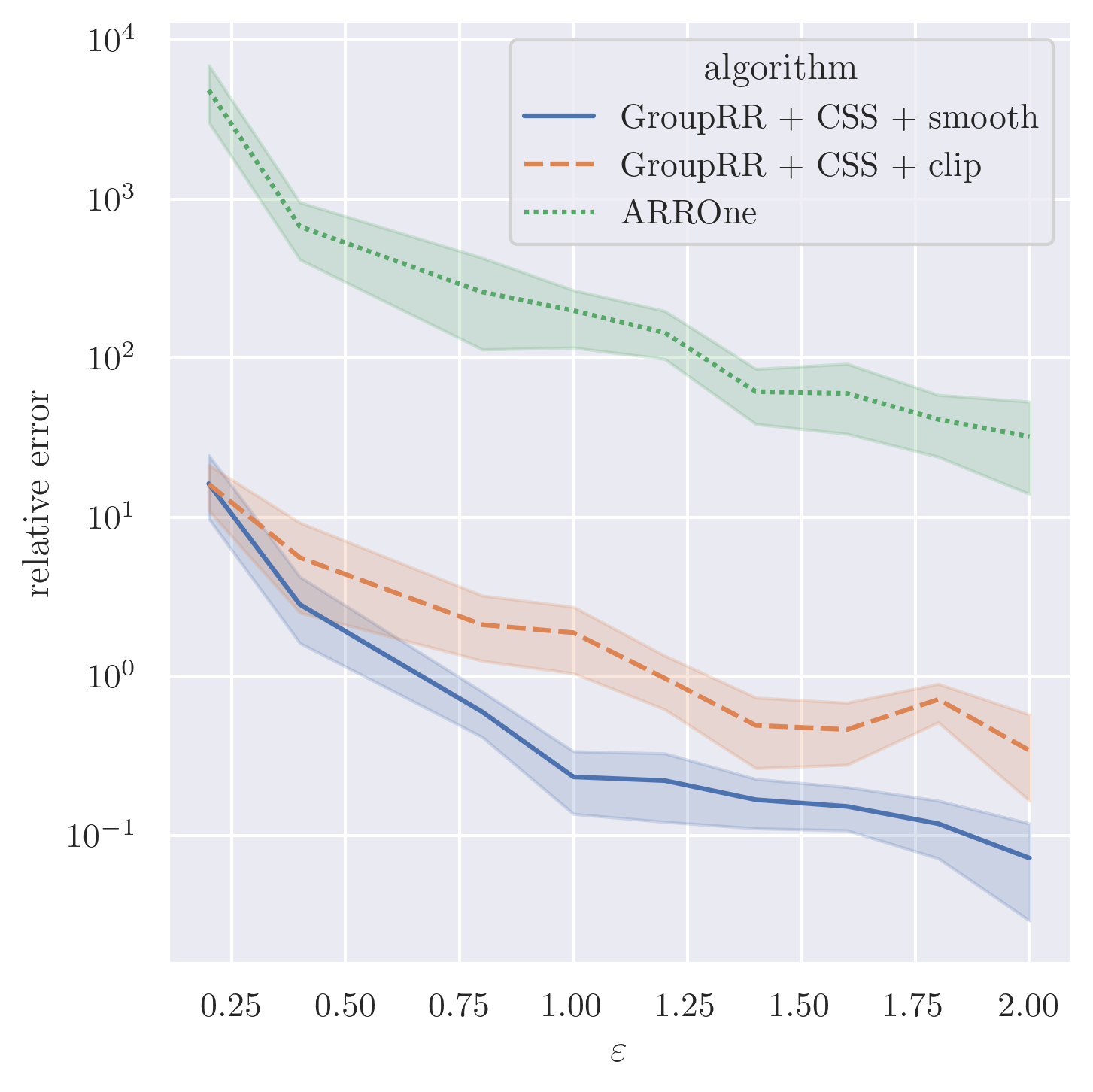}
    \end{subfigure}
    \caption{Comparing the relative error between our algorithm and the state-of-the-art approach (left plots) across different graph sizes $(n)$, derived from the Wikipedia Article Network and Facebook graphs with $\varepsilon = 1$ (right plots) under varying privacy budgets. In both experiments, we maintain a download reduction setting of $\mu^* = 1000$. Note that in the legend, `CSS' represents our central server sampling method.}
    \label{fig:exp23}
\end{figure}

Figure~\ref{fig:exp23} shows that our algorithm improves the current state-of-the-art across a wide array of parameters. In these experiments, we tested various graph sizes and privacy budgets, observing similar trends for both algorithms. Notably, even though the performance gap between the two algorithms appears to expand as the privacy budget increases, the difference in performance consistently remains around three orders of magnitude, irrespective of the graph size.

\subsubsection{Runtime for Various Graph Sizes}

\begin{figure}[h!]
    \centering
    \begin{subfigure}{0.35\columnwidth}
        \centering
        \caption{Wikipedia Article Network}
        \includegraphics[width=\textwidth]{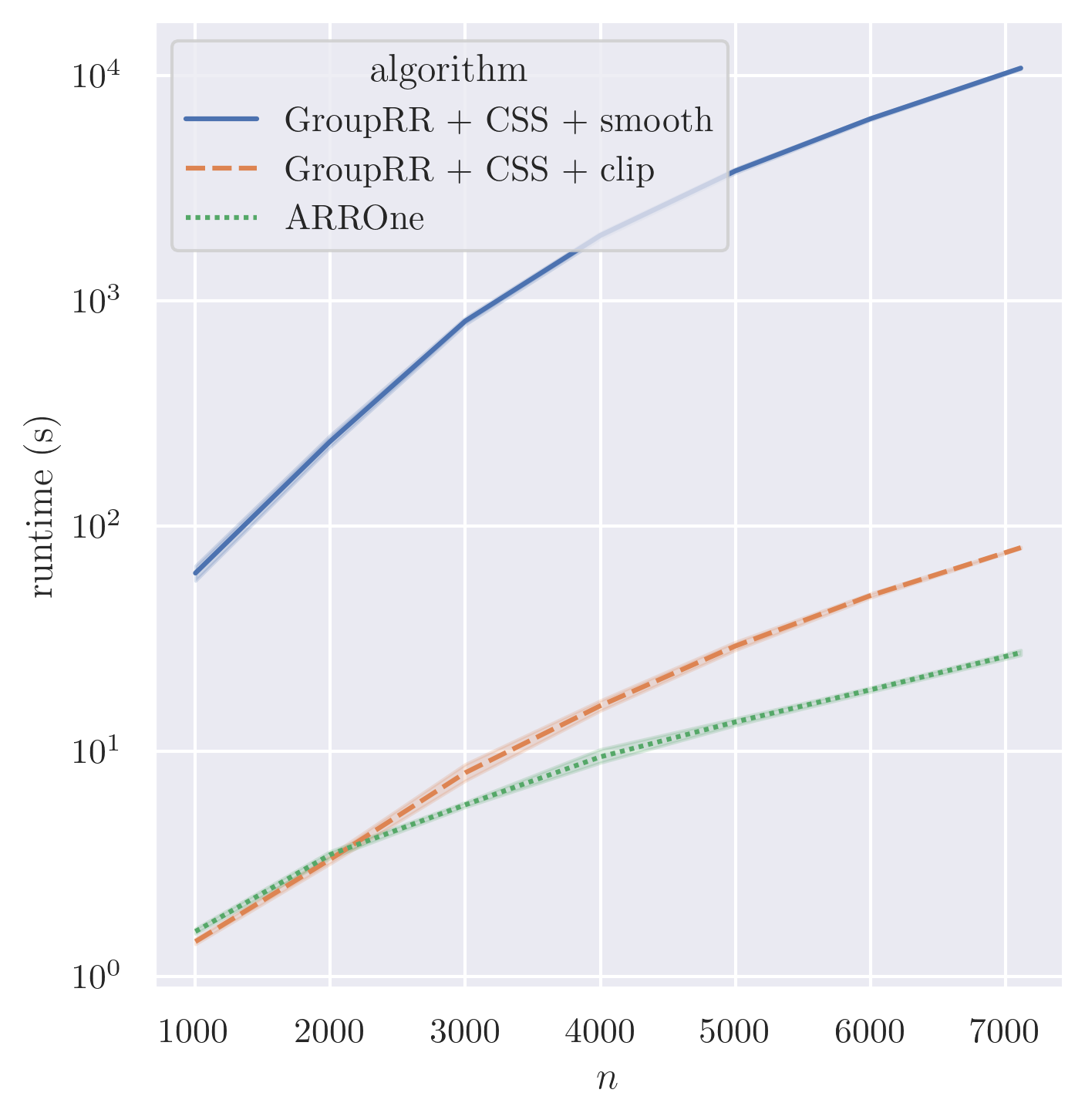}
    \end{subfigure}
    \begin{subfigure}{0.36\columnwidth}
        \centering
        \caption{Facebook}
        \includegraphics[width=\textwidth]{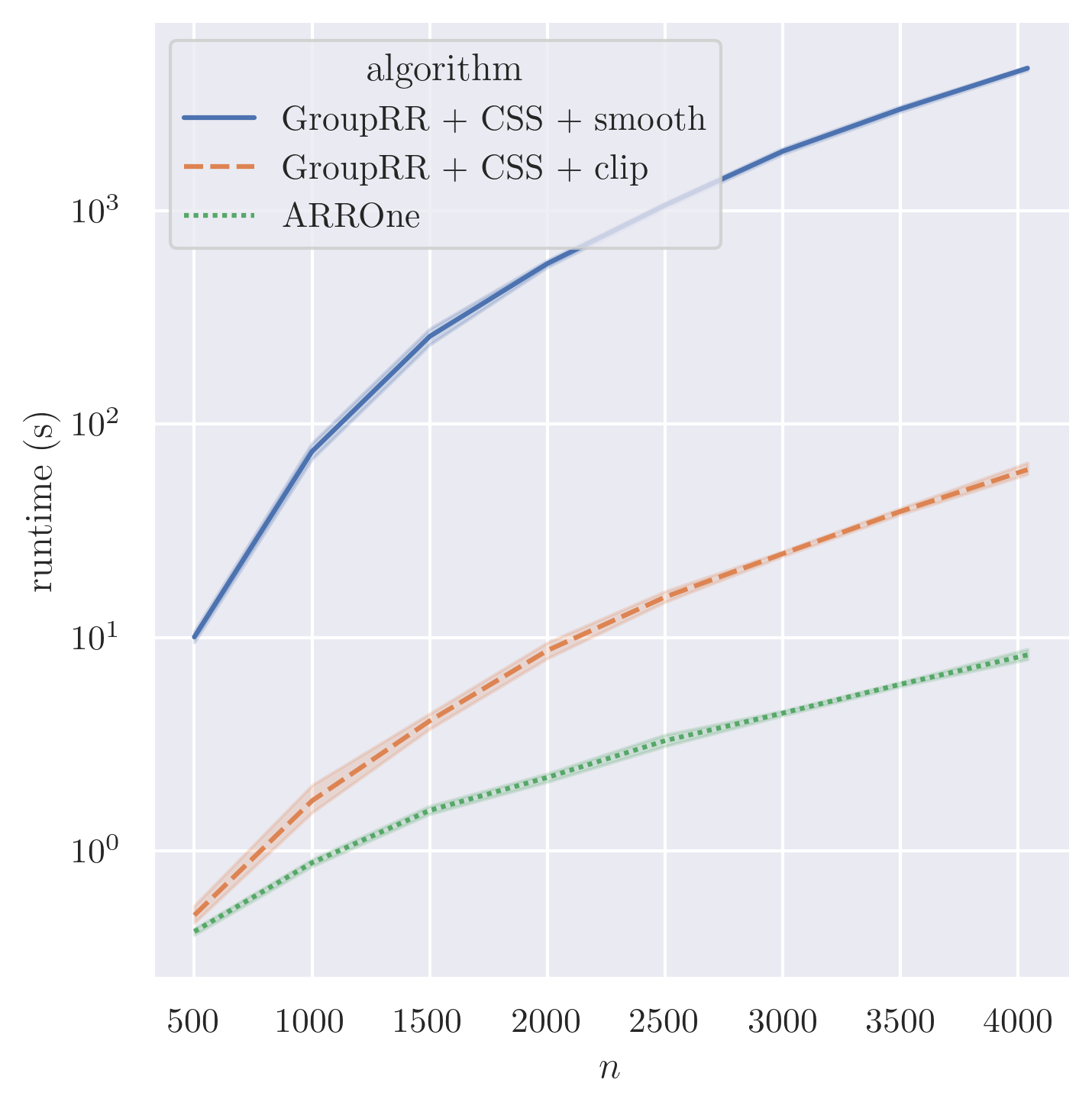}
    \end{subfigure}
    \caption{Comparison on runtime between our algorithm and the leading edge algorithm for various graph sizes generated from the Wikipedia Article Network and Facebook graphs with $\varepsilon = 1$ and a download cost reduction of 1000. In the legend, the abbreviation `CSS' denotes our central server sampling method.}
    \label{fig:exp-time}
\end{figure}

Figure~\ref{fig:exp-time} presents the execution times for the state-of-the-art ARR algorithm, compared to two variants of our GroupRR algorithm: one with clipping and another with smooth sensitivity, across different graph sizes. The results indicate that GroupRR with smooth sensitivity incurs significantly longer execution times, whereas GroupRR with clipping shows running times comparable to ARR. This discrepancy in performance can be attributed to their asymptotic complexities: both ARR and GroupRR with clipping operate in $\bigo{d_{max} m}$, whereas the complexity for GroupRR with smooth sensitivity is $\bigo{n m}$. The graphs, plotted on a logarithmic scale, do not form perfect straight lines, reflecting that in our model for generating graphs of varying sizes, the average degree does not scale proportionally with graph size $n$.

This experiment demonstrates the balance between speed and accuracy required when choosing between our two methods. GroupRR with clipping offers good accuracy with an execution time comparable to state-of-the-art methods. Conversely, GroupRR with smooth sensitivity prioritizes accuracy but at the expense of execution speed. This trade-off becomes particularly significant when $d_{max}$, the maximum degree, is small relative to the number of users $n$.

\subsubsection{Error Analysis on Power-law Graphs}
\label{subsubsec:power-law}

\begin{figure}[h!]
    \centering
    \includegraphics[width=0.6\linewidth]{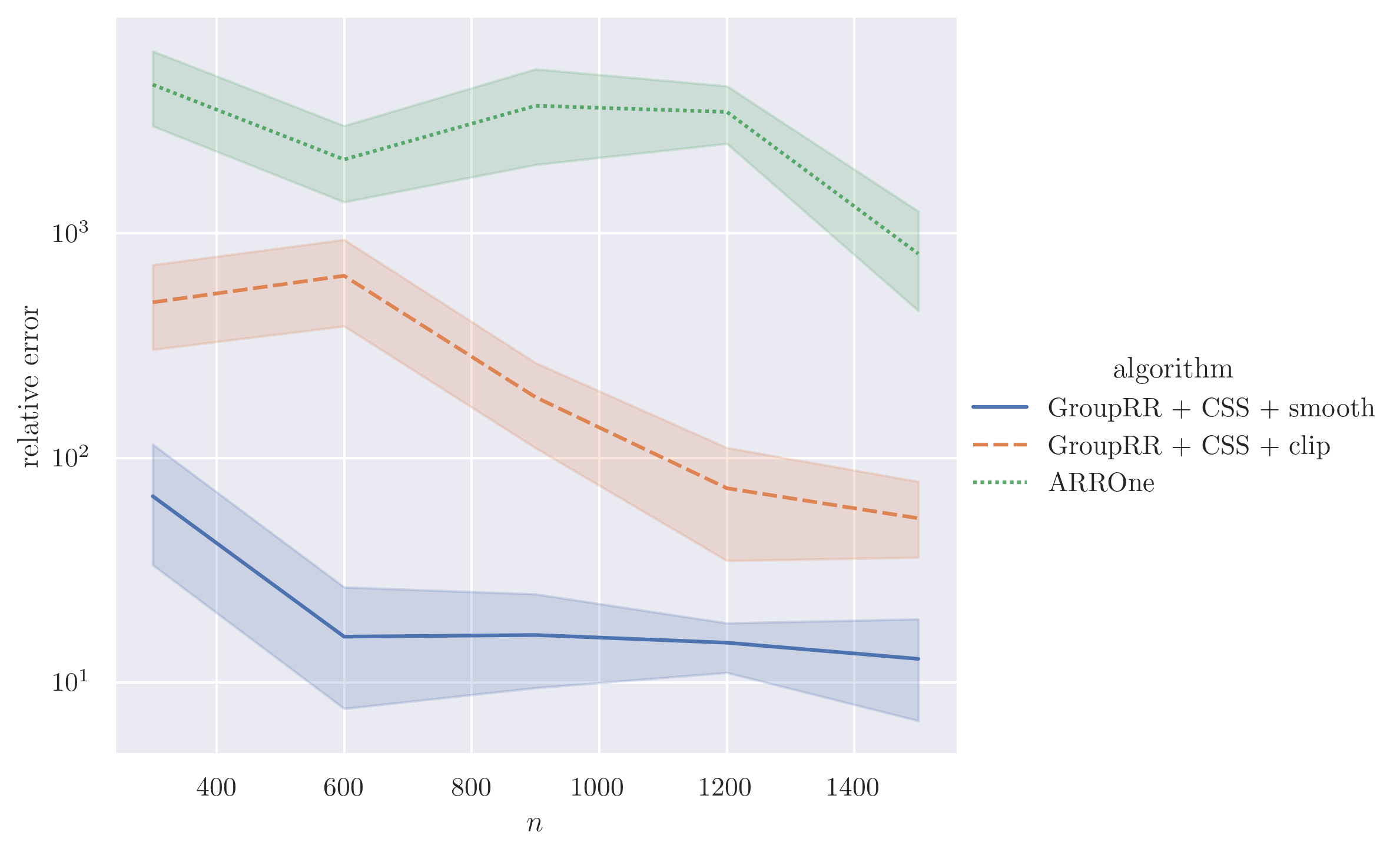}
    \caption{Comparative analysis of relative error between our algorithm and ARR across different graph sizes on synthetically generated power-law graphs, assuming a privacy budget of $\varepsilon = 1$. For this experiment, we set the parameter $s$ to $5$. The term `CSS' in the legend stands for our central server sampling method.}
    \label{fig:exp-power-law}
\end{figure}

Figure~\ref{fig:exp-power-law} shows the relative error of our algorithms as well as the state-of-the-art ARROne for various graph sizes. Each graph is randomly samples such that the degree distribution follows a power law of exponent 2. This means that the expected number of nodes with a degree $d$ is proportional to $d^{-2}$.

Consequently, the graph's maximum degree is approximately one-third of the number of nodes. Even in this regime---where the bounded-degree condition \(d_i < m \approx n/s\) required for our theoretical guarantees is violated---our algorithms still outperform the state-of-the-art \textsc{ARROne}.

\subsection{4-cycles Counting}
\label{sec:exp-cycles}

To validate the versatility of our framework, we extended our experiments to include counting cycles of length 4, or 4-cycles, utilizing the same framework employed in our triangle counting methodology. This method is described in \cite{eden2023triangle}, \cite{hillebrand_et_al:LIPIcs.STACS.2025.49} and \cite{suppakitpaisarn2025counting}. The number of 4-cycles that include edges $(i,i')$ and $(i,i'')$ can be estimated by $\sum_j \tilde{a}(i',j) \tilde{a}(j,i'')$ and the number of 4-cycle that the node $i$ participates in is given by $\sum_{i',i'':(i,i'),(i,i'')\in E}\sum_j \tilde{a}(i',j) \tilde{a}(j,i'')$.




 


Given the larger size of the subgraphs of interest, we opted to use smaller graphs for our experiments to effectively manage complexity and computational demands. To establish the robustness and applicability of our results, we conducted our 4-cycles counting experiments on two specific graphs. The first is the Twitter Interaction Network for the US Congress \citep{fink2023centrality}, which represents Twitter interactions of the 117th United States Congress, comprising 475 nodes and 13,289 edges. The second graph is the email-Eu-core network \citep{leskovec2007graph, yin2017local}, derived from email interactions among members of a major European research institution, containing 1005 nodes and 25,571 edges. These selections allow us to provide substantive evidence of the generality of our findings across different types of networks.

We assessed accuracy using the $\ell_2$-error computed as $\sqrt{\sum\limits_{t = 1}^{10} (\hat{x}_t - x)^2}$, where $x$ is the ground truth and $\hat{x}_t$ is the estimated value obtained in trial $1 \leq t \leq 10$.




\begin{figure}[h!]
    \centering
    \includegraphics[width=0.8\columnwidth]{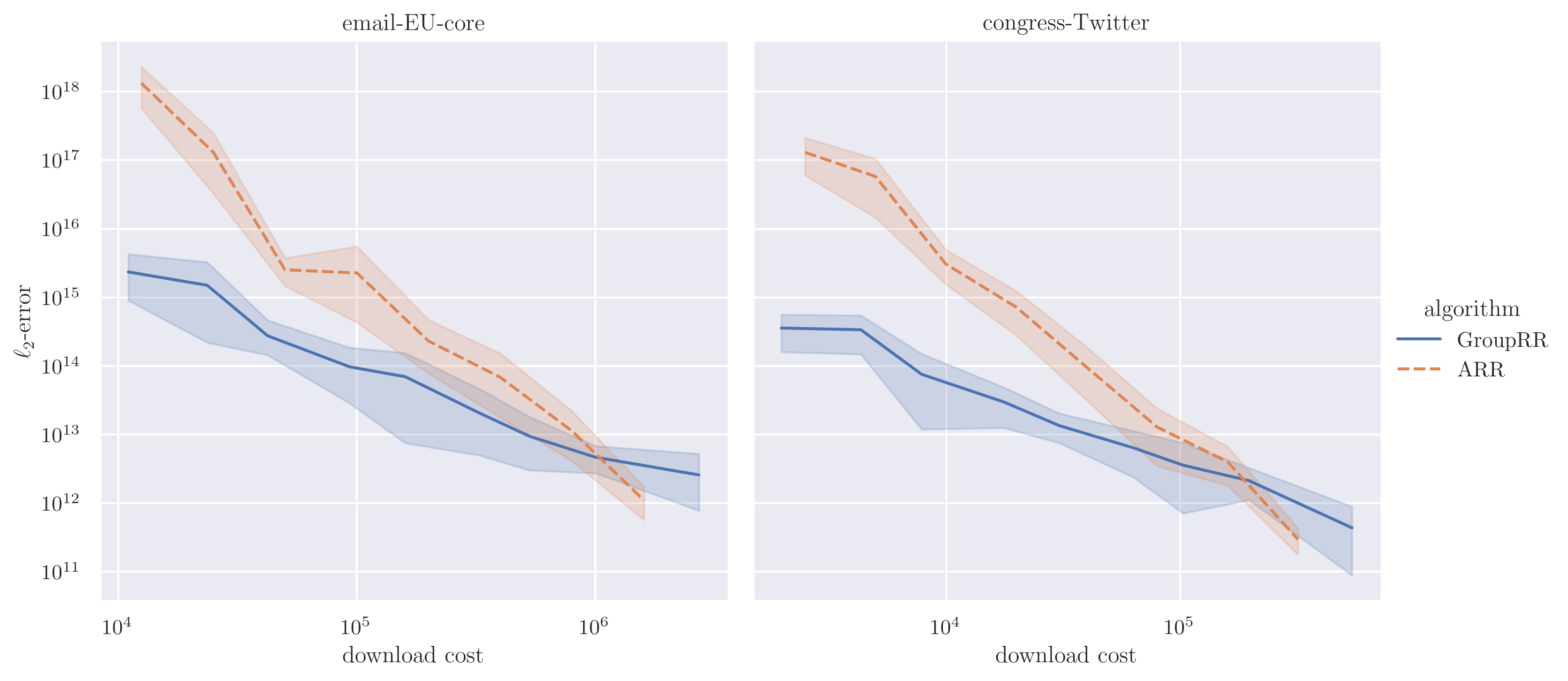}
    \caption{Comparison on $\ell_2$-error between our algorithm and ARR for various sampling factors on the Twitter Interaction Network for the US Congress and the email-Eu-core network with $\varepsilon = 1$.}
    \label{fig:exp5}
\end{figure}

The experimental outcomes are displayed in Figure~\ref{fig:exp5}.
It is evident that the error patterns are consistent across both datasets. With either method, regardless of the sampling factor used, the error is greater in the email-EU-core dataset. This aligns with expectations since the $\ell_2$-error is a property that tends to increase with the graph's size.

Nonetheless, the rate at which the discrepancy between the two methods grows is the same across both datasets. When the full dataset is used---i.e., without any communication reduction---the performance of ARR and GroupRR is comparable and aligns with that of the standard algorithm without communication constraints~\citep{imola2021locally}. However, as shown in the graphs, under a sampling factor of $100$, the error associated with ARR increases by approximately a factor of $10^3$ relative to GroupRR. According to Theorem~\ref{thm:variance}, the $\ell_2$-error in estimating a single edge scales linearly with the sampling factor $s$ for GroupRR, but quadratically (i.e., as $s^2$) for ARR. Since each $4$-cycle estimation involves two edge estimates, the overall $\ell_2$-error scales as $s^2$ for GroupRR and as $s^4$ for ARR. This implies a theoretical error gap of $s^2 = 10^4$ between the two methods. The observed reduction of this gap to $10^3$ is attributed to the fact that, in the absence of sampling, ARR achieves slightly better accuracy than GroupRR.

\section{Conclusion}
\label{sec:conclusion}

Our work introduces a private graph publishing mechanism that provides unbiased estimations of all graph edges while maintaining low communication costs. The effectiveness of this method stems from integrating two key components: the application of linear congruence hashing to achieve uniform edge partitions and the amplification of budget efficiency through sampling within each group. This synergy leads to a significant reduction in download costs, quantifiable as a factor of \(\mathcal{O}(s^3)\), where \(s\) represents the size of the groups.

We then demonstrated its utility in accurately estimating the count of triangles and 4-cycles. Furthermore, we elaborated on the application of smooth sensitivity to these problems, ensuring that the resulting estimations remain unbiased. Our experiments have demonstrated that these methods offer substantial improvements in precision over existing state-of-the-art approaches for the countings.

Our future work is focused on enhancing the scalability of subgraph counting under local differential privacy. We have observed that the computation time required for each user in all mechanisms proposed to date is considerable. This computation time often increases for mechanisms involving multiple steps. Our aim is to explore how sampling methods can not only improve precision but also significantly boost the scalability of the counting process.

\section{Acknowledgments}
\label{sec:acknowledgments}

Quentin Hillebrand is partially supported by KAKENHI Grant 20H05965, and by JST SPRING Grant Number JPMJSP2108. Vorapong Suppakitpaisarn is partially supported by KAKENHI Grant 21H05845 and 23H04377. Tetsuo Shibuya is partially supported by KAKENHI Grant 20H05967, 21H05052, and 23H03345. The authors wish to express their thanks to the anonymous reviewers whose valuable feedback greatly enhanced the quality of this paper.

\bibliography{references}
\bibliographystyle{tmlr}

\end{document}